\newcommand{\supp}{\operatorname{supp}}
\newcommand{\ad}{{\operatorname{ad}}}
\newcommand{\N}{{\mathbb{N}}}
\newcommand{\R}{{\mathbb{R}}}
\newcommand{\C}{{\mathbb{C}}}
\renewcommand{\S}{{\mathbb{S}}}
\DeclareFontFamily{U}{mathx}{\hyphenchar\font45}
\DeclareFontShape{U}{mathx}{m}{n}{
      <5> <6> <7> <8> <9> <10>
      <10.95> <12> <14.4> <17.28> <20.74> <24.88>
      mathx10
      }{}
\DeclareSymbolFont{mathx}{U}{mathx}{m}{n}
\DeclareMathAccent{\widecheck}{0}{mathx}{"71}
\renewcommand\i{\mathrm{i}}
\newcommand{\p}{{\mathrm p}}
\newcommand{\q}{{\quad}}
\renewcommand{\c}{{\mathrm c}}
\newcommand{\e}{{\mathrm e}}
\newcommand{\ess}{{\mathrm {ess}}}
\renewcommand{\d}{{\mathrm d}}
\newcommand{\pupo}{{\mathrm {pp}}}
\renewcommand{\Re}{\operatorname{Re}}
\renewcommand{\Im}{\operatorname{Im}}
\DeclarePairedDelimiter\inp\langle\rangle
\newcommand\paro[2][]{#1  ( #2#1 )}
\newcommand\parb[2][]{#1 \big ( #2#1\big )}
\newcommand\parbb[2][]{#1 \Big ( #2#1\Big )}
 \newcommand{\pp}{{\mathrm {pp}}}
\newcommand{\mand}{\text{ \,and\, }}
\DeclarePairedDelimiter\ket{\lvert}{\rangle}
\DeclarePairedDelimiter\bra{\langle}{\rvert}
\DeclareMathOperator*{\slim}{s-lim}
\DeclareMathOperator*{\wvHlim}{{w-\mathcal H}-lim}
\DeclareMathOperator*{\wvLlim}{{w}-{\mathit L^2_{1}}-{lim}}
\DeclareMathOperator*{\vLlim}{{\mathcal L(\mathcal H)}-lim}
\DeclareMathOperator*{\swslim}{s- w^\star-lim}
\DeclarePairedDelimiter\abs\lvert\rvert
\DeclarePairedDelimiter\norm\lVert\rVert
\DeclarePairedDelimiter\set{\{}{\}}
\newcommand{\brR}{{\breve R}}
\newcommand{\brh}{{\breve h}}
\newcommand{\brH}{{\breve H}}
\newcommand{\brT}{{\breve T}}
\newcommand{\brI}{{\breve I}}
\newcommand{\brf}{{\breve f}}
\newcommand{\brr}{{\breve r}}
\newcommand{\brg}{{{\breve \gamma}}}
\newcommand{\brp}{{\breve \psi}}
\newcommand{\bY}{{\mathbf Y}}
\newcommand{\bX}{{\mathbf X}}
\newcommand{\vA}{{\mathcal A}}
\newcommand{\vB}{{\mathcal B}}
\newcommand{\vE}{{\mathcal E}}
\newcommand{\vG}{{\mathcal G}}
\newcommand{\vH}{{\mathcal H}}
\newcommand{\vL}{{\mathcal L}}
\newcommand{\vO}{{\mathcal O}}
\newcommand{\vT}{{\mathcal T}}
\theoremstyle{plain}
\newtheorem{thm}{Theorem}[section]
\newtheorem{defn}[thm]{Definition} 
\newtheorem{proposition}[thm]{Proposition}
\newtheorem{lemma}[thm]{Lemma} 
\newtheorem{cond}[thm]{Condition}
\theoremstyle{definition}
\newtheorem{remarks}[thm]{Remarks}
\newtheorem*{remarks*}{Remarks}
\newtheorem*{remark*}{Remark}
\numberwithin{equation}{section}
\title {Green functions and completeness; \\the $3$-body problem
  revisited}
\thanks{
Supported by DFF grant nr.\ 8021-00084B}
\author{E. Skibsted} \address[E. Skibsted]{Institut for Matematik\\
Aarhus Universitet\\ Ny Munkegade 8000 Aarhus C, Denmark}
\email{skibsted@math.au.dk}
\begin{document}

\begin{abstract} Within the class of Derezi{\'n}ski-Enss  pair-potentials
   which includes  Coulomb potentials a stationary scattering theory
   for $N$-body systems was recently developed \cite {Sk1}. In
   particular the wave and
   scattering matrices as well as the restricted wave operators are
   all defined at any    non-threshold energy, and this holds  
   without imposing  any a priori decay condition on  channel eigenstates. In
   this paper we improve for the
   case of $3$-body systems  on the known \emph{weak continuity} properties  in that we show that all non-threshold
   energies are \emph{stationary complete} in this case, resolving a conjecture
   from \cite {Sk1} in the special case  $N=3$. A consequence is that the above
   scattering quantities depend \emph{strongly continuously} on the energy
   parameter  at all non-threshold energies, hence not only almost
   everywhere as previously demonstrated (for an arbitrary $N$). Another consequence is that the
   scattering matrix is unitary at any such  energy. As a side result we give an
   independent     stationary  proof of asymptotic completeness for $3$-body
   systems with long-range  pair-potentials. This is an alternative to the
   known time-dependent proofs \cite{De, En}. 
  \end{abstract}

\allowdisplaybreaks

\maketitle

\medskip
\noindent
Keywords: $3$-body Schr\"odinger operators; asymptotic completeness; stationary scattering
theory; scattering and wave matrices; minimum 
generalized eigenfunctions.

\medskip
\noindent
Mathematics Subject Classification 2010: 81Q10, 35A01, 35P05.
\tableofcontents

\section{Introduction}\label{sec:Introduction}

In this paper we address a recent conjecture for the stationary
scattering theory of $N$-body systems of quantum particles interacting with
long-range pair-potentials. Thus in the case $N=3$ we can show that
indeed \emph{all}
non-threshold energies are \emph{stationary complete}, resolving  the
problem  posed in  \cite{Sk1} to the affirmative for $N=3$.
The conjecture for particles interacting with short-range
pair-potentials is resolved in  \cite{Sk2}.

Although the bulk of the paper will concern a  more general class
of $3$-body Hamiltonians we will in this introduction confine ourselves to
discussing our results for the standard atomic $3$-body model. We
shall also confine ourselves to  formulations in terms of the atomic Dollard
modification \cite {Do} (see \cite [Remarks 2.2]{Sk1} for a discussion
on how to relate the Dollard
modification to the modification used in  the bulk of the paper). The
paper depends  on several  results of \cite{Sk1} and also on some from  the
more recent works \cite{Sk2,IS4}. Although we do give an account of the
most relevant parts of  \cite{Sk1}, the present paper contains proofs
for which the reader would  benefit from independent parallel
consultance of \cite{Sk1}.

\subsection{Atomic $3$-body model, results}\label{subsec: Atomic 3-body model}
Consider a system of  three  charged particles of dimension $n$
interacting by Coulomb forces. The corresponding 
Hamiltonian  reads 
\begin{equation*}
H=-\sum_{j = 1}^3 \frac{1}{2m_j}\Delta_{x_j} + \sum_{1 \le i<j \le 3} q_iq_j|x_i
-x_j|^{-1}, \quad x_j\in\R^n,\,n\geq 3,
\end{equation*}
where $x_j$,  $m_j$ and $q_j$ denote the position, mass and charge of
the $j$'th particle, respectively.

 The Hamiltonian $H$ is regarded as a self-adjoint operator
on $L^2(\bX)$, where $\bX$ is the $2n$ dimensional real vector
space  $ \{ x=(x_1,x_2,x_3)\mid\sum_{j=1}^{3} m_j x_j = 0\}$.
Let $\vA$  denote the set of all cluster
decompositions of the $3$-particle system. The notation $a_{\max}$ and
$a_{\min}$ refers to the $1$-cluster and $3$-cluster decompositions,
respectively.
  Let for $a\in\vA$ the notation  $\# a$ denote the number of
clusters in $a$.
For $i,j \in\{1, 2, 3\}$, $i< j$, we denote by $(ij) $ the
$2$-cluster decomposition given by letting $C=\{i,j\}$ form a
cluster and the third  particle $l\notin C$ form a singleton. We write $(ij) \leq a$ if $i$ and $j$ belong to the same cluster
in $a$.   More general, we write $b\leq  a$ if each cluster of $b$
is a subset of a cluster of $a$. If $a$ is a $k$-cluster decomposition, $a= (C_1, \dots, C_k)$,
we let
\begin{equation*}
\bX^a = \set[\big]{ x\in\bX\mid \sum_{l\in C_j } m_l x_l = 0,  j = 1, \dots,
k}=\bX^{C_1}\oplus\cdots \oplus\bX^{C_k},
\end{equation*}
and
\[
\bX_a  =  \set[\big]{ x\in\bX\mid  x_i = x_j \mbox{ if } i,j \in C_m  \mbox{ for some }
m \in \{ 1, \dots, k\}  }.
\]
 Note that $b\leq a\Leftrightarrow \bX^b\subseteq\bX^a$, and that the
 subspaces  $\bX^a$ and $\bX_a$  define  an orthogonal decomposition
 of  $\bX$
equipped  with
the quadratic form
$q(x)=\Sigma_j \,2m_j|x_j|^2,  \, x\in {\bX}$.
 Consequently any  $x\in \bX$ decomposes orthogonally as 
 $x =x^{a} + x_{a}$ with $x^a =\pi^a x\in\bX^a$ and $x_a =\pi_a x\in \bX_a$.

With these notations, the $3$-body Schr\"odinger operator
  takes the form $ H = H_0 + V$, 
where  $H_0=p^2$ is (minus)  the Laplace-Beltrami operator on   the
Euclidean space  $(\bX, q)$ and
$V=V(x) =  \sum_{b=(ij)\in\vA} V_{b}(x^{b}) $ with $ V_b (x^b) =
V_{ij} (x_i - x_j)$ for the  $2$-cluster decomposition
$b=(ij)$. Note for example  that 
\begin{equation*}
  x^{(12)}=\parb{\tfrac{m_2}{m_1+m_2}(x_1-x_2),-\tfrac{m_1}{m_1+m_2}(x_1-x_2),0}.
\end{equation*}

More generally for any cluster 
decomposition $a\in\vA$ we introduce a   Hamiltonian $H^a$ as follows. 
For $a=a_{\min}$  we define
$H^{a_{\min}}=0$ on $\mathcal H^{a_{\min}}:=\mathbb C. $
For $a\neq a_{\min}$ we introduce the potential 
\begin{equation*}
V^a(x^a)=\sum_{b=(ij)\leq a} V_{b}(x^b);\quad
x^a\in \bX^a.
\end{equation*} 
Then 
\begin{equation*}
 H^a=-
\Delta_{x^a} +V^a(x^a)=
(p^a)^2 +V^a\ \ 
\text{on }\mathcal H^a=L^2(\bX^a).
\end{equation*}

A channel $\alpha$ is by
definition given as $\alpha =(a,\lambda^\alpha, u^\alpha)$, where
$a\in\vA'=\vA\setminus \{a_{\max}\}$ and  $u^\alpha\in \mathcal H^a$ obeys
$\norm{u^\alpha}=1$ and 
$(H^a-\lambda^\alpha)u^\alpha=0$ for a real number
$\lambda^\alpha$, named a threshold. The set of thresholds is denoted
$\vT(H)$,  and including the eigenvalues of $H$ we introduce
 $\vT_{\p}(H)=\sigma_{\pp}(H)\cup
  \vT(H)$.  For any $a\in \vA'$ the intercluster potential is by definition
\begin{align*}
  I_a(x)=\sum_{b=(ij)\not\leq a}V_b(x^b).
\end{align*} Next we recall  the  atomic Dollard type channel wave operators
\begin{equation}\label{eq:Atomwave_op}
  W_{\alpha,{\rm atom}}^{\pm}=\slim_{t\to \pm\infty}\e^{\i
  tH}\parb{{u^\alpha}\otimes \e^{-\i
  (D_{a,{\rm atom}}^\pm (p_a,t)+\lambda^\alpha t)} (\cdot)},
\end{equation} where
\begin{equation*}
   D_{a,{\rm atom}}^\pm (\xi_a,\pm|t|)=\pm
  D_{a,{\rm atom}}(\pm\xi_a,|t|)\mand D_{a,{\rm atom}}(\xi_a,t)=t\xi_a^2+\int_1^t\,I_a(2s\xi_a)\,\d s.
  \end{equation*} 
 Let $I^\alpha=(\lambda^\alpha,\infty)$ and
$k_\alpha=p_a^2+\lambda^\alpha$. By the intertwining property $H W_{\alpha,{\rm atom}}^{\pm}\supseteq
W_{\alpha,{\rm atom}}^{\pm}k_\alpha $  and the fact that $k_\alpha$ is diagonalized by
the unitary map  $F_\alpha:L^2(\mathbf X_a)\to L^2(I^\alpha
;\vG_a)$, $\vG_a=L^2(\mathbf{S}_a)$,  $\mathbf{S}_a=\mathbf X_a\cap\S^{d_a-1}$ with $d_a=\dim
\mathbf X_a$,  given by
\begin{align*}
  (F_\alpha \varphi)(\lambda,\omega)=(2\pi)^{-d_a/2}2^{-1/2}
  \lambda_\alpha^{(d_a-2)/4}\int \e^{-\i  \lambda^{1/2}_\alpha \omega\cdot
  x_a}\varphi(x_a)\,\d x_a,\quad \lambda_\alpha=\lambda-\lambda^\alpha,
\end{align*} we can for {any} two given  channels $\alpha$ and
$\beta$  write 
\begin{equation*}
  \hat
  S_{\beta\alpha,{\rm atom}}:=F_\beta(W_{\beta,{\rm atom}}^+)^*W_{\alpha,{\rm atom}}^-F_\alpha^{-1}=\int^\oplus_{
  I_{\beta\alpha} }
  S_{\beta\alpha,{\rm atom}}(\lambda)\,\d \lambda,\quad
  I_{\beta\alpha}=I^\beta\cap I^\alpha.
\end{equation*} The fiber operator $ S_{\beta\alpha,{\rm atom}}(\lambda)\in
\vL(\vG_a,\vG_b)$ is from an abstract point of view 
a priori defined only for
a.e. $\lambda \in I_{\beta\alpha}$. It is the
{$\beta\alpha$-entry of the  atomic  Dollard type scattering matrix}
$S_{{\rm atom}}(\lambda)=\parb{S_{\beta\alpha,{\rm
      atom}}(\lambda)}_{\beta\alpha}$ (here the  `dimension' of the
`matrix' $S_{{\rm atom}}(\lambda)$ is $\lambda$-independent on
any interval not containing thresholds).

Introducing the standard notation for weighted spaces 
\begin{align*}
  L_s^2(\mathbf X)=\inp{x}^{-s}L^2(\mathbf X),
\quad s\in\R, \quad\inp{x}=\parb{1+\abs{x}^2}^{1/2},
\end{align*} we recall the following result (here stated for the
atomic $3$-body problem only).

\begin{thm}[\cite {Sk1}]\label{thm:chann-wave-matrD}
  \begin{enumerate}[1)]
  \item 
Let $\alpha$ be a  given channel  $\alpha
    =(a,\lambda^\alpha, u^\alpha)$,     $f:I^\alpha \to \C$ be
  continuous and compactly supported away from  $\vT_{\p}(H)$, and let $s>1/2$. For any $\varphi\in
  L^2(\mathbf  X_a)$   
  \begin{align}\label{eq:wavD1}
  W^\pm_{\alpha,{\rm atom}}
  f\paro{ k_\alpha}\varphi=\int_{I^\alpha \setminus \vT_\p(H)} \,\,f(\lambda)
    W^\pm_{\alpha,{\rm atom}}(\lambda) \paro{F_\alpha \varphi)(\lambda,
  \cdot} 
    \,\d \lambda\in  L_{-s}^2(\bX), 
              \end{align} where the `wave matrices'
              $W^\pm_{\alpha,{\rm
                  atom}}(\lambda)\in\vL\parb{\vG_a,L_{-s}^2(\bX)}$
              with a strongly continuous dependence on $\lambda$.
 In particular for  $\varphi\in
  L_s^2(\mathbf  X_b)$   
the integrand is a continuous compactly supported 
$L_{-s}^2(\bX)$-valued function. In general the integral
has  the weak interpretation of an integral of a measurable 
$L_{-s}^2(\bX)$-valued function.
\item The operator-valued function
  $I_{\beta\alpha}\setminus \vT_{\p}(H)\ni \lambda \to S_{\beta\alpha,{\rm atom}}(\cdot)$ is weakly continuous.  
\end{enumerate}
\end{thm}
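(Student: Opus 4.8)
\emph{Proof proposal.} The plan is to convert the time-dependent definition \eqref{eq:Atomwave_op} into a \emph{stationary} formula in which the energy $\lambda$ enters only through three objects: the boundary values of the resolvent $R(z)=(H-z)^{-1}$, an explicit Dollard-modified \emph{free channel wave matrix} $\Gamma^\pm_\alpha(\lambda)$, and an \emph{effective perturbation} $\Theta^\pm_\alpha(\lambda)$; the two assertions then follow by reading off the continuity of these objects. To obtain the formula, put $J^\pm_\alpha(t)\varphi=u^\alpha\otimes\bigl(\e^{-\i(D^\pm_{a,\mathrm{atom}}(p_a,t)+\lambda^\alpha t)}\varphi\bigr)$, so that $W^\pm_{\alpha,\mathrm{atom}}=\slim_{t\to\pm\infty}\e^{\i tH}J^\pm_\alpha(t)$. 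Differentiating in $t$ and using $H^au^\alpha=\lambda^\alpha u^\alpha$, $H=p^2+V^a+I_a$, and the fact that the Dollard phase is a Fourier multiplier in $p_a$, one finds $\tfrac{\d}{\d t}\bigl(\e^{\i tH}J^\pm_\alpha(t)\bigr)=\i\,\e^{\i tH}\bigl(I_a(x)-I_a(2tp_a)\bigr)J^\pm_\alpha(t)$, the last factor acting on the $L^2(\bX_a)$-component by functional calculus in $p_a$. Integrating from $0$ to $\pm\infty$, Abel-regularizing the resulting time integral, and inserting the spectral resolution of $k_\alpha=p_a^2+\lambda^\alpha$ via $F_\alpha$, one arrives at a representation of the schematic form
\begin{align}\label{eq:statform}
  W^\pm_{\alpha,\mathrm{atom}}(\lambda)=\Gamma^\pm_\alpha(\lambda)-R(\lambda\pm\i 0)\,\Theta^\pm_\alpha(\lambda),
\end{align}
where $\Gamma^\pm_\alpha(\lambda)\in\vL\bigl(L^2(C_a),L_{-s}^2(\bX)\bigr)$ is $u^\alpha$ tensored with the energy-$\lambda$ generalized-eigenfunction fiber of $F_\alpha$, Dollard-corrected, and $\Theta^\pm_\alpha(\lambda)\in\vL\bigl(L^2(C_a),L_s^2(\bX)\bigr)$ is the energy-$\lambda$ fiber of $\bigl(I_a(x)-I_a(2tp_a)\bigr)J^\pm_\alpha(t)$.

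Two inputs then carry the argument. First, the \emph{limiting absorption principle} for $H$: for $s>1/2$ the boundary values $R(\lambda\pm\i 0)$ exist in $\vL\bigl(L_s^2(\bX),L_{-s}^2(\bX)\bigr)$, are locally uniformly bounded, and depend strongly continuously on $\lambda\in\R\setminus\vT_\p(H)$; for $N=3$ this rests on the Mourre estimate with the dilation generator as conjugate operator together with the microlocal resolvent and radiation-condition bounds, and is available from \cite{Sk3} and \cite{IS4}. Second, the mapping and continuity of the free objects: $\lambda\mapsto\Gamma^\pm_\alpha(\lambda)$ and $\lambda\mapsto\Theta^\pm_\alpha(\lambda)$ are strongly continuous and locally bounded with values in $\vL\bigl(L^2(C_a),L_{-s}^2(\bX)\bigr)$ and $\vL\bigl(L^2(C_a),L_s^2(\bX)\bigr)$, respectively. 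Boundedness of $\Gamma^\pm_\alpha(\lambda)$ into $L_{-s}^2$ is where $s>1/2$ enters first, the energy-shell fiber decaying only like $|x_a|^{-(n_a-1)/2}$; boundedness of $\Theta^\pm_\alpha(\lambda)$ into $L_s^2$ is the delicate point, resting on the Dollard cancellation — $I_a(x)-I_a(2tp_a)$ being small on the region $x\approx 2tp_a$ that carries the propagated channel states — together with decay of $u^\alpha$ in the internal variable $x^a$. For $N=3$ the latter is automatic: $u^\alpha$ is an eigenfunction of the $2$-body operator $H^a=(p^a)^2+V^a(x^a)$ at the negative eigenvalue $\lambda^\alpha$, so $\langle x^a\rangle^\sigma u^\alpha\in L^2(\bX^a)$ for all $\sigma$ by Agmon-type decay; in the general-$N$ treatment of \cite{Sk3} this is replaced by an argument not presupposing any eigenstate decay.

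Granting these, assertion 1) is immediate: substituting $f(k_\alpha)\varphi=\int_{I^\alpha}f(\lambda)\,(F_\alpha\varphi)(\lambda,\cdot)\,\d\lambda$ together with \eqref{eq:statform} into $W^\pm_{\alpha,\mathrm{atom}}f(k_\alpha)\varphi$ yields \eqref{eq:wavD1} with the asserted wave matrices $W^\pm_{\alpha,\mathrm{atom}}(\lambda)$, strongly continuous and locally bounded in $\vL\bigl(L^2(C_a),L_{-s}^2(\bX)\bigr)$. If $\varphi\in L_s^2(\bX_b)$, the trace lemma — boundedness of the restriction of the Fourier transform to spheres, valid precisely for $s>1/2$ — makes $\lambda\mapsto(F_\alpha\varphi)(\lambda,\cdot)\in L^2(C_a)$ continuous and compactly supported in $I^\alpha\setminus\vT_\p(H)$, so the integrand in \eqref{eq:wavD1} is a continuous, compactly supported $L_{-s}^2(\bX)$-valued function; for general $\varphi\in L^2(\bX_a)$ the fiber map is merely measurable, whence the integrand is a measurable $L_{-s}^2(\bX)$-valued function and the integral has the stated weak meaning. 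For assertion 2), feeding \eqref{eq:statform} (both signs, channels $\alpha$ and $\beta$) into $\hat S_{\beta\alpha,\mathrm{atom}}=F_\beta(W^+_{\beta,\mathrm{atom}})^*W^-_{\alpha,\mathrm{atom}}F_\alpha^{-1}$ and computing the on-shell composition — the products being made meaningful by the limiting absorption principle — produces a stationary representation of $S_{\beta\alpha,\mathrm{atom}}(\lambda)$ as a finite sum of terms, each a composition of factors drawn from $\Gamma^\pm_\bullet(\lambda)$, $\Gamma^\pm_\bullet(\lambda)^*$, $\Theta^\pm_\bullet(\lambda)$, $\Theta^\pm_\bullet(\lambda)^*$ and $R(\lambda\pm\i 0)$. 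By the two inputs above each such factor is strongly continuous and locally bounded on the relevant weighted space, with the single exception of the adjoint free wave matrices $\Gamma^\pm_\bullet(\lambda)^*$, which are only weakly continuous; since a composition of one weakly continuous, locally bounded family with strongly continuous, locally bounded families is weakly continuous, $\lambda\mapsto S_{\beta\alpha,\mathrm{atom}}(\lambda)$ is weakly continuous on $I_{\beta\alpha}\setminus\vT_\p(H)$.

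The main obstacle will be the second input at the borderline weight $s>1/2$: showing that the effective perturbation $\Theta^\pm_\alpha(\lambda)$ genuinely takes values in $L_s^2(\bX)$ with strongly continuous $\lambda$-dependence requires a sharp microlocal/propagation analysis of the Dollard-modified channel states near the energy shell (radiation-condition estimates), made subtle by the insistence that no a priori decay of $u^\alpha$ be invoked. The limiting absorption principle is the other technical pillar, but for $3$-body systems it is by now classical. Finally, strengthening the weak continuity in 2) to strong continuity does not follow from this scheme — precisely because of the adjoints $\Gamma^\pm_\bullet(\lambda)^*$ — and is the improvement that the present paper supplies via the notion of stationary completeness.
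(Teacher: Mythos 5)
Your overall architecture --- a stationary representation of the form ``free quasi-mode plus resolvent boundary value applied to an effective perturbation'', with continuity read off from the limiting absorption principle --- is indeed the skeleton of the construction this paper recalls from \cite{Sk3} (compare \eqref{eq:AppExab}, where $\Gamma^\pm_{\alpha,k}(\lambda)^*$ is exactly $\tfrac 1{4\lambda_\alpha}f_1(H)\parb{\Phi^\pm_{\alpha,k}+\i R(\lambda\mp\i 0)T^\pm_{\alpha,k}}J_\alpha\brg_a^\pm(\lambda_\alpha)^*1_{C_{a,k}}$). But two of your load-bearing claims fail. First, the assertion that the effective perturbation $\Theta^\pm_\alpha(\lambda)$, built from the bare Dollard cancellation $I_a(x)-I_a(2tp_a)$, takes values in $\vL\parb{L^2(C_a),L_s^2(\bX)}$ for some $s>1/2$ is not true in the regime covered by the theorem. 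The intercluster potential $I_a$ depends on the full variable $x$, not only on $x_a$, and for pair-potentials decaying like $\abs{x^b}^{-\mu}$ with $\mu$ down to $\sqrt 3-1$ (and even for Coulomb) the residual coupling after Dollard modification is not a bounded map into a weighted space with weight exceeding $1/2$. This is precisely why \cite{Sk3} does not use the naive Cook-method remainder: the operator $T^\pm_\alpha$ there is manufactured from Yafaev and Graf vector-field propagation observables ($M_a$, $N^a_\pm$, the cutoffs in $B$ and $B^a_\delta$) and is controlled by a series of Kato-smoothness bounds, not by weighted-$L^2$ boundedness; correspondingly the ``free part'' is $\Phi^\pm_\alpha J_\alpha\brg_a^\pm(\lambda_\alpha)^*$ rather than the bare Dollard quasi-mode. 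Your own closing paragraph identifies this as ``the main obstacle'', but the obstacle is not a technical refinement of your scheme --- it is the point at which the scheme has to be replaced.

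Second, your claim that decay of the channel eigenstate $u^\alpha$ is ``automatic'' for $N=3$ by Agmon-type estimates, because $\lambda^\alpha$ is a negative eigenvalue of the two-body operator $H^a$, is wrong in the generality of the theorem. Thresholds $\lambda^\alpha$ include zero-energy eigenvalues of the two-body subsystems (and eigenvalues accumulating at $0$ from below), and for a zero-energy eigenfunction with $\mu<2$ no Agmon decay is available; the condition \eqref{eq:decay} ($u^\alpha\in L^2_s(\bX^a)$ for some $s>\tfrac52-\mu$) is then a genuine restriction. The whole point of Theorem \ref{thm:chann-wave-matrD} as stated --- emphasized in the abstract and in the comparison with Isozaki's work in Subsection \ref{subsec: Extensions and comparison with the literature} --- is that \emph{no} a priori decay of $u^\alpha$ is imposed. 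Your route, by tensoring $u^\alpha$ into both $\Gamma^\pm_\alpha(\lambda)$ and $\Theta^\pm_\alpha(\lambda)$ and invoking its decay in $x^a$, silently reintroduces exactly the implicit condition the theorem is designed to eliminate; the construction in Subsection \ref{subsec: Generalization of Theorem} avoids it by inserting the spectral cutoffs $f_1(\brh_\alpha)f_1(\brH_a)$ and the internal localizations $A_2^a$, $A_3^a$ in place of any pointwise information on $u^\alpha$.
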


Using  the notation $ L^2_\infty(\mathbf X)=\cap_s L^2_s(\mathbf X)$ the
delta-function of $H$ at $\lambda$ is given by 
\begin{equation*}
   \delta(H-\lambda) =\pi^{-1}\Im
{(H-\lambda-\i
0)^{-1}}\text{ as a quadratic form on } L^2_\infty(\mathbf X).
\end{equation*} The adjoint operators $\Gamma^\pm_{\alpha,{\rm atom}}(\lambda)=W^\pm_{\alpha,{\rm atom}}(\lambda)^*$ are
referred to as `restricted channel wave operators'.
\begin{defn}\label{defn:scatEnergy0}  
An  energy $\lambda
  \in \vE:=(\min \vT(H),\infty)\setminus\vT_\p(H)$  is  {stationary
    complete} for $H$ if  
\begin{equation}\label{eq:ScatEnergy22233}
  \forall \psi\in  L^2_\infty(\mathbf X):\,\, \sum_{\lambda^\beta<
  \lambda}\,\norm{\Gamma^\pm_{\beta,{\rm atom}}(\lambda) \psi}^2= 
  \inp{\psi,\delta(H-\lambda){\psi}}.
\end{equation} 
  \end{defn}

The main result of the  present paper for the atomic $3$-body
Hamiltonian (obtained independently of the asymptotic completeness
property known from  
 \cite{De,En}) reads as follows.
\begin{thm}\label{thm:Parsconcl-gener0} All $\lambda
  \in \vE$ are stationary complete for the  $3$-body
Hamiltonian $H$.
  \end{thm}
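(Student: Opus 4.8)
The plan is to fix $\lambda\in\vE$ and a sign, say $+$ (the case $-$ is analogous by time reversal), and to prove the inequality opposite to the one of \cite{Sk3}, namely
\[
  \inp{\psi,\delta(H-\lambda)\psi}\le \sum_{\lambda^\beta<\lambda}\norm{W^+_{\beta,{\rm atom}}(\lambda)^*\psi}^2,\qquad \psi\in L^2_\infty(\mathbf X);
\]
the reverse bound holds for every $\lambda\in\vE$ by \cite{Sk3}, and \eqref{eq:ScatEnergy22233} follows from the two. Put $u:=(H-\lambda-\i 0)^{-1}\psi\in L^2_{-s}(\mathbf X)$, $s>1/2$, so that $\pi\inp{\psi,\delta(H-\lambda)\psi}=\Im\inp{\psi,u}$; the inequality then asserts that the total outgoing energy flux carried by $u$ at spatial infinity is exhausted by the channel components $W^+_{\beta,{\rm atom}}(\lambda)^*\psi$. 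Equivalently, the channel Poisson operator $(\xi_\beta)_\beta\mapsto\sum_{\lambda^\beta<\lambda}W^+_{\beta,{\rm atom}}(\lambda)\xi_\beta$ maps $\bigoplus_{\lambda^\beta<\lambda}L^2(C_b)$ \emph{onto} the space of minimal generalized eigenfunctions of $H$ at $\lambda$; only this surjectivity is new, the accompanying norm identity being a bookkeeping of fluxes.

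I would reduce surjectivity to a Rellich-type uniqueness statement. The limiting absorption principle and the Besov-space radiation-condition bounds of \cite{Sk3,IS4} show that $u$ obeys an outgoing (Sommerfeld) radiation condition and, crucially, that its restriction to each asymptotic region of the $3$-body configuration space has a computable leading term: in the region where a pair $b=(ij)$ is bound and the third particle recedes, $u$ is asymptotic to an explicit channel-$b$ outgoing spherical wave whose amplitude on $C_b$ is a fixed multiple of $W^+_{\beta,{\rm atom}}(\lambda)^*\psi$ (with $\beta=(b,\lambda^\beta,u^\beta)$), and similarly in the free region $b=a_{\min}$. Subtracting $\sum_{\lambda^\beta<\lambda}W^+_{\beta,{\rm atom}}(\lambda)\bigl(W^+_{\beta,{\rm atom}}(\lambda)^*\psi\bigr)$ from $u$ yields a generalized eigenfunction $v$, $(H-\lambda)v=0$, still satisfying the outgoing radiation condition but with vanishing leading asymptotics in \emph{every} channel region; the desired inequality is precisely the assertion that such a $v$ carries no flux, and it suffices to prove $v=0$.

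The heart of the argument — and the only place where the restriction $N=3$ is genuinely used — is this Rellich-type theorem: \emph{a generalized eigenfunction at a non-threshold energy whose leading asymptotics vanishes in all channel regions vanishes identically}. For $N=3$ there are no intermediate cluster decompositions, so near infinity $\bX$ is covered by a single free cone and three two-cluster cones, with nothing in between; feeding the vanishing of the channel data back into the elliptic equation and using the Mourre and propagation estimates of \cite{Sk3} localized to each cone — together with the spectral decomposition of the two-body sub-Hamiltonian $H^b$ on $\bX^b$ — one bootstraps $v$ into $L^2_s(\mathbf X)$ for all $s$, i.e.\ super-polynomial decay, whence $v=0$ by the standard Rellich argument, since the non-threshold energy $\lambda$ (i.e.\ $\lambda\notin\vT_{\p}(H)=\sigma_{\pp}(H)\cup\vT(H)$) carries no $L^2$-eigenfunction. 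I expect this to be the main obstacle: it is in effect a purely stationary rendering of $3$-body asymptotic completeness, and the long-range/Coulomb nature of the pair potentials enters only through the Dollard (or Isozaki--Kitada) modifiers carried along in the radiation conditions, handled as in \cite{Sk3}.

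With surjectivity in hand, \eqref{eq:ScatEnergy22233} is completed by evaluating $\Im\inp{\psi,u}$ via Green's formula as a limit of boundary integrals over large spheres and splitting that integral along the free and two-cluster cones: the cross terms between distinct cones vanish in the limit by the same localization, the $b$-cone contribution equals $\norm{W^+_{\beta,{\rm atom}}(\lambda)^*\psi}^2$ by the asymptotics above, and summation gives the identity; the sign $-$ is identical. Finally, once \eqref{eq:ScatEnergy22233} holds at every $\lambda\in\vE$, continuity in $\lambda$ of $\lambda\mapsto\inp{\psi,\delta(H-\lambda)\psi}$ (from \cite{IS4}) and of the wave matrices (Theorem \ref{thm:chann-wave-matrD}) upgrade the weak continuity of $S_{\rm atom}(\lambda)$ to strong continuity and give its unitarity, while integration of \eqref{eq:ScatEnergy22233} over $\lambda$ recovers asymptotic completeness.
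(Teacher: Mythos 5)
Your overall strategy --- reduce stationary completeness to showing that $u=R(\lambda+\i 0)\psi$ has, modulo $\vB^*_0$, an explicit outgoing channel expansion in each asymptotic region --- is exactly the reduction the paper makes (Proposition \ref{prop:equivalent}, parts \ref{item:res1} and \ref{item:res12}). But you then declare this expansion to be a known consequence of ``the limiting absorption principle and the Besov-space radiation-condition bounds,'' and that is where the gap is: this expansion \emph{is} the theorem. Radiation-condition bounds give outgoing microlocal information; they do not by themselves separate, in the region where $x^b$ stays small relative to $x$, the contribution of the point spectrum of $H^b$ (which produces the channel spherical waves) from that of its continuous spectrum (which must be shown to contribute only a $\vB^*_0$ error). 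The paper's entire Section \ref{sec:Stationary completeness for the 3} is devoted to precisely this: a partition of unity adapted to the $3$-body collision planes \eqref{eq:phasDEC}, a resolvent identity \eqref{eq:res1} transferring $\Psi_a\phi$ to the decoupled Hamiltonian $\brH_a=H^a\otimes I+I\otimes\brh_a$ (whose justification requires the mixed weak/strong estimates of Appendix \ref{sec:Resolvent estimates}, since $\breve\psi_a\notin\vB$ in general), an insertion of $\chi_-(mH^a)$ together with minimal/maximal velocity bounds to kill the continuous-spectrum part of $H^a$, the one-body stationary completeness of $\brh_a$ and of $\check H_0$ from \cite{IS4} as external input, and a Borel-calculus/dominated-convergence limit $m\to\infty$. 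None of this is subsumed by citing the limiting absorption principle.

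Second, the Rellich-type reduction is both misstated and circular. The difference $v=u-\sum_\beta W^+_{\beta,{\rm atom}}(\lambda)\bigl(W^+_{\beta,{\rm atom}}(\lambda)^*\psi\bigr)$ is not a generalized eigenfunction: $(H-\lambda)v=\psi\neq0$. More importantly, the uniqueness statement you invoke --- a solution whose \emph{leading channel asymptotics} vanishes in every region must vanish --- is not available independently of completeness: the only uniqueness theorem at hand is the microlocal one of \cite[Corollary 1.10]{AIIS}, which requires the full incoming part $\chi_-(-B/\epsilon)v$ to lie in $\vB^*_0$, and upgrading ``vanishing channel leading terms'' to that is essentially the surjectivity of the wave matrices onto $\vE_\lambda$, which the paper obtains only \emph{after} completeness (Theorem \ref{prop:besov-space-setting}). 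Conversely, if you really had the channel expansion of $u$ with the amplitudes identified as $W^+_{\beta,{\rm atom}}(\lambda)^*\psi$, you would be done immediately by Proposition \ref{prop:equivalent} and would need no Rellich theorem at all. As written, the proposal either assumes the conclusion or relies on a uniqueness statement it cannot justify; the actual work --- the decoupling near each collision plane and the free-region analysis via \cite{IS4} --- is missing.
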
 Although we shall not elaborate  we remark  that it  is
  here possibly essentially  to replace
$\vT_\p(H)$  by $\vT(H)$ (see a  discussion in Subsection \ref{subsec:Conclusion and generalizations}).

Note that while
\eqref{eq:wavD1} may be taken as a definition (although implicit) of the wave matrices,
\eqref{eq:ScatEnergy22233} is a non-trivial derived property.  It is known
from  \cite{Sk1} that Lebesgue almost all non-threshold energies  are  stationary
complete for the  atomic $N$-body
Hamiltonian. However it is
an  open problem to show stationary completeness at fixed energy  for
the atomic $N$-body  model in
the case  
$N\geq 4$.

One can regard \eqref{eq:ScatEnergy22233} as an `on-shell  Parseval
formula'. By integration it implies asymptotic completeness, hence
providing an alternative to the proofs of  \cite{De, En}. Note that
the existence of the channel wave operators and 
Theorem \ref{thm:chann-wave-matrD} can be shown independently 
of time-dependent methods (see  Remarks \ref{remark:R}). Moreover 
there are  immediate consequences  for the discussed scattering
quantities considered as operator-valued functions on $\vE$ (recalled
in Subsection \ref{subsec: -body effective potential and a $1$-body radial limit}):
\begin{enumerate}[I)]
\item \label{item:stroS} The scattering matrix $S_{\rm
      atom}(\cdot)$ is a  strongly continuous unitary operator
    determined uniquely  by asymptotics of minimum
  generalized eigenfunctions (at any given  energy). The   latter are
  taken from  the ranges
  of the wave matrices (at this energy). 
\item The  restricted channel  wave operators $\Gamma^\pm_{\alpha,{\rm atom}}(\lambda)$ are strongly continuous.
\item  The scattering matrix links the incoming and outgoing wave 
  matrices,
\begin{equation*}
     W^-_{\alpha,{\rm
      atom}}(\lambda)=\sum_{\lambda^\beta<\lambda} W^+_{\beta,{\rm
      atom}}(\lambda)S_{\beta\alpha,{\rm
      atom}}(\lambda);\quad \lambda \in \vE,\, \lambda^\alpha <\lambda.
  \end{equation*}
\end{enumerate}

Our proof of \eqref{eq:ScatEnergy22233} relies on a characterization
of this property from  \cite {Sk1} (recalled in
\eqref{eq:asres29}).  In particular we derive the top-order asymptotics of any vector
on the form $(H-\lambda-\i
0)^{-1}\psi$,  $\psi\in  L^2_\infty(\mathbf X)$, yielding
\eqref{eq:ScatEnergy22233}. 
 Such an  asymptotics  is also
appearing  in the $N$-body setting of \cite{Sk1}, however there only
proven away from a Lebesgue null-set of energies.

\subsection{Extensions, comparison with the literature and  discussion}\label{subsec: Extensions and comparison with the literature}
In the above exposition we have used  Coulombic pair-potentials
motivated by physics. However the results are proven for a more
general class of pair-potentials $V_a=V_a(x^a)$. We need the decay
$V_a=\vO(\abs{x^a}^{-\mu})$  with $\mu>\sqrt 3 -1$   and similar decay
conditions for higher derivatives (assuming here smoothness outside a
compact set), which includes the
Coulomb potential. The critical exponent  $\sqrt 3 -1$  agrees with the
one of \cite{De, En}.

There is a fairly big literature on  stationary scattering theory
for $3$-body systems both on the mathematical side  and the physics side. This
is to a large extent  based on the Faddeev method or some modification of
that, see for example \cite{GM,Ne,Me}. The Faddeev method, as for
example used in the mathematically rigorous paper \cite{GM}, requires
fall-off like $V_a=\vO(\abs{x^a}^{-2-\epsilon})$. Moreover  there are
additional complications in that the threshold zero needs be be
regular for the two-body systems (i.e.  zero-energy eigenvalues
and  resonances are excluded) and  `spurious poles' cannot be ruled 
out  (these poles
would arise from lack of solvability of a certain Lippmann--Schwinger
type equation). On the physics side the $3$-body problem with Coulombic pair-potentials
 has attracted much attention, see for example \cite{Me} and the works cited
there. The picture seems to be the same,  modified Faddeev type
methods involve implicit conditions at zero energy for the two-body
systems and spurious poles cannot be excluded.

The work on the $3$-body stationary scattering theory  
\cite{Is3} (see also its partial precursor
\cite{Is2} or the recent book \cite{Is5})  is different. In fact Isozaki does not assume any
regularity at zero energy for the two-body
systems  and his theory does not have spurious
poles. He overcomes these deficencies by avoiding the otherwise
prevailing Faddeev method. On the other hand \cite{Is3} still
needs, in some comparison argument, a very detailed  information on the
spectral theory of the (two-body) sub-Hamiltonians at zero energy, and
this requires the  fall-off condition 
$V_a=\vO(\abs{x^a}^{-5-\epsilon})$ (as well as a restriction on the
particle dimension). The  present paper as well as our previous paper
for the short-range case \cite{Sk2} do not
involve such detailed  information. In fact in  comparison with
Isozaki's papers    Borel
calculus arguments  suffice. 

Otherwise the overall spirit of the
present paper, \cite{Sk2} 
and \cite{Is3} is  the same, in particular the use of resolvent
equations are  kept at a minimum (solvability of Lippmann--Schwinger
type equations is not an issue) and these  works employ intensively
 Besov spaces not only in proofs but also in the  formulation of
 various results.
In conclusion we  recover all of Isozaki's results 
by a different method that works down  to the critical exponent $\sqrt
3 -1$. In particular our theory  covers pair-potentials with  Coulombic
asymptotics without using any implicit condition, as elaborated on in 
 Subsection \ref{subsec: Atomic 3-body model}. In comparison with 
 \cite{Sk2} that got the improvements  down to the critical exponent $1$
 (in fact for all $N\geq 3$),  the long-range case is considerably
 more complicated. It is still an open problem to show
 stationary  completeness at any $ \lambda
  \in \vE$ for $N\geq 4$ in this case.

We remark that the strong continuity assertion for the scattering
matrix, cf. 
\ref{item:stroS}, cannot in general be replaced by norm continuity, see
\cite[Subsection 7.6]{Ya1} for a counterexample  with a short-range
potential. In this sense the stated regularity of the scattering
matrix is optimal. 

We also remark that from a physics point of view well-definedness and continuity of basic
scattering quantities is only one out of several relevant problems, for
example like the structure of the singularities of the kernel of the scattering
matrix (possibly considered as function of the  energy parameter), see
for example \cite{Me, Is1, Is5, SW} for results on 
this topic. This and related topics  go beyond the scope of the present
paper, although  our formulas potentially could reveal some insight.

\section{Preliminaries}\label{sec:preliminaries}

We explain our setting  and give an account of a number of results from
\cite{Sk1} on the general $N$-body stationary theory. 
\subsection{$N$-body Hamiltonians, assumptions  and
  notation}\label{subsec:body Hamiltonians, limiting absorption
  principle  and notation}

Let $\bX$ be                    
a (nonzero) finite dimensional real inner product space,
equipped with a
finite family $\{\bX_a\}_{a\in \vA}$ of subspaces closed under intersection:
For any $a,b\in\mathcal A$ there exists $c\in\mathcal A$ such that $\bX_a\cap\bX_b=\bX_c.$
  We
 order $\vA$ by  writing $a\leq b$ (or equivalently as $b\geq a$) if
$\bX_a\supseteq \bX_b$. 
It is assumed that there exist
$a_{\min},a_{\max}\in \vA$ such that 
$\bX_{a_{\min}}=\bX$ and 
$\bX_{a_{\max}}=\{0\}$. The subspaces $\bX_a$, $a\neq a_{\min} $,  are called  \emph{collision
 planes}. We will  use the notation $d_a=\dim
\mathbf X_a$.  In  Section \ref{sec:Stationary
  completeness for the 3} it will be convenient to  use
  the abbreviated  notations  $d=d_{a_{\min}}=\dim \mathbf X$ and $a_0=a_{\max}$.
 
The $2$-body model (or more correctly named `the one-body model')  is
based on the structure 
 $\vA=\set{a_{\min},a_{\max}}$. 
The scattering  theory for such models is well understood,
in fact there are several doable approaches under the below Condition
\ref{cond:smooth2wea3n12} \eqref{item:shortr} and \eqref{item:shortl}, see for example \cite[Chapter
4]{DG} and
\cite{II, IS4}  for  accounts  on time-dependent  and stationary
 long-range  scattering  theories, respectively. 

Introducing
 \begin{equation*}
  \vA_1=\vA\setminus
  \{a_{\max}\} 
\mand \vA_2=\vA\setminus
  \{a_{\min},a_{\max}\},
 \end{equation*} 
the $3$-body model is based on the structure 
\begin{equation}\label{eq:3body}
  \vA_2 \neq \emptyset \mand \bX_a\cap\bX_b=\set{0};\quad a,b\in \vA_2\mand  a\neq
  b.
\end{equation} This condition will be imposed in Section \ref{sec:Stationary completeness for the 3}.

Let $\bX^a\subseteq\bX$ be the orthogonal complement of $\bX_a\subseteq \bX$,
and denote the associated orthogonal decomposition of $x\in\bX$ by 
$$x=x^a\oplus x_a=\pi^ax\oplus \pi_ax\in\bX^a\oplus \bX_a.$$ 
The vectors $x_a$ and $x^a$ may be  called the \emph{inter-cluster} and 
\emph{internal components} of $x$, respectively. 

A real-valued measurable function $V\colon\bX\to\mathbb R$ is 
a \textit{potential of many-body type} 
if there exist real-valued measurable functions
$V_a\colon\bX^a\to\mathbb R$ such that 
\begin{equation*}
V(x)=\sum_{a\in\mathcal A}V_a(x^a)\ \ \text{for }x\in\mathbf X.
\end{equation*} We take $V_{a_{\min}}=0$ (without loss of
generality). We impose throughout the paper the
following condition from  \cite{Sk1}. By definition $\N_0=\N\cup\set{0}$.
\begin{cond}\label{cond:smooth2wea3n12}
    There exists $\mu\in (0,1)$  such that for all $a\in \vA\setminus\set{a_{\min}}$ the
    potential $V_a(x^a)=V^a_{\rm sr}(x^a)+V^a_{\rm lr}(x^a)$, where
    \begin{enumerate}
\item \label{item:shortr}$V^a_{\rm sr}(-\Delta_{x^a}+1)^{-1}$ is compact and  $\abs{x^a}^{1+\mu}
      V^a_{\rm sr}(-\Delta_{x^a}+1)^{-1}$ is bounded.
    \item \label{item:shortl}$V^a_{\rm lr}\in C^\infty$ and for all $\gamma\in \N_0^{\dim \mathbf X^a}$
      \begin{equation*}
        \partial^\gamma V^a_{\rm lr}(x^a)=\vO(\abs{x^a}^{-\mu-|\gamma|}).
      \end{equation*}
\item \label{item:DE} $\mu>\sqrt 3-1$.
\end{enumerate}
\end{cond}
\begin{remarks*}
  \begin{enumerate}[i)]
  \item \label{item:Di1}
   The third condition \eqref{item:DE} coincides with a requirement
  of the long-range theories
  \cite{En, De,Sk1}. All theories   work well with  \eqref{item:DE}  in combination with
  \eqref{item:shortr} and \eqref{item:shortl}. Although there are
  long-range theories for $N$-body models with $N\geq 3$ for which
  the condition \eqref{item:DE} is not fulfilled (see for example
  \cite{Is5} for
  references), they all require (as far as  the author
  knows) either decay assumptions on  sub-Hamiltonian 
  eigenstates or geometric assumptions on the pair-potentials (including a 
  sign condition at infinity).
\item \label{item:Di2} This paper depends on \cite{Sk1}, however let
  us remark that all relevant results from \cite{Sk1} clearly extend
   to the setting, where $V^a_{\rm lr}\in C^\infty$ in
   \eqref{item:shortl} is replaced by $V^a_{\rm lr}\in C^l$ for a big
   enough $l\in \N$. In the terminology of \cite{IS4} such potential
   is a \emph{classical $C^l$ long-range potential} and the results of
   \cite{IS4} are at disposal (some of them  will  be crucial for us in Subsection \ref{subsec:Free channel
  contribution}). Part of \cite{Sk1} depends on a stationary phase
argument from \cite{II}, which requires this $l$
to be sufficiently large. On the other hand the one-body setup of
\cite{IS4} (involving position-space wave operators rather than momentum-space wave operators) requires
only $l=2$, so it natural to expect modified versions of \cite{Sk1}  and the
present paper requiring only  classical $C^2$ long-range pair-potentials, in
fact (given  \eqref{item:DE})  such modified theories for $C^1$ long-range
pair-potentials upon using
Dollard-type channel wave operators. Such modifications  will not be
presented or examined in the present paper.
\end{enumerate}
\end{remarks*}

 For any $a\in\vA$  we  introduce an  associated  {Hamiltonian} $H^a$   as follows. 
For $a=a_{\min}$  we define
$H^{a_{\min}}=0$ on $\mathcal H^{a_{\min}}=L^2(\set{0})=\mathbb C. $
For $a\neq a_{\min}$ 
we let 
\begin{equation*}
V^a(x^a)=\sum_{b\leq a} V_{b}(x^b)
,\quad
x^a\in \bX^a,
\end{equation*} 
and  introduce then 
\begin{align*}
 H^a=-
\Delta_{x^a} +V^a\ \ 
\text{on }\mathcal H^a=L^2(\bX^a).
\end{align*} 
We abbreviate 
\begin{align*}
V^{a_{\max}}=V,\quad
 H^{a_{\max}}=H,\quad 
 \mathcal H^{a_{\max}}=\mathcal H.
\end{align*}
 The operator $H$ (with domain $
\mathcal D(H)=H^2(\mathbf X)$) is  the full
Hamiltonian of the $N$-body model,  and the \textit{thresholds} of $H$ are by
definition the
eigenvalues of the  sub-Hamiltonians $H^a$; 
$a\in  \vA_1$.
 Equivalently stated  the set of thresholds is 
\begin{equation*}
 \vT (H):= \bigcup_{a\in\vA_1} \sigma_{\pupo}( H^a).
\end{equation*}
 This set 
is closed and countable. Moreover the set of non-threshold eigenvalues
  is discrete in $\R\setminus \vT (H)$,  and it 
 can only  accumulate  at  points in
$\vT (H)$  from below.  
The essential spectrum is given by the formula
$\sigma_{\ess}(H)= \bigl[\min \vT(H),\infty\bigr)$. 
We introduce the notation $\vT_{\p}(H)=\sigma_{\pp}(H)\cup
  \vT(H)$, and more generally   $\vT_{\p}(H^a)=\sigma_{\pp}(H^a)\cup
  \vT(H^a)$. Denote $R(z)=(H-z)^{-1}$ for
$z\notin \sigma(H)$.

Consider   and fix $\chi\in C^\infty(\mathbb{R})$ such that 
\begin{align*}
\chi(t)
=\left\{\begin{array}{ll}
0 &\mbox{ for } t \le 4/3, \\
1 &\mbox{ for } t \ge 5/3,
\end{array}
\right.
\quad
\chi'\geq  0,
\end{align*} and such that the following properties are fulfilled:
\begin{align*}
  \sqrt{\chi}, \sqrt{\chi'}, (1-\chi^2)^{1/4} ,
  \sqrt{-\parb{(1-\chi^2)^{1/2} }'}\in C^\infty.
\end{align*} We define correspondingly $\chi_+=\chi$ and
$\chi_-=(1-\chi^2)^{1/2} $ and record that
\begin{align*}
  \chi_+^2+\chi_-^2=1\mand\sqrt{\chi_+}, \sqrt{\chi_+'}, \sqrt{\chi_-}, \sqrt{-\chi_-'}\in C^\infty.
\end{align*}

Any  function $f\in C^\infty_\c(\R)$ taking values in $[0,1]$ is
referred to as a  \emph{standard support function} (or just a `support function'). For  any such
functions  $f_1$ and $f_2$ we write
$f_2\succ f_1$, if $f_2=1$ in a neighbourhood of $\supp f_1$.

We shall use the notation 
$\inp{x}:=\parb{1+\abs{x}^2}^{1/2}$ for  $x\in \mathbf X$ (or more generally for
any $x$ in a normed space).  If $T$ is a self-adjoint  operator on a
Hilbert space $\vG$ and $\varphi\in \vG$ then
$\inp{T}_\varphi:=\inp{\varphi,T\varphi}$. We denote the space of
bounded (linear) operators 
from one  (general) Banach space $X$ to another one $Y$ by $\vL(X,Y)$ 
and abbreviate $\mathcal L(X)=\mathcal L(X,X)$. The dual space of $X$
is denoted by $X^*$.

To define \emph{Besov spaces 
associated with the multiplication operator
$|x|$ on $\vH$}  
let
\begin{align*}
F_0&=F\bigl(\bigl\{ x\in \mathbf X\,\big|\,\abs{x}<1\bigr\}\bigr),\\
F_m&=F\bigl(\bigl\{ x\in \mathbf X\,\big|\,2^{m-1}\le \abs{x}<2^m\bigr\} \bigr)
\quad \text{for }m=1,2,\dots,
\end{align*}
where $F(U)=F_U$ is the sharp characteristic function of any given  subset
$U\subseteq {\mathbf X}$. 
The Besov spaces $\mathcal B =\mathcal B(\mathbf X)$, $\mathcal
B^*=\mathcal B(\mathbf X)^*$ and $\mathcal B^*_0=\mathcal
B^*_0(\mathbf X)$ are then given  as 
\begin{align*}
\mathcal B&=
\bigl\{\psi\in L^2_{\mathrm{loc}}(\mathbf X)\,\big|\,\|\psi\|_{\mathcal B}<\infty\bigr\},\quad 
\|\psi\|_{\mathcal B}=\sum_{m=0}^\infty 2^{m/2}
\|F_m\psi\|_{{\mathcal H}},\\
\mathcal B^*&=
\bigl\{\psi\in L^2_{\mathrm{loc}}(\mathbf X)\,\big|\, \|\psi\|_{\mathcal B^*}<\infty\bigr\},\quad 
\|\psi\|_{\mathcal B^*}=\sup_{m\ge 0}2^{-m/2}\|F_m\psi\|_{{\mathcal H}},
\\
\mathcal B^*_0
&=
\Bigl\{\psi\in \mathcal B^*\,\Big|\, \lim_{m\to\infty}2^{-m/2}\|F_m\psi\|_{{\mathcal H}}=0\Bigr\},
\end{align*}
respectively.
Denote the standard \emph{weighted $L^2$ spaces} by 
$$
L_s^2=L_s^2(\mathbf X)=\inp{x}^{-s}L^2(\mathbf X)\ \ \text{for }s\in\mathbb R ,\quad
L_{-\infty}^2=\bigcup_{s\in\R}L^2_s,\quad
L^2_\infty=\bigcap_{s\in\mathbb R}L_s^2.
$$ 
Then for any $s>1/2$
\begin{equation*}
 L^2_s\subsetneq \mathcal B\subsetneq L^2_{1/2}
\subsetneq \mathcal H
\subsetneq L^2_{-1/2}\subsetneq \mathcal B^*_0\subsetneq \mathcal B^*\subsetneq L^2_{-s}.
\end{equation*} The abstract quotient-norm on the Banach space
$\vB^*/\vB_0^*$ and 
  \begin{equation*}
    \norm{\psi}_{\rm quo}:=\limsup_{n\to \infty}\,2^{-n/2}\norm[\Big]{\sum_{m=0}^n F_m\psi}_{{\mathcal H}},\quad \psi\in \vB^*,
  \end{equation*} are 
equivalent   norms.

We recall the following notion of
order of decay \cite[(6.2)]{Sk1}: 
An operator $T$   on $\mathcal H$ such that
$T,T^*:L^2_\infty\to L^2_\infty$ is of  \emph{order}
$t\in\mathbb R$, written $T=\vO(\inp{x}^t)$, if 
 for each $s\in\mathbb R$  the restriction  $T_{|L^2_\infty}$ extends to
 an operator $T_s\in\vL(L^2_{s}, L^2_{s-t})$. If  $T$ has {order} $t$
 for all 
$t\in\mathbb R$, we write  $T=\vO(\inp{x}^{-\infty})$.

\begin{subequations}
Under a rather weak condition (in particular weaker  than Condition
\ref{cond:smooth2wea3n12}) it is demonstrated in  \cite{AIIS} that the
following limits exist   locally
                                                              uniformly
                                                              in $\lambda\not\in \vT_{\p}(H)$:
                                                              \begin{align}\label{eq:LAPbnda}
  R(\lambda\pm \i
  0)=\lim _{\epsilon\to 0_+} \,R(\lambda\pm\i
    \epsilon)\in \vL\parb{L^2_s,L^2_{-s}}\text{ for any }s>1/2.
\end{align} Furthermore, 
   in the strong weak$^*$-topology, 
  \begin{align}\label{eq:BB^*a}
    \begin{split}
      R(\lambda\pm \i
  0)&=\swslim _{\epsilon\to 0_+} \,R(\lambda\pm\i
    \epsilon)\in \vL\parb{\vB,\vB^*}\\& \text{
                                                              with a
                                                              locally
                                                              uniform
                                                              norm bound in
                                                              }\lambda\not
                                                              \in \vT_{\p}(H).
    \end{split}
\end{align}
\end{subequations}

\subsection{One-body effective potentials and  $N$-body scattering theory}\label{subsec: -body effective potential and a $1$-body radial limit}

For any  $a\in  \vA_1$ we introduce   $I^{\rm
  sr}_a=\sum_{b\not\leq a}V_{\rm sr}^b$, $I^{\rm lr}_a=\sum_{b\not\leq
  a}V_{\rm lr}^b$, $I_a=I^{\rm sr}_a+I^{\rm
  lr}_a$  and
\begin{subequations}
 \begin{equation}\label{eq:Ipott}
  \breve I_{a,R}=\breve I_{a,R}(x_a)=\chi_+(|x_a|/R)I^{\rm lr}_a(x_a)\prod_{b\not\leq a} \,\,\chi_+(|\pi^b
  x_a| \ln \inp{x_a}/\inp{x_a} );\quad R\geq 1.
\end{equation}  
We note that if  the $3$-body condition \eqref{eq:3body} is imposed
the  last (product)  factor can be taken to be one except for
$a=a_{\min}$. For $a=a_{\min}$ and for the general  $N$-body problem considered in this
section  the factor is needed to provide fall-off.
 More precisely  the `regularization' $\breve I_{a,R}$  is  a one-body potential fulfilling for any
$\breve\mu\in (0,\mu)$ the bounds 
\begin{equation}\label{eq:brevePotentialOne}
       \partial^\gamma \breve I_{a,R}(x_a)=\vO(\abs{x_a}^{-\breve\mu-|\gamma|}).
      \end{equation}  
\end{subequations}
  For notational
     convenience we take from this point and throughout the paper $\breve\mu=\mu$,
     i.e. more precisely we will assume \eqref{eq:brevePotentialOne} with
     $\breve\mu$ replaced by $\mu$.  For all of our main  results  $\breve I_{a,R}$ enters
     only with 
     $R=1$, and in that case we abbreviate  $ \breve I_{a}=\breve
     I_{a,1}$. We explain   usages of  the auxiliary one-body potential $\breve I_{a,R}$ for   $R\geq 1$
     taken large 
     in Remarks \ref{remark:R}.

We let 
      $\breve K_a(\cdot,\lambda)$,  $\lambda>0$, denote the corresponding approximate solution
      to the eikonal equation $\abs{\nabla_{x_a} \breve K_a}^2+ \breve
      I_a=\lambda$ as  taken 
       from \cite {Is,II}. 
More precisely writing  $\breve K_a(x_a,\lambda)=\sqrt
      \lambda \abs{x_a}-\breve k_a(x_a,\lambda)$  the following properties
      are fulfilled with $\R_+:=(0,\infty)$ and  $ d_a:=\dim \mathbf
      X_a$. The functions $\breve k_a\in
      C^\infty(\mathbf X_a \times \R_+)$ and:
      \begin{enumerate}[1)]
      \item For any compact $\Lambda \subset \R_+$ there exists $\rho>1$
        such that for all $\lambda\in \Lambda$ and all $x_a\in \mathbf X_a$
        with $\abs{x_a}>\rho$
        \begin{align*}
          2\sqrt{\lambda}\,\tfrac{\partial} {\partial\abs{x_a}}\breve
          k_a=\breve I_a(x_a)+\abs{\nabla_{x_a}\breve k_a}^2.
        \end{align*}
      \item For all multiindices $\gamma\in \N_0^{d_a}$, $m\in \N_0$ and
         compact  $\Lambda \subset \R_+$
        \begin{align*}
          \abs[\big]{{\partial}_
          {{x_a}}^\gamma {\partial}_
          {\lambda}^m \breve k_a}\leq C\inp {x_a}^{1-\abs{\gamma}-\mu}\text{
          uniformly in }\lambda\in \Lambda.
        \end{align*}
      \end{enumerate}

Next we apply the  Legendre transform of $\breve K_a$  following \cite[Lemma 6.1]{II}: 
 There exist an $\mathbf X_a$-valued function $x(\xi, t)$ and a
positive function $\lambda(\xi,t)$ both in $C^\infty\parb{(\mathbf
X_a\setminus\set{0})\times \R_+}$ and satisfying the following requirements.
 For any compact set $B\subseteq \mathbf
X_a\setminus\set{0}$,  there exist $ T, C > 0$ such  that for $\xi\in B$
and $t > T$ 
\begin{enumerate}[1)]
      \item\label{item:1ss} \quad $\xi=\partial_{x_a}\breve
        K_a\parb{x(\xi, t),\lambda(\xi,t)},\quad t=\partial_{\lambda}
        \breve K_a\parb{x(\xi, t),\lambda(\xi,t)},$
        
      \item \label{item:2ss}\quad $\abs{x(\xi, t)-2t\xi}\leq C\inp{t}^{1-\mu},\quad \abs[\big]{\lambda(\xi, t)-\abs{\xi}^2}\leq C\inp{t}^{-\mu}.$
\end{enumerate} Then we define 
\begin{align*}
  \breve S_a(\xi,t)=x(\xi, t)\cdot \xi +\lambda(\xi, t)t - \breve K_a(x(\xi, t),
  \lambda(\xi, t));\quad (\xi,t)\in\parb{\mathbf
X_a\setminus{0}}\times \R_+.
\end{align*} 
 Note that this function solves the Hamilton-Jacobi equation
\begin{align*}
  \partial_t \breve  S_a(\xi,t)=
  \xi^2+\brI_a\parb{\partial_{\xi} \breve S_a(\xi,t)};\quad t> t(\xi), \, \xi\neq 0.
\end{align*}

Consider now any  \emph{channel}
$\alpha =(a,\lambda^\alpha, u^\alpha)$, i.e.   $a\in\vA_1$,  $u^\alpha\in
\mathcal H^a$ and 
$(H^a-\lambda^\alpha)u^\alpha=0$. We introduce the corresponding \emph{channel wave operators} 
\begin{equation}\label{eq:wave_op}
  W_\alpha^{\pm}=\slim_{t\to \pm\infty}\e^{\i tH}J_\alpha\e^{-\i
  (\breve  S_a^\pm (p_a,t)+\lambda^\alpha t)},\quad J_\alpha \varphi
=u^\alpha\otimes
\varphi,
 \end{equation} where $p_a=-\i \nabla_{x_a}$ and $\breve S_a^\pm
 (\xi_a,\pm|t|)=\pm \breve S_a(\pm\xi_a,|t|)$.   The existence of these
 limits can be proven independently of \cite{De,En}, see  Remarks
 \ref{remark:R}. It is a general fact  that the existence of the
wave operators  implies their orthogonality, see for example
\cite[Theorem XI.36]{RS}.

Let us for  the channel
$\alpha =(a,\lambda^\alpha, u^\alpha)$ introduce the notation
\begin{equation*}
k_\alpha=p_a^2+\lambda^\alpha, \quad I^\alpha=(\lambda^\alpha,\infty)\quad\text{and}\quad \vE^\alpha=I^\alpha \setminus {
  \vT_{\p}(H)}.
\end{equation*} 
 Note  the intertwining property $H W_\alpha^{\pm}\supseteq
W_\alpha^{\pm}k_\alpha $  and the fact that $k_\alpha$ is diagonalized by
the unitary map  $F_\alpha:L^2(\mathbf X_a)\to L^2(I^\alpha
;\vG_a)$, $\vG_a=L^2(\mathbf{S}_a)$,  $\mathbf{S}_a=\mathbf X_a\cap\S^{d_a-1}$ with $d_a=\dim
\mathbf X_a$,  given by
\begin{align}\label{eq:Four}
  \begin{split}
  (F_\alpha \varphi)(\lambda,\omega)&=(2\pi)^{-d_a/2}2^{-1/2}
  \lambda_\alpha^{(d_a-2)/4}\int \e^{-\i  \lambda^{1/2}_\alpha \omega\cdot
  x_a}\varphi(x_a)\,\d
x_a;\\&\quad \quad\lambda_\alpha=\lambda-\lambda^\alpha,\quad \lambda\in I^\alpha.  
\end{split}
\end{align}

We denote
\begin{equation*}
 c_\alpha^\pm(\lambda)=\e^{\pm \i
  \pi (d_a-3)/4}\pi ^{-1/2}\lambda_\alpha^{1/4},
\end{equation*} and $F_\rho=F(\set{x\in \bX\mid\abs{x}<\rho})$ for
$\rho>1 $ (considered below as  multiplication operators).
\begin{proposition}[\cite {Sk1}]\label{prop:radi-limits-chann22} 
  For any channel $\alpha=(a,\lambda^\alpha, u^\alpha)$,
  $\lambda\in \vE^\alpha $, $\psi\in \vB(\mathbf X)$ and $g\in \vG_a$
  there exist the  weak limits 
  \begin{align}\label{eq:restrH}
    \begin{split}
   \inp{\Gamma^\pm_{\alpha}&(\lambda)\psi,g}=\lim_{\rho\to
    \infty}\,\overline{c_\alpha^\pm(\lambda)}\rho^{-1}\\&
    \inp[\big]{F_\rho R(\lambda\pm  \i 0)\psi, F_\rho \parb{u^\alpha \otimes
    \abs{x_a}^{(1-d_a)/2}
                                      \e^{\pm \i
                                        \breve K_a(\abs{x_a}\cdot,\lambda_\alpha)}g(\pm\cdot)}}.   
    \end{split}
  \end{align}  Here    the   limits $\Gamma^\pm_{\alpha}(\lambda)$
  are  
weakly continuous $\vL(\vB (\mathbf X), \vG_a)$-valued functions of  $\lambda\in\vE^\alpha$.  
\end{proposition}

The restrictions of the map $F_\alpha (W^\pm_\alpha)^*$
have  
strong almost everywhere interpretations, meaning more precisely
\begin{align*}
  F_\alpha (W^\pm_\alpha  )^*\psi= \int^\oplus_{
  I^{\alpha} }\parb{F_\alpha (W^\pm_\alpha)^*\psi}(\lambda)\,\d
  \lambda;\quad \psi\in \vH.
\end{align*} When applied to $\psi\in \vB (\mathbf X)\subseteq
\vH $ the following relationship to Proposition  \ref{prop:radi-limits-chann22} holds. 

\begin{thm}[{\cite{Sk1}}]\label{thm:wave_matrices} For any channel
  $(a,\lambda^\alpha, u^\alpha)$  and  any $\psi\in \vB (\mathbf X)\subseteq
\vH$ 
\begin{equation}\label{eq:adjFORM}
  \parb{F_\alpha
  (W^\pm_\alpha)^*\psi}(\lambda)=\Gamma_\alpha^\pm(\lambda)\psi\quad\text{for a.e.
  } \lambda\in \vE^\alpha.
\end{equation} In particular  for any  $\psi\in \vB (\mathbf X)$ the restrictions $\parb{F_\alpha (W^\pm_\alpha)^*\psi}(\cdot)$ are
weakly continuous $\vG_a$-valued functions on  $\vE^\alpha$.  
\end{thm}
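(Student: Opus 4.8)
The plan is to reduce the a.e.\ identity \eqref{eq:adjFORM} (the concluding weak continuity then follows at once, since $\Gamma^\pm_\alpha(\cdot)$ is weakly continuous by \eqref{eq:genFou} and \eqref{eq:besBaGood}) to a superposition formula for $W^\pm_\alpha$ on energy wave packets, and to prove that formula by matching the time-dependent definition \eqref{eq:wave_op} of $W^\pm_\alpha$ with an Abel-regularized stationary representation. Fix a real $\phi\in C_\c^\infty(\vE^\alpha)$ and $g\in C_\c^\infty(C'_a)$; put $\Phi=F_\alpha^{-1}(\phi g)$, which by \eqref{eq:Four} lies in $L^2_\infty(\mathbf X_a)$ with $\widehat\Phi\in C_\c^\infty(\mathbf X_a\setminus\{0\})$, and set $g^+=g$, $g^-=\vR_a g$, $V^\pm_\lambda:=J_\alpha v^\pm_{\lambda,\alpha}[g^\pm]$, so $(H-\lambda)V^\pm_\lambda\in\vB(\mathbf X)$ by \eqref{eq:besFull}. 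For $h\in C_\c^\infty(C'_a)$ introduce the distorted eigenfunction
\begin{align*}
  E^\pm_{\alpha,\lambda}[h]:=J_\alpha v^\pm_{\lambda,\alpha}[h]-R(\lambda\mp\i0)(H-\lambda)J_\alpha v^\pm_{\lambda,\alpha}[h]\in\vB(\mathbf X)^*,\qquad (H-\lambda)E^\pm_{\alpha,\lambda}[h]=0.
\end{align*}
Since $R(\lambda\pm\i0)^*=R(\lambda\mp\i0)$, formula \eqref{eq:small0} with $u=R(\lambda\pm\i0)f$ (for $f\in\vB(\mathbf X)$) reads $Q^\pm_{\lambda,\alpha}(R(\lambda\pm\i0)f)[h]=\pm2^{-1}\i\lambda_\alpha^{-1/2}\inp{E^\pm_{\alpha,\lambda}[h],f}_{\vH}$, whence by \eqref{eq:genFou}, for an explicit non-vanishing constant $\kappa^\pm_\alpha(\lambda)$ (depending only on $c^\pm_\alpha(\lambda)$ and $\lambda_\alpha$),
\begin{align*}
  \inp{g,\Gamma^\pm_\alpha(\lambda)f}_{L^2(C_a)}=\kappa^\pm_\alpha(\lambda)\,\inp{E^\pm_{\alpha,\lambda}[g^\pm],f}_{\vH},\qquad f\in\vB(\mathbf X),\ \lambda\in\vE^\alpha.
\end{align*}

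Unfolding $F_\alpha$ gives $\int_{\vE^\alpha}\phi(\lambda)\inp{g,(F_\alpha(W^\pm_\alpha)^*f)(\lambda)}_{L^2(C_a)}\,\d\lambda=\inp{W^\pm_\alpha\Phi,f}_{\vH}$ for all $f\in\vH$, so in view of the last display the identity \eqref{eq:adjFORM} — tested against such $\phi$, $g$ and $f\in\vB(\mathbf X)$ — is equivalent to the superposition formula
\begin{align*}
  W^\pm_\alpha\Phi=\int_{I^\alpha}\phi(\lambda)\,\kappa^\pm_\alpha(\lambda)\,E^\pm_{\alpha,\lambda}[g^\pm]\,\d\lambda\qquad\text{(an identity in }\vH\text{).}
\end{align*}
Granting this for all such $\phi,g$, one recovers \eqref{eq:adjFORM} on $\vE^\alpha$ up to a null set by letting $\phi$ range over $C_\c^\infty(\vE^\alpha)$ and using density of $C_\c^\infty(C'_a)$ in $L^2(C_a)$ (note $C_a\setminus C'_a$ is Lebesgue null). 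Importantly, this reduction applies $R(\lambda\pm\i0)$ only to vectors of $\vB(\mathbf X)$, and \eqref{eq:small0} is legitimate here because $Q^\pm_{\lambda,\alpha}$ annihilates $\vH$-vectors (Cauchy--Schwarz in the radial integral of Corollary \ref{cor:radi-limits-chann}).

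To prove the superposition formula I would show that both sides equal the strong limit $\lim_{t\to\pm\infty}\e^{\i tH}W^\pm(t)$, where $W^\pm(t):=\int_{I^\alpha}\phi(\lambda)\kappa^\pm_\alpha(\lambda)\e^{-\i t\lambda}V^\pm_\lambda\,\d\lambda=J_\alpha w^\pm_t$ and $w^\pm_t:=\int_{I^\alpha}\phi(\lambda)\kappa^\pm_\alpha(\lambda)\e^{-\i t\lambda}v^\pm_{\lambda,\alpha}[g^\pm]\,\d\lambda$. For the right-hand side one has $\tfrac{\d}{\d t}\bigl(\e^{\i tH}W^\pm(t)\bigr)=\i\e^{\i tH}\!\int_{I^\alpha}\phi\kappa^\pm_\alpha\e^{-\i t\lambda}(H-\lambda)V^\pm_\lambda\,\d\lambda$, and the bounds on $K_a$ together with the decay \eqref{eq:decay} of $u^\alpha$ and the agreement of $I_a$ with $\brI_a$ off a compact set (cf.\ \eqref{eq:brevePotential}) give, by non-stationary phase in $\lambda$, $\bigl\|\int_{I^\alpha}\phi\kappa^\pm_\alpha\e^{-\i t\lambda}(H-\lambda)V^\pm_\lambda\,\d\lambda\bigr\|_{\vH}=\vO(\abs{t}^{-1-\mu})$; a Duhamel argument combined with the Abel identity $\i\int_0^{\pm\infty}\e^{\i t(H-\lambda)}\,\d t=-R(\lambda\mp\i0)$ (valid on $\vB(\mathbf X)$ by \eqref{eq:BB^*a}) then identifies $\int_{I^\alpha}\phi\kappa^\pm_\alpha E^\pm_{\alpha,\lambda}[g^\pm]\,\d\lambda$ with $\lim_{t\to\pm\infty}\e^{\i tH}W^\pm(t)$. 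For the left-hand side, by \eqref{eq:wave_op} one has $W^\pm_\alpha\Phi=\lim_{t\to\pm\infty}\e^{\i tH}J_\alpha\bigl(\e^{-\i(S_a^\pm(p_a,t)+\lambda^\alpha t)}\Phi\bigr)$, so since $\e^{\i tH}$ is unitary and $J_\alpha$ isometric the superposition formula reduces to the purely one-body statement
\begin{align*}
  \Bigl\|\e^{-\i(S_a^\pm(p_a,t)+\lambda^\alpha t)}\Phi-\textstyle\int_{I^\alpha}\phi(\lambda)\kappa^\pm_\alpha(\lambda)\e^{-\i t\lambda}v^\pm_{\lambda,\alpha}[g^\pm]\,\d\lambda\Bigr\|_{L^2(\mathbf X_a)}\longrightarrow 0\qquad(t\to\pm\infty).
\end{align*}

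This last convergence says that the modified free evolution $\e^{-\i S_a^\pm(p_a,t)}$ carries $\Phi=F_\alpha^{-1}(\phi g)$, asymptotically in $L^2(\mathbf X_a)$, into the energy superposition of the modified spherical waves $v^\pm_{\lambda,\alpha}[g^\pm]$; it is proved by stationary phase in $\xi_a$ applied to $\bigl(\e^{-\i S_a^\pm(p_a,t)}\Phi\bigr)(x_a)=(2\pi)^{-n_a/2}\int\e^{\i(x_a\cdot\xi_a-S_a^\pm(\xi_a,t))}\widehat\Phi(\xi_a)\,\d\xi_a$, the stationary point being $x_a=\partial_{\xi_a}S_a^\pm(\xi_a,t)=x(\xi_a,t)$ with $\abs{x(\xi_a,t)-2t\xi_a}=\vO(\abs{t}^{1-\mu})$, the critical phase value being $K_a(x_a,\lambda_\alpha)-\lambda t$ by the Legendre relations $S_a=x\cdot\xi+\lambda t-K_a$, $t=\partial_\lambda K_a$ built into the construction recalled before \eqref{eq:wave_op} (see \cite[Lemma~6.1]{II}), and the Fresnel prefactors being precisely $c^\pm_\alpha(\lambda)$; away from the stationary set the contribution is $\vO(\abs{t}^{-N})$ in $L^2$ by non-stationary phase, using that $g$ is supported in $C'_a$ so that $K_a$ is smooth there. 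Such $L^2(\mathbf X_a)$-asymptotics of the modified free evolution, in exactly the needed form and valid down to the critical decay $\mu>\sqrt3-1$, are contained in \cite{Is,II}. I expect this one-body stationary-phase step to be the main obstacle: obtaining genuine $L^2$-convergence with the correct constants forces careful control of the sub-leading WKB terms and of the remainder $\partial_{\xi_a}S_a^\pm(\xi_a,t)-2t\xi_a$; by contrast the algebraic reduction to the superposition formula and the Duhamel/Abel manipulation above are routine once the mapping properties \eqref{eq:BB^*a}, \eqref{eq:besFull} and the $\vO(\abs{t}^{-1-\mu})$ bound are in hand. (Full details are in \cite{Sk3}.)
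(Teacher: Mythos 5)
This theorem is not proved in the paper at all: it is imported verbatim from \cite[Theorem 2.10]{Sk3}, and the text only recalls the ingredients ($Q^\pm_{\lambda,\alpha}$, \eqref{eq:small0}, \eqref{eq:genFou}, \eqref{eq:besBaGood}), so there is no in-paper proof to compare against line by line. Judged on its own terms, your outline is sound and its skeleton coincides with the route of \cite{Sk3} (and with the generalized version in Subsection \ref{subsec: Generalization of Theorem}): the Green identity \eqref{eq:small0} applied to $u=R(\lambda\pm\i 0)f$ does express $\inp{g,\Gamma^\pm_\alpha(\lambda)f}$ as a pairing of $f\in\vB(\bX)$ against the distorted wave $J_\alpha v^\pm_{\lambda,\alpha}[g^\pm]-R(\lambda\mp\i 0)(H-\lambda)J_\alpha v^\pm_{\lambda,\alpha}[g^\pm]$, and the identification with $\parb{F_\alpha(W^\pm_\alpha)^*f}(\lambda)$ is then a Cook/Duhamel computation closed by the Abel representation of $R(\lambda\mp\i 0)$ --- structurally the same mechanism as \eqref{eq:AltForm}, where $\Phi^\pm_\alpha+\i R(\lambda\mp\i 0)T^\pm_\alpha$ plays the role of your $E^\pm_{\alpha,\lambda}$. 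The genuine divergence is in the one-body step: you propose to re-derive the $L^2(\bX_a)$-asymptotics of $\e^{-\i S^\pm_a(p_a,t)}\Phi$ by a direct stationary-phase analysis with explicit Fresnel factors, whereas the paper and \cite{Sk3} import the one-body stationary theory wholesale from \cite{II,Is} in the packaged form of $\breve w^\pm_a$ and the identity $\breve w^\pm_af_2(k_\alpha)\delta_{0,\alpha}(\lambda)=f_2(\lambda)\brg_a^\pm(\lambda_\alpha)^*\gamma_{a,0}(\lambda_\alpha)$. Your version is more self-contained but re-does the hardest one-body estimate; the paper's version keeps that estimate as a black box, which is what makes the $N$-body argument short.

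Two points that you leave implicit carry most of the actual weight and should not be treated as routine. First, the interchange of the $t$- and $\lambda$-integrals in the Duhamel/Abel step: the pointwise-in-$\lambda$ identity $\i\int_0^{\pm\infty}\e^{\i t(H-\lambda)}(H-\lambda)V^\pm_\lambda\,\d t=-R(\lambda\mp\i 0)(H-\lambda)V^\pm_\lambda$ cannot be extracted from the wave-packet integral without the locally uniform bounds \eqref{eq:LAPbnda}--\eqref{eq:BB^*a} together with the weak continuity \eqref{eq:besBaGood}, and it is only after this step that the a priori merely $\vB^*$-valued integral $\int\phi\,\kappa^\pm_\alpha E^\pm_{\alpha,\lambda}[g^\pm]\,\d\lambda$ is seen to represent the $\vH$-vector $W^\pm_\alpha\Phi$. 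Second, the integrability of the Cook term after the $\lambda$-integration localizes $\abs{x_a}\sim 2\sqrt{\lambda_\alpha}\,\abs{t}$ is exactly where the threshold $s>\tfrac52-\mu$ in \eqref{eq:decay} is consumed (through the cross terms involving $u^\alpha$ and through $I_a-\brI_a$ near the collision planes); the exponent $\vO(\abs{t}^{-1-\mu})$ you state is optimistic, but integrability is all that is needed. With these two steps supplied (or quoted from \cite{II,Is,Sk3}), the argument is complete, and the concluding weak-continuity assertion does follow immediately from \eqref{eq:besBaGood} as you say.
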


\begin{defn}\label{defn:scatEnergy07}  
An  energy $\lambda
  \in \vE:=(\min \vT(H),\infty)\setminus\vT_\p(H)$  is  {stationary
    complete}  for $H$ if  
\begin{equation}\label{eq:ScatEnergy222331}
  \forall \psi\in  L^2_\infty:\,\, \sum_{\lambda^\beta<
  \lambda}\,\norm{\Gamma_\alpha^\pm(\lambda) \psi}^2= 
  \inp{\psi,\delta(H-\lambda){\psi}}.
\end{equation} 
\end{defn} Asymptotic completeness follows by integration provided
(\ref{eq:ScatEnergy222331}) is known for almost all $\lambda \in \vE$
(motivating the used terminology). The orthogonality of the
wave operators \eqref{eq:wave_op}  implies (as demonstrated in
\cite[Subsection 9.2]{Sk1})  that
  \begin{equation}\label{eq:Besn}
     \forall \lambda
  \in \vE,\,\forall \psi\in  L^2_\infty:\,\, \sum_{\lambda^\beta<
  \lambda}\,\norm{\Gamma_\alpha^\pm(\lambda) \psi}^2\leq 
  \inp{\psi,\delta(H-\lambda){\psi}}.
  \end{equation} 

It is also known (see \cite[Proposition
9.16]{Sk1})  that a sufficient
  and necessary condition for $\lambda\in \vE$ be stationary complete is
  given as follows:

For all $\psi\in  L^2_\infty$ there exists
  $(g_\beta)_{\beta}\in \vG:=\Sigma^\oplus_{\beta }\,\vG_b$  (here 
  $\beta=(b, \lambda^\beta, u^\beta)$ runs  over all channels)  such that,  as an 
    identity in  $\vB^*/\vB_0^*$ (equipped with the quotient topology),
    \begin{equation}\label{eq:asres29}
       R(\lambda+\i 0)\psi=2\pi \i \sum_{\lambda^\beta<\lambda} J_\beta
    \breve v^{+}_{\beta,\lambda} [g_\beta],
\end{equation} where (recalling) $J_\beta \phi
=u^\beta\otimes
\phi$, here   with  $\phi$  taken to be the \emph{outgoing 
  quasi-modes} corresponding to the plus cases of
\begin{equation}\label{eq:quasiM}
 \breve v^{\pm}_{\beta,\lambda} [g]( x_b):= \mp \tfrac \i{2\pi} \parb{c_\beta^\pm(\lambda)}^{-1}\chi_+(\abs{x_b}) \abs{x_b}^{(1-n_b)/2}
                                      \e^{\pm \i
                                        \breve K_b(x_b,\lambda_\beta)}g(\pm\hat
                                           x_b);\quad g\in \vG_b. 
\end{equation} It is also known  that if
\eqref{eq:asres29} holds for some $(g_\beta)_{\lambda^\beta<\lambda}\in \vG$, then necessarily
$g_\beta=\Gamma_\beta^+(\lambda) \psi$.

The scattering matrix
$S(\lambda)=\parb{S_{\beta\alpha}(\lambda)}_{\beta\alpha}$ is given a
priori  for almost all $\lambda
  \in \vE$ by 
\begin{equation*}
  \hat
  S_{\beta\alpha}:=F_\beta(W_{\beta}^+)^*W_{\alpha}^-F_\alpha^{-1}=\int^\oplus_{
  I_{\beta\alpha} }
  S_{\beta\alpha}(\lambda)\,\d \lambda,\quad
  I_{\beta\alpha}=I^\beta\cap I^\alpha.
\end{equation*} (For $\lambda\notin I^\beta\cap I^\alpha$ we let
$S_{\beta\alpha}(\lambda)=0$.)

 The scattering matrix is known from \cite{Sk1}   to be
a  weakly continuous $\vL(\vG)$-valued function (in fact
contraction-valued) on  $\vE$. At  stationary complete energies
the scattering matrix is characterized geometrically as follows.
 \begin{thm}[{\cite{Sk1}}]\label{Cor:besov-space-setting}  Let $\lambda\in
  \vE$ be    stationary complete  and $\alpha=(a,\lambda^\alpha, u^\alpha)$ be any
  channel with $\lambda^\alpha<\lambda$. Then the following
  existence and uniqueness results hold for any  $
  g\in \vG_a$.
  \begin{enumerate}[1)]
  \item \label{item:As10} Let   $
    u=\Gamma^-_{\alpha}(\lambda)^* g$,  and let 
    $(g_\beta)_{\beta}\in \vG$ be   given by
    $g_\beta=S_{\beta\alpha}(\lambda) g$. Then,  as an
    identity in  $\vB^*/\vB_0^*$,
    \begin{align}\label{eq:as}
       u=J_\alpha
    \breve v^{-}_{\alpha,\lambda} [ g]+\sum_{\lambda^\beta<\lambda} J_\beta
    \breve v^{+}_{\beta,\lambda} [g_\beta].
    \end{align} 
  \item \label{item:As20} Conversely, if  \eqref{eq:as}
  is fulfilled for some  $ u\in \vB^*\cap H^2_{\mathrm{loc}}(\bX)$
  with $(H-\lambda)u=0$
  and for some $(g_\beta)_{\beta }\in \vG$, 
  then  $ u=\Gamma^-_{\alpha}(\lambda)^* g$ and
  $g_\beta=S_{\beta\alpha}(\lambda) g$ for all 
$\lambda^\beta<\lambda$.
\end{enumerate}
\end{thm}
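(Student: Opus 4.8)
I would first observe that $S(\lambda)$ is unitary on $\vG$ at the given (stationary complete) energy. Writing $\Gamma^\pm(\lambda)=(\Gamma^\pm_\beta(\lambda))_{\lambda^\beta<\lambda_0}\colon\vB\to\vG$, stationary completeness \eqref{eq:ScatEnergy} gives $\norm{\Gamma^+(\lambda)\psi}_\vG=\norm{\Gamma^-(\lambda)\psi}_\vG$ for all $\psi\in L^2_\infty$, so by \eqref{eq:scaSS} $\norm{S(\lambda)\Gamma^-(\lambda)\psi}_\vG=\norm{\Gamma^-(\lambda)\psi}_\vG$; since $\Gamma^-(\lambda)\colon\vB\to\vG$ is onto (Theorem~\ref{prop:besov-space-setting}) and $L^2_\infty$ is dense in $\vB$, $S(\lambda)$ is isometric, and being also onto it is unitary. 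Taking adjoints of $\Gamma^+(\lambda)=S(\lambda)\Gamma^-(\lambda)$ and using $S(\lambda)^*=S(\lambda)^{-1}$ yields the identity $\Gamma^-(\lambda)^*=\Gamma^+(\lambda)^*S(\lambda)$ (as maps $\vG\to\vB^*$), which I will use repeatedly. For $g\in L^2(C_a)$ write $\hat g\in\vG$ for the vector with $g$ in the $\alpha$-slot and $0$ elsewhere.

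\emph{Part 1).} Set $u=\Gamma^-_\alpha(\lambda)^*g$ and $g_\beta=S_{\beta\alpha}(\lambda)g$; by the above identity $u=\Gamma^-(\lambda)^*\hat g=\sum_{\lambda^\beta<\lambda_0}\Gamma^+_\beta(\lambda)^*g_\beta$ as well. Applying the two cases of Lemma~\ref{lemma:besov-space-setting} to these two representations of $u$ gives, modulo $\vB_0^*$, that $\chi^2_-(+B/\epsilon)u\equiv J_\alpha w^{-}_{\lambda,\alpha}[g]$ and $\chi^2_-(-B/\epsilon)u\equiv\sum_\beta J_\beta w^{+}_{\lambda,\beta}[g_\beta]$. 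Since $\chi_+^2+\chi_-^2=1$ we have $\chi^2_-(+B/\epsilon)+\chi^2_-(-B/\epsilon)=1+Q$, where $Q=1-\chi^2(B/\epsilon)-\chi^2(-B/\epsilon)$ is a bump supported in $\set{\abs{B}\le5\epsilon/3}$; adding the two relations yields $u+Qu\equiv J_\alpha w^{-}_{\lambda,\alpha}[g]+\sum_\beta J_\beta w^{+}_{\lambda,\beta}[g_\beta]$. It remains to see that $Qu\in\vB_0^*$, i.e.\ that the minimum generalized eigenfunction $u$ has no ``zero radial velocity'' part. For this I would use Theorem~\ref{prop:besov-space-setting} to write $u=\delta(H-\lambda)\psi$ with $\psi\in\vB$, so $u$ is a linear combination of $R(\lambda\pm\i0)\psi$; the radiation bounds \eqref{eq:MicroB} give $\chi_-(\pm B/\epsilon')R(\lambda\pm\i0)\psi\in\vB_0^*$, and choosing $\epsilon$ small enough relative to the constant $\epsilon'$ there one has $Q=Q\chi_-(\pm B/\epsilon')$, whence $QR(\lambda\pm\i0)\psi\in\vB_0^*$ (functions of $B$ of the indicated type map $\vB_0^*$ into itself). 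This establishes \eqref{eq:as}.

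\emph{Part 2).} Given \eqref{eq:as} for some $u\in\vE_\lambda$ and $(g_\beta)_{\lambda^\beta<\lambda_0}\in\vG$, I would use that $\Gamma^-(\lambda)^*\colon\vG\to\vE_\lambda$ is a bijection (Theorem~\ref{prop:besov-space-setting}) to write $u=\Gamma^-(\lambda)^*k$ for a unique $k=(k_\gamma)\in\vG$. Applying Part 1) to each summand $\Gamma^-_\gamma(\lambda)^*k_\gamma$ and adding gives, modulo $\vB_0^*$,
\[
u\equiv\sum_{\lambda^\gamma<\lambda_0}J_\gamma w^{-}_{\lambda,\gamma}[k_\gamma]+\sum_{\lambda^\beta<\lambda_0}J_\beta w^{+}_{\lambda,\beta}\bigl[(S(\lambda)k)_\beta\bigr].
\]
Comparing with \eqref{eq:as} (and using $J_\alpha w^{-}_{\lambda,\alpha}[g]=\sum_\gamma J_\gamma w^{-}_{\lambda,\gamma}[\hat g_\gamma]$) and invoking the asymptotic orthogonality modulo $\vB_0^*$ of the quasi-modes — so that a vanishing combination of them forces all amplitudes to vanish — one reads off $k=\hat g$ and $(S(\lambda)k)_\beta=g_\beta$. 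Hence $u=\Gamma^-(\lambda)^*\hat g=\Gamma^-_\alpha(\lambda)^*g$ and $g_\beta=(S(\lambda)\hat g)_\beta=S_{\beta\alpha}(\lambda)g$ for all $\lambda^\beta<\lambda_0$, as required.

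The hard part will be the two microlocal inputs. First, the claim $Qu\in\vB_0^*$, that minimum generalized eigenfunctions are concentrated at nonzero radial velocity, which I reduce above to the radiation estimates \eqref{eq:MicroB} and the surjectivity of $\delta(H-\lambda)$. Second, and more essential, the linear independence modulo $\vB_0^*$ of the incoming and outgoing channel quasi-modes, across distinct channels and opposite signs; this rests on separating their different oscillatory phases in the many-body geometry and is the genuinely analytic ingredient, established in \cite{Sk3}. The remainder is formal manipulation with $S(\lambda)$, adjoints, and the partition of unity $\chi_+^2+\chi_-^2=1$.
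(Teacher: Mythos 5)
Your Part 1) is essentially the paper's argument, reshuffled. The paper decomposes $u=\chi^2_-(B/\epsilon)u+\chi^2_+(B/\epsilon)u$ exactly (no remainder), identifies the first piece via Lemma \ref{lemma:besov-space-setting}, and handles the second by substituting $\Gamma^-_\alpha(\lambda)^*g=\sum_\beta\Gamma^+_\beta(\lambda)^*g_\beta$ and using \eqref{eq:bminus22} to trade $\chi^2_+(B/\epsilon)$ for $\chi^2_-(-B/\epsilon)$ on each outgoing quasi-mode. Your variant with the remainder $Q$ supported near $B=0$ also works, and your disposal of $Qu$ via surjectivity of $\delta(H-\lambda)$ and \eqref{eq:MicroB} is legitimate, though it is a detour that additionally leans on Theorem \ref{prop:besov-space-setting}. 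In either version the substantive input you gloss over is the paper's Step I: the quotient norm $\norm{\cdot}_{\rm quo}=\limsup_\rho\rho^{-1/2}\norm{F_\rho\,\cdot\,}$ and the Parseval formula \eqref{eq:pars} are what make the infinite sum of quasi-modes and the infinite sum of $\vB^*_0$-errors converge when you apply Lemma \ref{lemma:besov-space-setting} termwise; this needs to be set up, not just cited as "the analytic ingredient".

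Part 2) is where you genuinely diverge, and where there is a gap. The paper sets $u'=u-\Gamma^-_\alpha(\lambda)^*g\in\vE_\lambda$; by Part 1) the incoming terms cancel, so $u'$ has a purely outgoing representation, whence $\chi_-(B/\epsilon)u'\in\vB_0^*$ by \eqref{eq:bminus22}, and the radiation-condition uniqueness theorem \cite[Corollary 1.10]{AIIS} forces $u'=0$; then the one-sign Parseval identity \eqref{eq:pars} gives $g_\beta=S_{\beta\alpha}(\lambda)g$. Your route instead requires asymptotic orthogonality modulo $\vB^*_0$ of incoming against outgoing quasi-modes \emph{within the same channel}, i.e. $\rho^{-1}\inp{F_\rho J_\beta w^{-}_{\lambda,\beta}[k],F_\rho J_\beta w^{+}_{\lambda,\beta}[h]}\to 0$. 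This case is not covered by the cross-term estimates of Step I (which treat only equal-sign pairs, where either $\inp{u^\beta,u^{\beta'}}_{\vH^b}=0$ or the geometric bound on $\abs{x^c}$ applies), it is not established in the form you invoke, and the paper's proof is structured precisely to avoid it. It is true — the phase $\e^{2\i K_b}$ is non-stationary in the radial variable, so the radial integral is $\vO(1)$ and the cross term is $\vO(\rho^{-1})$ — but you must prove it; without it, comparing the two representations of $u$ does not let you separate the incoming amplitude from the outgoing ones. Note finally that the paper's appeal to \cite[Corollary 1.10]{AIIS} is the step that ties the asymptotics back to the equation $(H-\lambda)u'=0$; in your version that role is played by the bijectivity of $\Gamma^-(\lambda)^*:\vG\to\vE_\lambda$ from Theorem \ref{prop:besov-space-setting}, which in \cite{Sk3} rests on the same uniqueness theorem, so nothing is saved and you should verify the logical order to avoid circularity.
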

\begin{thm}[{\cite{Sk1}}]\label{thm:strongC} \begin{enumerate}[1)]
  \item\label{item:1str} For any channel $\alpha$ 
    \begin{enumerate}[a)]
    \item \label{item:g} the operators
      $\Gamma^\pm_{\alpha}(\lambda)\in \vL(\vB (\mathbf X), \vG_a)$
      are strongly continuous at any stationary complete energy
      $\lambda\in\vE^\alpha$.
    \item \label{item:f} the operators
      $\Gamma^\pm_{\alpha}(\lambda)^*\in \vL( \vG_a,L^2_{-s} (\mathbf
      X))$, $s>1/2$,
      are strongly continuous in 
      $\lambda\in\vE^\alpha$.
\end{enumerate}
\item\label{item:2str} The $\vL(\vG)$-valued function
  $S(\lambda)=\parb{S_{\beta\alpha}(\lambda)}_{\beta\alpha}$  is
  strongly continuous at any stationary  complete $\lambda\in
  \vE$. Moreover the restriction of $S(\lambda)$ to the energetically
  open sector of $\vG$ is unitary at  any such energy $\lambda$.
\end{enumerate}
\end{thm}
\begin{remarks}\label{remark:R}
  \begin{enumerate}[i)]
  \item \label{item:AC1}
  For the stated results as well as for Section
  \ref{sec:Stationary completeness for the 3} the auxiliary one-body
  potential $\breve I_{a,R}$ with $R\geq 1$ taken large is a
  convenient tool. To explain its usage for the existence of the channel wave operators 
 \eqref{eq:wave_op} we introduce
\begin{align}\label{eq:brevH}
  \begin{split}
   \brh_{a,R}&=p^2_a+\brI_{a,R},\quad  \brh_{a}=\brh_{a,1},\quad
  \brH_{a,R}=H^a\otimes I+I\otimes \brh_{a,R},\\
&\brH_{a}=\brH_{a,1}\mand \brR_a(z)=(\brH_a-z)^{-1}\text{ for } z\in\C\setminus \R.
 \end{split}
\end{align} Recalling  that the channel wave operators 
 \eqref{eq:wave_op} are defined in terms of  $\breve I_{a}=\breve
 I_{a,1}$ through the definition of $\breve  S_a^\pm (p_a,t)$ we may similarly
 introduce wave operators  
\begin{align}\label{eq:wave_op30}
  \begin{split}
    \breve w_{a,R}^{\pm}&=\slim_{t\to \pm\infty}\e^{\i t\brh_{a,R}}\e^{-\i
  \breve  S_a^\pm (p_a,t)},\\
\breve W_{\alpha,R}^{\pm}&=\slim_{t\to \pm\infty}\e^{\i t\brH_{a,R}}J_\alpha\e^{-\i
  (\breve  S_a^\pm (p_a,t)+\lambda^\alpha t)}=J_\alpha \breve w_{a,R}^{\pm}.
  \end{split}
\end{align}  Now the well-definedness of  the channel wave operators 
 \eqref{eq:wave_op} follows by combining \cite{Sk1} with the constructions
 $\brH_{a,R}$. Indeed to show the existence of the limits
 \begin{equation*}
   \lim_{t\to \pm\infty}\e^{\i tH}J_\alpha\e^{-\i
  (\breve  S_a^\pm (p_a,t)+\lambda^\alpha t)}\varphi=\lim_{t\to
  \pm\infty}\e^{\i tH} \e^{-\i t \brH_{a,R}}J_\alpha\breve w_{a,R}^{\pm}\varphi,
 \end{equation*} it suffices to consider $\varphi$ localized as 
 $\varphi=f_1(k_\alpha)\varphi$, where $f_1$ is any standard support
 function supported near a
 fixed $\lambda_0\notin \vT_{\p}(H)$. Furthermore we can assume that
 the Fourier transform $\hat \varphi$ is supported away from
 collision planes, and with this assumption the proof reduces to the
 existence of the limits
 \begin{subequations}
 \begin{equation}\label{eq:aux}
   \slim_{t\to \pm\infty}\,\e^{\i tH}
  \Phi_{a,R}^\pm  \e^{-\i t\brH_{a,R}},
 \end{equation} where
\begin{equation}\label{eq:aux2}
  \Phi_{a,R}^\pm={f_2}(H)M_a N^a_\pm M_a
   {f_2}(\brH_{a,R}),\,  f_2\succ f_1.
\end{equation}  
 \end{subequations}   Here  the factors $M_a$ and $ N^a_\pm$ are suitable `localization operators' from 
\cite[Section 3]{Sk1}, to be elaborated on  in Subsections 
\ref{subsec:Yafaev's constructions} and \ref{subsec:A phase-space partition of unity}. There are Mourre estimates for $H$ as well as for $\brH_{a,R}$
at $\lambda_0$ provided   $R\geq 1$ is chosen large
enough  (see  \cite[Subsection 5.1]{Sk1} for details). Consequently  if $f_2$ is
also narrowly supported,
then  the procedure of \cite{Sk1} yields the existence  of \eqref{eq:aux}.
\item\label{item:AC20} The existence of the limits \eqref{eq:aux}
  relies on Kato-smoothness bounds obtained by concrete commutator
  bounds. A related technique, on which Section \ref{sec:Stationary
    completeness for the 3} of the present paper will be based, concerns   `$Q$-bounds' of
  the resolvent. These take the form 
\begin{equation}\label{eq:2boundobtain3300}
  \sup _{\Im z\neq 0}\norm[\big]{\abs{Q{f_1}
   (H)}{R(z)}}_{\vL(\vB,\vH)}< \infty.
\end{equation} In most cases the relevant $Q$-bounds are derived by
computing $\i [H,\Psi]$ for a good choice of a bounded self-adjoint operator $\Psi$
(a
'propagation observable') and then extracting its (dominating)
positive part. See for example \cite[Lemma 2.2]{Sk2} for a precise
assertion  (this result  also appears as
\cite[Lemma B.1]{Sk1}). We have stated   complete lists  of
$Q$-bounds needed in the paper (for $H$ as well as for auxiliary
Hamiltonians) in (\ref{eq:kato10}), \eqref{eq:phiweak}
and (\ref{eq:phiweak22}).

\item \label{item:AC2} In \cite{Sk1} (and above) it is convenient to
  use an 'extended lattice structure' (which include the spaces
  $\bX^a$ as collision planes) rather than $\vA$, see
  \cite[Subsection 3.1]{Sk1}. This allows us to consider the operators
  $\brH_{a,R}$ on an equal footing with the Hamiltonian $H$, in
  particular there are Mourre estimates for these operators, and
  indeed in Subsection \ref{subsec:Conclusion and generalizations} we
  take $R$ large to assure a Mourre estimate at the given energy of
  interest. 

Although  the operators $M_a$ in \cite{Sk1} are  constructed from
the extended lattice structure this setup  
    is
  not appropriate for our  analysis in   Section
  \ref{sec:Stationary completeness for the 3}. Rather  the  `channel localization operators' $M_a$ from
Subsection \ref{subsec:Yafaev's constructions} are  constructed from the original lattice
  structure. However, still a  different  type  of
  lattice  structure will be needed as a technical tool for treating
  some commutators. This third structure takes the following form: For
  any $a\in \vA_1$ we consider  the family $\set{\bX, \set{0}, \bX_a,
    \bX^a}$ as the collection of collision planes (defining a unique
  structure). For any such $a$ this structure works well for deriving
  $Q$-bounds of the resolvent of $\brH_{a}$, as demonstrated in Subsection \ref{subsec:Yafaev's constructions}.

\item \label{item:AC3
} Finally we note that all the stated results in this section differ
slightly from their origin  \cite{Sk1} in that the above one-body
wave operators $\breve w_{a,R}^{\pm}$ appeared in a slightly different
form in   \cite{Sk1} (the ones there involved a solution to the
Hamilton-Jacobi equation for the potential $ I_{a,R}$ rather than for
$ I_{a,1}$ as above). As a result our account on \cite{Sk1} presented
in this subsection  appears
slightly  cleaner, we think. The relationship between the wave
operators \eqref{eq:Atomwave_op} and \eqref{eq:wave_op} is explained
in \cite[Remarks 2.2]{Sk1}. In particular the results in Subsection
\ref{subsec: Atomic 3-body model} follow from those  presented above and Theorem
\ref{them:stat-compl-enerMain} stated below.
\end{enumerate} 
\end{remarks}
\section{Stationary complete energies  for the $3$-body problem}\label{sec:Stationary completeness for the 3}

 The main result of the paper reads. 
\begin{thm}\label{them:stat-compl-enerMain} Suppose  Condition \ref{cond:smooth2wea3n12} and the
$3$-body condition  \eqref{eq:3body}.
  Then all $\lambda
  \in \vE=(\min \vT(H),\infty)\setminus {\vT_{\p}(H)}$ are stationary
  complete for $H$.
\end{thm}
 To prove this result we first fix any 
$\lambda_0>0$ and  impose   the following (simplifying) condition for
all $a\in \vA$:
\begin{equation}\label{eq:sim2} H^a \text{ does not  have positive eigenvalues}.
  \end{equation}
  Note that although indeed for a big class of potentials
  $\vT_\p(H)\cap \,\R_+=\emptyset$, cf. \cite{AIIS, FH}, the property
  \eqref{eq:sim2}  is not known under \eqref{eq:3body} and Condition
  \ref{cond:smooth2wea3n12}.

  We are first going to derive asymptotics of $\phi=R(\lambda+\i
  0)\psi$ in agreement with \eqref{eq:asres29} for $\lambda=\lambda_0$
  and for any $\psi\in L^2_\infty$ (henceforth fixed) yielding the
  desired completeness assertion for this $\lambda$ under
  \eqref{eq:sim2}. This task will occupy Subsections \ref{subsec:Yafaev's constructions}--\ref{subsec:Free channel
    contribution}. We devote then Subsection \ref{subsec:Conclusion and
    generalizations}  to doing the general case by adding to the
  previous pattern of proof an elementary   cut-off scaling argument.

 Recall the notation  $a_0=a_{\max}$,   $d= \dim \bX$ and $d_a= \dim \bX_a$.

\subsection{Yafaev's constructions and some $Q$-bounds}\label{subsec:Yafaev's
  constructions}

We need  to consider various conical
subsets of  $\mathbf X\setminus{0}$.
 Let for  $a\in\vA_1$ and $\varepsilon,\delta\in (0,1)$
\begin{align}\label{eq:primes}
  \begin{split}
    \mathbf X'_a&={\mathbf X_a }\setminus\cup_{{  b\gneq a,\,b\in \vA} }\,\mathbf
                 X_b={\mathbf X_a }\setminus\cup_{{b\not\leq a,\,b\in
                     \vA}\,
                 }\mathbf X_b,\\
\mathbf X_a(\varepsilon)&=\set{x\in \bX\mid\abs{x_a}>(1-\varepsilon)\abs{x}},\\
\mathbf \Gamma_a(\varepsilon)&=\parb{\mathbf X \setminus\set{0}}\setminus\cup_{{b\not\leq a,\,b\in \vA_1}
}\,\mathbf X_b(\varepsilon),\\
\mathbf Y_a(\delta)&=\mathbf X_a(\delta)\setminus\cup_{b\gneq a,\,b\in \vA_1}
\,\overline{\mathbf X_b(3\delta^{1/d_a})}.
\end{split}
\end{align} Here  the overline means topological closure in
$\mathbf X$. 
The structure of the sets $\mathbf X_a(\varepsilon)$, $\mathbf
\Gamma_a(\varepsilon)$ and $\mathbf Y_a(\varepsilon)$ is
${\R_+ V}$, where  $V$ is a
subset of the unit sphere $\S^{d-1}$ in
$\mathbf X$. For $\mathbf
X_a(\varepsilon)$ and $\mathbf Y_a(\varepsilon)$ the set $V$ 
 is relatively open, while for  $\mathbf
\Gamma_a(\varepsilon)$ the  set is  compact.

We also note that 
 \begin{equation}\label{eq:Gamma2}
   \forall a\in\vA_1\, \forall \varepsilon\in (0,1);\q \mathbf \Gamma_a(\varepsilon)\subseteq\cup_{b\leq a}  \,\mathbf X'_b.
\end{equation} This is a very elementary property under the  three-body condition
  \eqref{eq:3body} (for the  general case, see for example  \cite[Lemma 3.10]{Sk2}).

 Thanks to   \eqref{eq:Gamma2} we can for  any $a\in\vA_1$ and  any  $\varepsilon, \delta_0\in (0,1)$  write
  \begin{equation}\label{eq:deltaNBHa}
    \mathbf \Gamma_a(\varepsilon)\subseteq\cup_{b\leq a} \cup_{\delta\in (0,\delta_0]} \,\mathbf Y_b(\delta).
  \end{equation}

      As the reader will see later we will use (\ref{eq:deltaNBHa}) 
     with $\varepsilon=\epsilon^d$,   where $\epsilon>0$ is a small
      parameter in terms of which the Yafaev functions $m_a$ (for 
      $a\in\vA_1$ as well as for $a=a_0$) all depend from the very
      construction, see \cite[Subsection 3.1]{Sk2}.  

We let
for this parameter  $\epsilon>0$ and  $a\in \vA_1$ 
\begin{equation*}
  \varepsilon^a_k= k\epsilon^{d_a};\quad k=1,2,3,4.
\end{equation*}

\begin{lemma}[\cite{Sk2}]\label{lemma:ma1}
For any $a\in\vA_1$ the  function 
 $m_a:\mathbf X\setminus \set{0}\to \R$ (depending on  a  sufficiently
 small  parameter $\epsilon>0$) fulfils  the following
properties for any $b\in\vA_1$: 
\begin{enumerate}[1)]
\item\label{item:10a} $m_a$ is homogeneous of degree $1$.
\item\label{item:11a} $m_a\in C^\infty (\bX\setminus\set{0})$.
\item\label{item:12a} If $b\leq a$  and  $x\in \mathbf X_b(\varepsilon^b_1)$, then
  $m_a(x)=m_a(x_b)$.
\item\label{item:13a} If ${b\not\leq a}$ and $x\in \mathbf
  X_b(\varepsilon^b_1)$, then   $m_a(x)=0$.
\item\label{item:14a} If $a\neq {a_{\min}}$,  $x\in \bX\setminus
  \set{0}$  and $x\notin \bX_a(\varepsilon^a_3)$ (i.e. that $\abs{x^{a}}\geq \sqrt{\ \varepsilon^a_3(2-\varepsilon^a_3)}\abs{x}$), then   $m_a(x)=0$.
\end{enumerate} 
\end{lemma} 

\begin{lemma}[\cite{Sk2}]\label{lemma:m1} The  function 
 $m_{a_0}:\mathbf X\setminus  \set{0}\to \R$ (depending on  a  sufficiently
 small  parameter $\epsilon>0$) fulfils  the following
properties: 
\begin{enumerate}[i)]
\item\label{item:10b} $m_{a_0}$ is convex and homogeneous of degree $1$.
\item\label{item:11b} $m_{a_0}\in C^\infty (\mathbf X\setminus  \set{0})$.
\item\label{item:12b} If $b\in\vA_1$  and  $x\in \mathbf X_b(\varepsilon^b_1)$, then
  $m_{a_0}(x)=m_{a_0}(x_b)$.
\item\label{item:9b} $m_{a_0}=\Sigma_{a\in \vA_1}\,m_a$.
\item\label{item:13b} For any  $a\in \vA_1$ there exists  $c_a\geq
  1$  (depending on the
parameter  $\epsilon$): If $x\in\mathbf
  X_a(\varepsilon^a_1)$ obeys that for all  $b\in \vA_1$
  with  $b\gneq
  a $ the vector $x\not\in\mathbf
  X_b(\varepsilon^b_3)$, then 
  \begin{equation}\label{eq:goodEps}
    m_{a_0}(x)=m_a(x)=c_a \abs{ x_{a}}.
  \end{equation}
\item\label{item:15b} There exists $C>0$ (being independent of the
parameter  $\epsilon$) such that for all $x\in
X\setminus  \set{0}$ 
\begin{equation}
  \label{eq:compa}
  \abs{\nabla \parb{m_{a_0}(x)-\abs{x}}}\leq C\sqrt{\epsilon}.
\end{equation}
\end{enumerate} 
\end{lemma}

 The functions $m_a$ from Lemma \ref{lemma:ma1}
and $m_{a_0}$ from Lemma \ref{lemma:m1} lack  smoothness at 
$0\in \bX$. This deficiency is cured by
multiplying them by a suitable factor, say specifically by the factor
$\chi_+(2|x|)$. We adapt in the following these smooth modifications and will use (slightly
abusively) the same notation $m_a$ and $m_{a_0}$ for the smoothed out
versions of the Yafaev functions.  We may then consider the corresponding first order
operators $M_a$ (including  $M_{a_0}$) realized as self-adjoint
operators
\begin{equation}\label{eq:M_a0}
 M_a=2\Re(w_a\cdot p)=-\i\sum_{j\leq d}\parb{(w_a)_j\partial_{x_j}+\partial_{x_j}(w_a)_j};\quad  w_a=\mathop{\mathrm{grad}} m_a.   
\end{equation}
The operators $M_a$, $a\in\vA_1$, may  and will be be considered as `channel
localization operators', while  operators of the form $M_{a_0}$ (with
adjusted values of the parameter $\epsilon$)
primarily will enter as a technical quantities  controlling  commutators of  
the Hamiltonian and the channel
localization operators. This matter will be elaborated on below and further studied 
in the subsequent subsections. 

\subsubsection{$Q$-bounds for $H$}\label{subsubsec:Q-bounds for H} 
We will complete the present subsection by proving
various `$Q$-bounds'  to control the commutators
$\i[H,M_a]$. We connect (\ref{eq:deltaNBHa})  and Lemmas
\ref{lemma:ma1} and \ref{lemma:m1}. Although the (small) positive
parameter $\epsilon$ of 
Lemmas \ref{lemma:ma1} and \ref{lemma:m1}  can be chosen independently
we first
choose and fix the same small $\epsilon$ for the lemmas. (This
particular $M_{a_0}$  will be used in \eqref{eq:small9}.)
 
 Thanks to 
 Lemma  \ref{lemma:ma1}
\ref{item:13a} we  can record that  
\begin{equation}\label{eq:suppa}
  \supp m_a\subseteq 
\mathbf \Gamma_a(\epsilon^{d});\q a\in\vA_1.
\end{equation} Hence with  $\varepsilon=\delta_0:=\epsilon^{d}$ 
in \eqref{eq:deltaNBHa}  we obviously have obtained a covering of the
support of $m_a$. 
It turns out to be convenient to use a slightly refined covering, more
precisely given in terms of the sets
\begin{equation*}
   \bY'_{b}(\delta):= \bY_{b}(\delta)\cap {\bX_a(\varepsilon^a_4)};\q
   b\leq a,\,\delta \leq \delta_0.
\end{equation*} Thanks to  Lemma  \ref{lemma:ma1}
\ref{item:14a} it follows that 
\begin{equation*}
  \supp m_a  \subseteq\cup_{b\leq a} \cup_{\delta\in (0,\delta_0]} \,\mathbf Y'_b(\delta).
\end{equation*}
By compactness  we can choose $\delta_1,\dots, \delta_J\in
(0,\delta_0]$ and  $a_1,\dots, a_J\leq a$  such that
\begin{equation}\label{eq:suppa9}
  \supp m_a\subseteq 
\cup_{j\leq J} \,\,\mathbf Y'_{a_j}(\delta_j);\q J=J(a)\in \N.
\end{equation} 

We can 
simplify  (\ref{eq:suppa9})  under the  three-body condition
 \eqref{eq:3body} as 
\begin{align}\label{eq:3inclusion}
    \supp m_a\subseteq \begin{cases}\mathbf Y'_{a_{\min}} (\delta_{a_{\min}}),\q 
&\text{if } a=a_{\min},\\
\mathbf Y'_{a} (\delta_{a})\cup \mathbf Y'_{a_{\min}}
  (\delta^a_{a_{\min}}),\q 
&\text{if } a\in \vA_2.
    \end{cases}
  \end{align} 
  Here we may take $\delta_{a}=
  \delta_0=\epsilon^{d}$ for $a\in \vA_2$ in which case $\mathbf Y_{a} (\delta_{a})=\mathbf X_{a}
  (\delta_{a})$, while 
$\delta_{a_{\min}}, \delta^a_{a_{\min}}\leq
  \epsilon^{d}$ need to be taken 
  smaller. In particular for each $a\in\vA_1$ the corresponding
  $J$ in (\ref{eq:suppa9}) is either one or two. We prefer to use
  the uniform notation of (\ref{eq:suppa9}) rather than the more
  cumbersome notation of (\ref{eq:3inclusion}).

Now we need applications of Lemma \ref{lemma:m1} for
$\epsilon_1,\dots,\epsilon_J\leq \epsilon$ fixed as follows. Since
$\delta_j\leq \delta_0$, we can introduce positive
$\epsilon_1,\dots,\epsilon_J\leq \epsilon$ by the requirement
$\epsilon_j^{d_{a_j}}=\delta_j$. The inputs $\epsilon=\epsilon_j$ in
Lemma \ref{lemma:m1} yield corresponding functions, say denoted
$m_j$. In particular in the region $\bY_{a_j}(\delta_j)$ the function
$m_a$ from Lemma \ref{lemma:ma1} only depends on $x_{b_j}$ (thanks to
Lemma \ref{lemma:ma1} \ref{item:12a} and the property
$\bX_{a_j}(\delta_j)\subseteq \bX_{a_j}(\epsilon^{d_{a_j}})$), while
(thanks to Lemma \ref{lemma:m1} \ref{item:13b})
    \begin{subequations}
    \begin{equation}\label{eq:good0}
      m_j(x)=c_j\abs{x_{a_j}},\quad c_j=c_{a_j}.
    \end{equation}  Obviously $\abs{y'}$
    is
    non-degenerately  convex in $y'\in\bX_{a_j}\setminus \set{0}$, meaning that  the  restricted Hessian
    \begin{equation}\label{eq:HesRestri0}
      \parb{\nabla_{y'}^2
        \abs{y'}}_{|\bX_{a_j}\cap\set{y}^\perp}\text{ is
      positive definite at any }y\in \bX_{a_j}\setminus\set{0}.
    \end{equation}   
    \end{subequations}

These properties   can be applied as follows  using   
for  $b\in\vA_1$  the vector-valued first order operators 
\begin{equation}\label{eq:Ffield}
  G_b=\vH_b(x_b)\cdot
  p_b,\text{ where } \vH_b(x_b)=\chi_+(2\abs{x_b})\abs{x_b}^{-1/2}\parb{I-\abs{x_b}^{-2}\ket{ x_b}\bra{ x_\textbf{}}}.
\end{equation}  
 We  choose for the considered   $a\in\vA_1$ 
   a quadratic partition $\xi_1,\dots, \xi_J\in
C^\infty(\S^{d-1})$ (viz $\Sigma_j \,\xi_j^2=1$)
subordinate to the covering \eqref{eq:suppa9}. Then
we can write 
\begin{align*}
  m_a(x)=\Sigma_{j\leq J}\,\,m_{a,j}(x);\quad m_{a,j}(x)=\xi^2_j( \hat x) m_a(x),\quad \hat x=x/\abs{x},
\end{align*} and from the previous discussion it follows  that 
\begin{align*}
  \chi^2_+(\abs{x})m_{a,j}(x)=\xi_j^2( \hat x)\chi^2_+(\abs{x}) m_{a}(x_{a_j}),
\end{align*} as well as 
\begin{subequations}
\begin{align}\label{eq:Hes0}
  \begin{split}
 &p\cdot\parb{\chi^2_+(\abs{x})\nabla^2m_a(x)} p=\Sigma_{j\leq J}\,\,
                                 p\cdot\parb{\xi^2_j( \hat x)\chi^2_+(\abs{x}) \nabla^2m_a(x_{a_j})}p,\\&
=\Sigma_{j\leq J}\, \,G^*_{b_j}\parb{\xi^2_j( \hat x)\chi^2_+(\abs{x})\vG_j}G_{b_j};\quad 
                                                                  \vG_j=\vG_j(x_{a_j})\text{ 
                                                                  bounded}.   
  \end{split}
\end{align}
In turn using the convexity property of $m_j$ and the previous
discussion (cf. \eqref{eq:good0} and
\eqref{eq:HesRestri0})  we deduce the bound  
\begin{align}\label{eq:Hes_est}
   G^*_{a_j}\xi^2_j( \hat x)\chi^2_+(\abs{x})
  G_{a_j}\leq  2 p\cdot \parb{\chi^2_+(\abs{x})\nabla^2m_j(x)} p.
\end{align}

Here the right-hand side  is the `leading term' of  the commutator
$\tfrac 12 \i[p^2,M_j]$, where $M_j$ is given by \eqref{eq:M_a0} for  the
modification of 
$m_j$ given by the  function $\chi_+(2|x|)m_j(x)$. More precisely  for any real $f\in C_\c^\infty(\R)$
\begin{align}\label{eq:1712022apll}
	\begin{split}
		f(H)\parbb{2 p\cdot \parb{\chi^2_+(\abs{x})\nabla^2m_j(x)} p-\tfrac 12 \i[
H,&\chi_+(\abs{x})M_j\chi_+(\abs{x})] }f(H) \\&=\vO(\inp{x}^{-1-\mu}).
\end{split}
\end{align}  
\end{subequations}

Similarly  the leading term of  the commutator
$\i[H,M_a]$  is given as 
\begin{align}\label{eq:MaLeading1}
	\begin{split}
		&f(H) \i[
H,M_a] f(H) \\&=4f(H) p\cdot \parb{\chi^2_+(\abs{x})\nabla^2m_a(x)}
pf(H) +\vO(\inp{x}^{-1-\mu})
\\&=4\sum_{j\leq J}\, \,f(H) Q_j^*\vG_jQ_jf(H)
+\vO(\inp{x}^{-1-\mu});\q Q_j:=\xi_j( \hat x)\chi_+(\abs{x})G_{a_j}.
\end{split}
\end{align}

Next we combine the features 
\eqref{eq:Hes0}--\eqref{eq:1712022apll}  with \cite[Lemma 2.2]{Sk2},
the latter  applied concretely with the propagation observable 
\begin{subequations}
\begin{equation}\label{eq:PropBasic}
  \Psi=\Psi_j=\tfrac 12 f_1(H)\chi_+(\abs{x})M_j\chi_+(\abs{x})f_1(H),\q M_j=M_j(a),
\end{equation} where 
$ f_1$ is any  narrowly supported standard support
function obeying  $f_1=1$ in a
neighbourhood of a given $\lambda\not\in \vT_{\p}(H)$, cf. Remark
\ref{remark:R} \ref{item:AC20}. A commutator calculation (using the familiar  Helffer--Sj\"ostrand
formula, see \eqref{82a0} stated below) leads
to the basic `$Q$-bound' of \cite[Lemma 2.2]{Sk2} with $Qf_1 (H)$
obeying 
\begin{equation}\label{eq:HessionPos}
  {\abs{Qf_1 (H)}}^2=2f_1(H){ p\cdot \parb{\chi^2_+(\abs{x})\nabla^2m_j(x)} p }f_1(H),
\end{equation} cf.  \cite[(3.28b)]{Sk2},  and therefore in turn  to the $Q$-bounds 
\begin{align}\label{eq:2boundobtain33}
  \begin{split}
  \sup _{\Im z\neq 0}\norm{Q(a,j){f_1}&
   (H){R(z)}}_{\vL(\vB,\vH^{d})}< \infty;\\
&\quad Q(a,j)=\xi_j( \hat x)\chi_+(\abs{x})G_{a_j},\,\,j\le J=J(a).
 \end{split}
\end{align} 
\end{subequations}

We introduce for $a\in\vA_1$ functions $\xi^+_a$ and $\tilde\xi^+_a$ as follows.
First choose any $\xi_a\in
C^\infty(\S^{d-1})$  such that $\xi_a=1$ in
$\S^{d-1}\cap\mathbf  \Gamma_a(\varepsilon)$ and $\xi_a=0$ on
$\S^{d-1}\setminus  \mathbf \Gamma_a(\varepsilon/2)$. Choose then any $\tilde\xi_a\in
C^\infty(\S^{d-1})$ using this recipe   with $\varepsilon$
replaced by  $\varepsilon/2$. Finally  let  $\xi^+_a(x)=\xi_a(\hat
x)\chi_+(4\abs{x})$ and  $\tilde\xi^+_a(x)=\tilde\xi_a(\hat
x)\chi_+(8\abs{x})$,  and note  that
$\tilde\xi^+_a\xi^+_a=\xi^+_a$. Applied to $\varepsilon=\epsilon^{d}$ it follows from Lemma
 \ref{lemma:ma1} \ref{item:13a} that the channel localization operators $M_a$ fulfil   
\begin{align}\label{eq:partM}
  M_a= M_a\xi^+_a= \xi^+_aM_a=M_a\tilde\xi^+_a=\tilde \xi^+_aM_a,
\end{align} which in applications  provides `free factors' of
$\xi^+_a$ and $\tilde\xi^+_a$
where  convenient. In particular \eqref{eq:partM}  will  be  useful (under conditions and by
commutation)  
for replacing $H$ by $\brH_a$ (or vice versa) in the presence of a
factor $M_a$.

\subsubsection{$Q$-bounds for $\brH_a$}\label{subsubsec:Q-bounds for
  breveH} 
 Finally  we discuss  variations  of \eqref{eq:2boundobtain33}  for
 $\brH_a$, $a\in \vA_1$, rather than  for $H$. This depends on the
 same 
 covering (\ref{eq:suppa9})  and the same quadratic partition $\xi_1,\dots, \xi_J\in
C^\infty(\S^{d-1})$  
subordinate to the covering. We choose then again corresponding convex
functions $m_j$, however taken slightly differently: Now the construction for the
given parameter $\epsilon_j$ is based on the lattice structure
$\set{\bX, \set{0}, \bX_{a}, \bX^{a}}$ of collision planes rather than
the old one parametrized by $\vA$. In this way the commutator $\i[\brH_a,M_j]$ is as `good' as $\i[H,M_j]$  with the
old $M_j=M_j(a)$ (discussed above). This is due to the fact that $\bX^{a}$ now is considered as
a collision plane (making  the contribution from both of the potentials $V^a$ and $\brI_{a,1}$  well controlled). Although globally the old and the new $M_j$'s in
general are different, they coincide on the support of $\xi_j$ by the
proof of Lemma \ref{lemma:m1} (not repeated in this paper), so indeed the previous arguments work
and we can conclude the $Q$-bounds
\begin{align}\label{eq:2boundobtain3350}
  \begin{split}
  \sup _{\Im z\neq 0}\norm{Q(a,j){f_1}&
   (\brH_a){\brR_a(z)}}_{\vL(\vB,\vH^{d})}< \infty;\\
&\quad Q(a,j)=\xi_j( \hat x)\chi_+(\abs{x})G_{a_j},\,\,j\le J=J(a).
 \end{split}
\end{align}  The bound (\ref{eq:2boundobtain3350}) will be used below
and in the subsequent subsections to treat the commutator
$\i[\brH_a,M_a]$. Let us here note the following analogue of \eqref{eq:MaLeading1}
\begin{align}\label{eq:MaLeading2}
	\begin{split}
		&f(\brH_a) \i[
\brH_a,M_a] f(\brH_a) \\&=4f(\brH_a) p\cdot \parb{\chi^2_+(\abs{x})\nabla^2m_a(x)}
pf(\brH_a) +\vO(\inp{x}^{-1-\mu})
\\&=4\sum_{j\leq J}\, \,f(\brH_a) Q_j^*\vG_jQ_jf(\brH_a)
+\vO(\inp{x}^{-1-\mu});\q Q_j:=\xi_j( \hat x)\chi_+(\abs{x})G_{a_j}.
\end{split}
\end{align}  

\begin{subequations}
By repeating the analysis for  $b\in \vA_1$, $b\neq a$, using now the propagation observable
\begin{equation}\label{eq:PropBasic60}
  \Psi=\Psi_j=\tfrac 12
  f_1(\brH_a)M_a\chi_+(\abs{x})M_j\chi_+(\abs{x})M_af_1(\brH_a),\q M_j=M_j(b),
\end{equation} in combination with (\ref{eq:2boundobtain3350}) we can
then  deduce the somewhat similar 
 $Q$-bounds
\begin{align}\label{eq:2boundobtain33500}
  \begin{split}
  \sup _{\Im z\neq 0}\norm{Q_a(b,j){f_1}&
   (\brH_a){\brR_a(z)}}_{\vL(\vB,\vH^{d})}< \infty;\\
&\quad Q_a(b,j)=\xi_j( \hat x)\chi_+(\abs{x})G_{b_j}M_a,\,\,j\le J=J(b).
 \end{split}
\end{align}   
 Here $b_j\leq b$ and the functions $\xi_j$ are quadratic partition functions,
not relative to \eqref{eq:3inclusion}, but for 
 the  covering 
\begin{align}\label{eq:3inclusionsimp150}
    \supp m_b\subseteq \begin{cases}\mathbf Y'_{a_{\min}} (\delta_{a_{\min}}),\q 
&\text{if } b=a_{\min},\\
\mathbf Y'_{b} (\delta_{b})\cup\mathbf Y'_{a_{\min}}
  (\delta^b_{a_{\min}}),\q 
&\text{if } b\in \vA_2;
    \end{cases}
  \end{align} alternatively and more conveniently denoted in  the same
  way as before,  i.e. as  
\begin{equation}\label{eq:suppa950}
  \supp m_b\subseteq 
\cup_{j\leq J} \,\,\mathbf Y'_{b_j}(\delta_j).
\end{equation}
\end{subequations} The convex functions $m_j=m_j(b)$ used in
\eqref{eq:PropBasic60} are constructed from the original lattice
structure $\vA$  (as  in \eqref{eq:PropBasic} for $a$, but now for   $b$).

Note the appearance of the factor $M_a$ in
(\ref{eq:2boundobtain33500}). In practice, although $\i[\brH_a,M_j(b)]$
may not be treatable standing alone (in constrast to the case $b=a$
discussed above) the expression $M_a\i[\brH_a,M_j(b)]M_a$ is `good'
thanks to the support properties \eqref{eq:3inclusion} and
\eqref{eq:3inclusionsimp150}.  The above commutator argument used
for (\ref{eq:2boundobtain33500}) obviously depends on
\eqref{eq:MaLeading2}, the previous
$Q$-bounds \eqref{eq:2boundobtain3350} and various commutation. 

\subsection{A phase-space   partition of unity}\label{subsec:A phase-space partition of unity}
 We will in this
subsection use the `channel localization operators' $M_a$ from
Subsection \ref{subsec:Yafaev's constructions}  (used also in \eqref{eq:aux2}) to construct a certain
`effective partition of unity', say denoted $\Sigma_{a\in
  \vA_1}\,S_a\approx I$, which roughly will allow us to reduce the problem of
asymptotics to that of $\brR_a(\lambda+\i0)\psi$, $a\in \vA_1$.  For
that purpose some properties from \cite{Sk2} (not derived in the
seminal paper \cite{Ya3}) are needed.

We recall that 
  $a_0=a_{\max}$ and consider the corresponding operator
  $M_{a_0}=\Sigma_{a\in \vA_1}\,M_{a}$. We recall that the construction of the channel
localization operators $M_a=2\Re (p\cdot\nabla m_a)$ depends on a sufficiently small parameter 
$\epsilon>0$ (as before, in the following
  mostly  suppressed). In particular one can roughly think of
  $m_{a_0}=\Sigma_{a\in \vA_1} m_a\approx |x|$.

  For   $a\in \vA_2$  we introduce  operators $N^a$  of
the form 
\begin{align}\label{eq:propgaObs}
  \begin{split}
  N^a&=A_1 A_2^a(A_3^a)^2A_2^aA_1;\\
A_1&=\chi_+(
  B/\epsilon_0),\\
A_2^a &=\chi_-\parb{
  r^{\rho_2-1}r_\delta^a},\\
A_3^a&=\chi_-\parb{
   r^{\rho_1/2}B_\delta^ar^{\rho_1/2}},
  \end{split}
\end{align} where  $\epsilon_0>0$ is sufficiently  small as 
  determined by Mourre estimates at $\lambda$, cf.  Remarks
 \ref{remark:R}, 
\begin{align}\label{eq:parameters}
  1-\mu<\rho_2<\rho_1<1-\delta\text{ with } \delta\in (2/(2+\mu),\mu),
\end{align}
 and
$ r, B, r_\delta^a$ and $B_\delta^a$ are operators constructed by quantities from
\cite{De} ($r$ and $r_\delta^a$ are multiplication operators while
$B$ and $B_\delta^a$ are corresponding Graf vector field type
constructions).    
 More precisely 
 $r  $ is  a positive smooth function on $\mathbf X$ which,  apart from a trivial rescaling to assure
 Mourre estimates for the  Graf
vector field $\nabla r^2/2$  (as done in \cite [Subsection 5.1]{Sk1}),
is taken as   the function $r$ constructed in  \cite{De} (roughly one
can think of $r$ as 
$r(x)\approx |x|$ like the above function $m_{a_0}$, although their finer
properties are very different).  This  function $r$  partly plays the
role  of  a
`stationary time variable' compared to the usage of the real time
parameter in  \cite{De}. (It should not be mixed up with the function
$\abs{x}$.) The operator $B:=2\Re(p\cdot\nabla r)$. Let $r^a$ be  the same
function now constructed on $\mathbf X^a$ rather than on $\mathbf X$. Let then $r_\delta^a=
r^\delta r^a(x^a/r^\delta)$ and 
$B_\delta^a=2\Re\parb{p^a\cdot(\nabla r^a)(x^a/r^\delta)}$. For $a= a_{\min}$ we  take $
N^a=A_1^2$.

We recall (see  \cite [Subsection 5.2]{Sk1})  that for the above  (small)
$\epsilon_0>0$
    \begin{equation}\label{eq:MicroB}
      \chi_-( \pm B/\epsilon_0) R(\lambda\pm \i0)\psi'\in \vB_0^*
      \text{ for all }\psi'\in \vB.
    \end{equation}  At this point it should be noted that
    $\chi_-(  \pm B/\epsilon_0)\in \vL(\vB)$, cf.  \cite[Theorem
    14.1.4]{H1}. 
\begin{remarks}\label{remark:ExtndeNot}
  \begin{enumerate}[i)]
  \item \label{item:LAPext1} Thanks to \eqref{eq:thre00}, \eqref{eq:ppH} and
    \eqref{eq:thre} stated below \eqref{eq:MicroB} is also valid with
    $H$ replaced by the operator $\brH_a$ from \eqref{eq:brevH} (for
    which also $\brR_a(\lambda+\i 0)$ makes sense). See
    \eqref{eq:strongED_+SDY} for a related estimate. For these
    properties an extended lattice structure is used in the
    construction of the above operator $B$, cf. Remark \ref{remark:R}
    \ref{item:AC2}. 
\item \label{item:LAPext2} The construction of the above operators
  $M_a$ depends on the original  lattice structure $\vA$ only, cf. Remark \ref{remark:R}
    \ref{item:AC2}. 
\end{enumerate}
\end{remarks}

  Thanks to \eqref{eq:MicroB}  our problem is  reduced to  the
  asymptotics of $ \chi_+( B/\epsilon_0)R(\lambda+\i0)\psi$. It  is
  further reduced thanks to the following elementary estimate \eqref{eq:small}, cf.  \cite[Lemma
  5.1]{Sk2} and its proof. Recall the concept of order $t\in \R$ of an operator $T$
  as defined in Subsection \ref{subsec:body Hamiltonians, limiting absorption
  principle  and notation} and there  expressed as
$T=\vO(\inp{x}^{t})$. Recall also the   Helffer--Sj\"ostrand
formula, assuming here that $T$ is self-adjoint and that $f\in C ^\infty(\R)$,
\begin{align}\label{82a0}
  \begin{split}
 f(T) 
&=
\int _{\C}(T -z)^{-1}\,\mathrm d\mu_f(z)\, \text{ with}\\
&\mathrm d\mu_f(z)=\pi^{-1}(\bar\partial\tilde f)(z)\,\mathrm du\mathrm dv;\quad 
z=u+\i v.   
  \end{split}
\end{align} The function  $\tilde{f}$ is an `almost analytic' extension of
$f$, which in this case may be taken compactly supported. However the 
formula \eqref{82a0} extends to more general  classes of  functions
and  serves as a standard tool for commuting  operators. Since  it
will be used only tacitly in this paper  the interested  reader might benefit from
consulting 
 \cite[Section 6]{Sk1} which  is devoted to
applications of \eqref{82a0} to  $N$-body Schr\"odinger operators,
hence being equally relevant for the present paper. Finally   recall the generic
  notation $\inp{T}_{\varphi}=\inp{\varphi,T\varphi}$.

\begin{lemma}\label{lem:compBogB} Let $f_1,f_2$ and $f_3$ be standard
  support functions with 
$f_3\succ f_2\succ f_1 $.
 Let
 \begin{equation}\label{eq:tildeM}
   M=\Sigma_{a\in
   \vA_1}\,\parb{f_3(H)M_{a}f_3(H)}^2.
 \end{equation}
  Assuming that  the  positive parameter
  $\epsilon$ in the construction of the operators $M_{a}$ is
  sufficiently small, it then follows that 
  \begin{equation}
    \label{eq:small}
    \chi_-\parb{\tfrac{8 n}{\epsilon_0^2}M}\chi_+(B/\epsilon_0)f_1(H)=\vO(\inp{x}^{-1/2});\q
    n=\# \vA_1.
  \end{equation}
\end{lemma}
\begin{proof} 
Thanks to \cite[Lemma
5.1]{Sk2}  (which is a consequence of  Lemma \ref{lemma:m1}
\ref{item:15b})  we can record the bound
 \begin{equation}
    \label{eq:small9}
    \chi^2_-(2M_{a_0}/\epsilon_0)\chi_+(B/\epsilon_0)f_1(H)=\vO(\inp{x}^{-1/2}).
  \end{equation}

 Introducing 
  \begin{equation*}
    S=f_2(H)\chi_-\parb{\tfrac{8 n}{\epsilon_0^2}M}
  \chi_+^2(2M_{a_0}/\epsilon_0)\inp{x}^{1/2}f_1(H),
  \end{equation*} it then suffices 
thanks to \eqref{82a0} 
 to  show that
  $S=\vO(\inp{x}^{0})$.

  We estimate for any $\brp\in L^2_\infty\subseteq \vH$ by commutation
  using \eqref{82a0}:
  \begin{align} \label{eq:commsmall0}
  \begin{split}
&\tfrac{\epsilon_0}3 \norm {{S\brp}}^2\\
&\leq \inp{2M_{a_0}-\epsilon_0 I
  }_{S\brp}+C_1\norm{\brp}^2 \\
&\leq -\epsilon_0 \norm{S\brp}^2+2\Sigma_{a\in
  \vA_1}\norm{f_3(H)M_af_3(H)S\brp}\,\norm{S\brp}+C_1\norm{\brp}^2 \\&
\leq -\epsilon_0 \norm{S\brp}^2+\tfrac{2 n}{\epsilon_0}\inp{M}_{S\brp}+\tfrac{\epsilon_0 }2\norm{S\brp}^2
+C_1\norm{\brp}^2 
\\&\leq -\tfrac{\epsilon_0 }2\norm{S\brp}^2+\tfrac{4}8\epsilon_0 \norm{S\brp}^2
+C_2\norm{\brp}^2 =C_2\norm{\brp}^2. 
 \end{split} 
\end{align}

By repeating  the estimation
\eqref{eq:commsmall0}  with $S\brp$ replaced by 
$\inp{x}^{s}S\inp{x}^{-s}\brp$, $s\in \R\setminus\set{0}$,  we
conclude  the desired zero  order 
 estimate.
\end{proof}

 As in Lemma \ref{lem:compBogB} we let $f_1,f_2$ and $f_3$ be standard support functions with
$f_3\succ f_2\succ f_1 $, now assuming  the additional property that $f_1=1$ in a neighbourhood of
$\lambda$.  Let $h\in C^\infty(\R)$ be the function
  $h(t)=\chi_+^2\parb{\tfrac{8 n}{\epsilon_0^2}t}/t$ with  $n=\# \vA_1$. 

 Thanks to \eqref{eq:MicroB} (with
`plus') and Lemma \ref{lem:compBogB}  we  conclude that 
\begin{equation*}
  \phi-f_2(H)h(M){M} A^2_{1}
   f_1(H)\phi\in \vB_0^*;\q \phi=R(\lambda+\i
  0)\psi,\, \psi\in L^2_\infty.
\end{equation*} 

Hence our goal is to extract 
the asymptotics of the second term. Of course we can assume that
$\psi= f_1(H)\psi.$ After commutation it then suffices to consider
the sum 
\begin{equation*}
  \sum_{a\in \vA_1}\,f_2(H)\,h(M) M_{a} A^2_{1}M_{a} R(\lambda+\i0)\psi.
\end{equation*} 

Recalling that $\vA_1=\vA_2\cup
  \set{a_{\min}}$ the above sum splits into the  sum over $\vA_2$
  and the contribution from $a_{\min}$. The latter is given as 
  $\Psi_{a_{\min}}\phi$ with
  \begin{subequations}
\begin{equation}\label{eq:frepHI}
  \Psi_{a_{\min}}=f_2(H)\,h(M) M_{a_{\min}} A^2_{1}M_{a_{\min}}
  f_2(H). 
\end{equation}
 For $a\in \vA_2$ one  easily verifies  by further commutation that
\begin{align*}
  f_2(H)\,h(M)&M_a \parb{A^2_{1}-N^a}
  M_a f_2(H)\phi- S^a_1\phi-S^a_2\phi\in\vB_0^*;\\
S^a_1&=f_2(H)\,h(M)M_a {A_{1}\chi^2_+\parb{
  r^{\rho_1/2}B_\delta^ar^{\rho_1/2}}A_{1}}
  M_a f_2(H),\\
S^a_2&=f_2(H)\,h(M)M_a {A_{1}\chi_+\parb{
    r^{\rho_2-1}r_\delta^a}(A^a_{3})^2\chi_+\parb{
  r^{\rho_2-1}r_\delta^a}A_{1}}
  M_a f_2(H).
\end{align*} 
Introducing similarly the  zero order operators
\begin{equation}\label{eq:stachaModbb}
  \Psi_a=f_2(H)\,h(M) M_a N^a M_a
    f_2(H)\mand \Psi_0= \sum_{a\in\vA_2}\parb{S^a_1+S^a_2}, 
\end{equation} 
 we conclude that 
\begin{equation}\label{eq:phasDEC}
  \phi-\Psi_{a_{\min}}\phi-\sum_{a\in\vA_2} \Psi_a\phi -\Psi_0\phi\in \vB_0^*.
\end{equation} 
\end{subequations}

Effectively $\Psi_{a_{\min}}\phi$ and $\Psi_0 \phi$ are supported in
the `free region' and should therefore have corresponding outgoing
free asymptotics, although the second term is more subtle than the
first one. On the other hand $\Psi_a\phi$, $a\in\vA_2$, should have
asymptotics given by outgoing quasi-modes in the variable $x_a$.  We
will confirm this picture by an  analysis involving  notation and results from
Appendix \ref{sec:Resolvent estimates}. The terms
$\Psi_{a_{\min}}\phi$, $\Psi_a\phi$ and $\Psi_0\phi$  are treated 
in
Subsections \ref{subsec:Free channel term}, \ref{subsec:Near a collision plane} and \ref{subsec:Free
  channel contribution}, 
respectively.

\subsection{Easy free channel term $\Psi_{a_{\min}}\phi$}\label{subsec:Free channel term}
In this subsection we show  that the contribution  to $\phi$ from  the second  term
$\Psi_{a_{\min}}\phi$ in
\eqref{eq:phasDEC}   conforms  with \eqref{eq:asres29}.

Recalling  the
operator $\brH_{a_{\min}}$ and $\brR_{a_{\min}}(z)$ from \eqref{eq:brevH} we  let
\begin{align*}
  \breve\Psi_{a_{\min}}=f_2(\brH_{a_{\min}}) h(M) M_{a_{\min}}  A^2_{1} M_{a_{\min}}
    f_2(H),
\end{align*} and  note that 
\begin{equation*}
  \parb{   \Psi_{a_{\min}}- \breve \Psi_{a_{\min}}}\phi\in \vB_0^*.
\end{equation*}  
  Thanks to the properties 
\begin{equation}\label{eq:thre00}
   \vT(\brH_{a_{\min}})=\set{0}\mand \sigma_{\pupo}(\brH_{a_{\min}})= \sigma_{\pupo}( \brh_{a_{\min}})\subseteq (-\infty, 0],
\end{equation} 
 there is a Mourre estimate for $\brH_{a_{\min}}$
at the  positive energy $\lambda$.  
  By a  resolvent equation  we are consequently lead to   
write  (here computing formally)
\begin{align}\label{eq:res100}
  \begin{split}
  \breve \Psi_{a_{\min}}\phi=  \breve \phi_{a_{\min}}:=&\brR_{a_{\min}}(\lambda+\i
  0)
  \breve\psi_{a_{\min}};\\\quad\breve\psi_{a_{\min}}&=\breve\Psi_{a_{\min}}\psi-\i
  \brT_{a_{\min}}\phi,\quad \brT_{a_{\min}}=\i  \parb{\brH_{a_{\min}}\breve\Psi_{a_{\min}}-\breve\Psi_{a_{\min}} H},\\
\quad\breve\Psi_{a_{\min}}\psi  &\in L^2_{\infty},\quad
  \brT_{a_{\min}}=   \vO(\inp{x}^{-1}).  
  \end{split}
\end{align} 

The  complete justification of \eqref{eq:res100}
depends on  Appendix
\ref{sec:Resolvent estimates} and will not be given, rather  we will
elaborate on  the main ingredients  only, to be done below. 
A  very similar (although more complicated) problem for $\Psi_0\phi$ is treated in
detail in 
Subsection \ref{subsec:Free channel contribution} with proper
reference to  Appendix
\ref{sec:Resolvent estimates}. Hence let us here  just  note that a possible
(and correct) interpretation of
\eqref{eq:res100} is given as
\begin{equation*}
  \breve\psi_{a_{\min}}\in L^2_{1/2}\mand \breve \phi_a=\brR_{a_{\min}}(\lambda+\i
  0)\breve\psi_{a_{\min}}=\lim_{\epsilon\to 0_+}\,\brR_{a_{\min}}(\lambda+\i
  \epsilon)\breve\psi_{a_{\min}}\text{ weakly in  }L^2_{-1}.
\end{equation*} 
Moreover we take for granted the existence of   a
sequence $L^2_\infty\ni \breve\psi_{a_{\min},n}\to \breve\psi_{a_{\min}}\in \vH$ with
convergence
\begin{equation}\label{eq:com200}
  \brR_{a_{\min}}(\lambda+\i
  0) \breve\psi_{a_{\min},n}\to  \brR_{a_{\min}}(\lambda+\i 0)\breve\psi_{a_{\min}}\text{ in }\vB^*\text{ for }n\to \infty.
\end{equation}

The first point to  record  is then that for each $n\in \N$ there exists  $g_n\in
\vG_{a_{\min}}=L^2({\S^{d-1}})$ such that 
\begin{equation}\label{eq:asres2oneOnb00}
\brR_{a_{\min}}(\lambda+\i
  0) \breve\psi_{a_{\min},n}
-2\pi \i \, \breve v^{+}_{\alpha_{\min},\lambda} [g_n]\in \vB^*_0,
\end{equation} see for example \cite{IS4}.
Here the second term  is 
labelled by the `free channel' $\alpha_{\min}$
defined uniquely for $a=a_{\min}$. (The  asymptotics
\eqref{eq:asres2oneOnb00} is more complicated to derive in the
context of Subsection \ref{subsec:Free channel contribution} since the
 classical conditions on the one-body potential are not
available there.)

By combining  \eqref{eq:com200}, 
  \eqref{eq:asres2oneOnb00} and the elementary computation
\begin{equation}\label{eq:eqBnd00}
  \norm[\big] { 
    \breve v^{+}_{\alpha_{\min},\lambda}
    [g]}^2_{\vB^*/\vB_0^*}=(4\pi)^{-1} \lambda^{-1/2}\norm{g}^2,
\end{equation}
 we conclude that there exists  $g\in \vG_{a_{\min}}$ such that
 $g_n\to g\in \vG_{a_{\min}}$,  which in turn   yields  (by taking
 $n\to \infty$) that 
\begin{equation}\label{eq:asres2oneOn00}
       \brR_{a_{\min}}(\lambda+\i
  0)\breve\psi_{a_{\min}}-2\pi \i \, \breve v^{+}_{\alpha_{\min},\lambda} [g]\in \vB^*_0
 \text{ for some }g\in \vG_{a_{\min}}.
    \end{equation} Consequently     the contribution to $\phi$ from
    $\Psi_{a_{\min}}\phi$   conforms  with
 \eqref{eq:asres29}, as wanted.

 The operator $\brT_{a_{\min}}$ in \eqref{eq:res100} has order
 $ \vO(\inp{x}^{-1})$, which just misses application of the limiting
 absorption principle bound \eqref{eq:LAPbnda} (for $H$ as well as for
 $\brH_{a_{\min}}$). Hence a more  detailed   computation of the
 operator is needed. Let us sketch it. First we note that 
\begin{align*}
  \breve\Psi_{a_{\min}}=f_2(\brH_{a_{\min}}) h(M) \tilde \xi^+_{a_{\min}}M_{a_{\min}}  A^2_{1} M_{a_{\min}}
    f_2(H),
\end{align*} where $\tilde\xi^+_{a_{\min}}$ is given as in 
\eqref{eq:partM} (with $\varepsilon>0$ chosen small as required for
\eqref{eq:partM}). From the construction \eqref{eq:Ipott} it follows
that 
the function 
 $\parb{\breve I_{a_{\min}}- I^{\rm lr}_{a_{\min}}}\tilde\xi^+_{a_{\min}}$ has compact support.
 After commutation this allows us to replace the factor of
 $f_2(\brH_{a_{\min}}) $ by $f_2(H)$. In fact the order of  the
 resulting 
 difference is  $ \vO(\inp{x}^{-\infty})$, since once a commutation introduces
 a derivative of $\tilde\xi^{+}_{a_{\min}}$, say denoted by
 $\tilde\xi_{a_{\min}}'$,  we can write
 $M_{a_{\min}}=\xi^+_{a_{\min}}M_{a_{\min}} $, commute and use that
 $\tilde\xi_{a_{\min}}'\xi^+_{a_{\min}}=0$. In conclusion, thanks to   the
 presence of a factor $M_{a_{\min}}$  the factors
 $f_2(\brH_{a_{\min}})$ and $f_2(H)$ can freely be interchanged (this
 will more generally be
 used in both directions), and hence we are led 
 to consider the commutator
 \begin{align*}
   \i  \parb{H\Psi_{a_{\min}}-\Psi_{a_{\min}} H}&=f_2(H)
   \ad_{\i g(H)}\parb{ h(M)} M_{a_{\min}}  A^2_{1} M_{a_{\min}}
    f_2(H)\\
&+f_2(H)h(M)
   \ad_{\i g(H)}\parb{ M_{a_{\min}}}  A^2_{1} M_{a_{\min}}
    f_2(H)\\
&+f_2(H)h(M)M_{a_{\min}}
   \ad_{\i g(H)}\parb{  A^2_{1}}  M_{a_{\min}}
    f_2(H)\\
&+f_2(H)h(M)M_{a_{\min}} A^2_{1}
   \ad_{\i g(H)}\parb{  M_{a_{\min}}} 
    f_2(H).
 \end{align*} Here $g(\lambda) =\lambda f_3(\lambda)$ (recall that
 we have fixed standard support functions
$f_3\succ f_2\succ f_1 $). 

With a proper application of \eqref{82a0} the first, second  and the
fourth terms are treated by the limiting
 absorption principle bound \eqref{eq:LAPbnda} and by
 \eqref{eq:Hes0}--\eqref{eq:2boundobtain33}   (valid for $H$) as well as
 (\ref{eq:2boundobtain3350}) and \eqref{eq:2boundobtain33500} (valid for 
 $\brH_{a_{\min}}$). For example, we can for the first term write $ M=\Sigma_{b\in
   \vA_1}\,\parb{f_3(H)M_{b}f_3(H)}^2$, and using \eqref{82a0} the commutator
$\i[H,M_b]$ will appear as a factor for each term in an expansion. Therefore in turn a factor
$Q(b,j)^*\vG_jQ(b,j)$ as in \eqref{eq:MaLeading1} will appear. The
factor $Q(b,j)$ to the right is then treated by
\eqref{eq:2boundobtain33}, while the factor $Q(b,j)^*$ to the left  is
treated by (\ref{eq:2boundobtain3350}) and
\eqref{eq:2boundobtain33500}. (The use of \eqref{eq:2boundobtain33500}
in our  paper is actually limited to treating $\ad_{\i g(H)}\parb{
  h(M)}$ this way.) 

The third term is treated (after first applying
 \eqref{82a0}) by the  $Q$-bound 
\begin{equation}\label{eq:2boundobtain3399}
  \sup _{\Im z\neq 0}\norm{Q{f_2}
   (H){R(z)}}_{\vL(\vB,\vH)}< \infty,\q Q=r^{-1/2}\sqrt{(\chi^2_+)'}\parbb{ B/\epsilon_0}
\end{equation} and its analogue with $H$ replaced by
$\brH_{a_{\min}}$, cf. \cite[(2.11b)]{Sk2}.

We conclude from the above computations that for bounded  (computable)
operators $\breve B_1, \breve B_2, \dots $ and for explicit `$Q$-operators'
\begin{equation}
  \label{eq:form00}
  \breve\psi_{a_{\min}}=
\sum_{Q_k} \,f_2 (\brH_{a_{\min}}){\breve Q_k}^*\breve B_k
 Q_k f_2 (H) R(\lambda+\i 0)\psi+\psi';
\quad \psi'\in\vB.
\end{equation} Upon substituting into \eqref{eq:res100} the obtained
representation of $\breve \Psi_{a_{\min}}\phi$ is  a
mathematically valid representation (to be  demonstrated for an 
analogous  model in Subsection \ref{subsec:Free channel
  contribution}). Note that thanks to the mentioned $Q$-bounds
indeed the
terms  
\begin{equation*}
  \parbb{\brR_{a_{\min}}(\lambda+\i 0)f_2 (\brH_{a_{\min}}){\breve Q_k}^*}\breve B_k
 \parbb{Q_k f_2 (H) R(\lambda+\i 0)\psi}
\end{equation*} are   well-defined elements of $\vB^*$. The
above approximation property \eqref{eq:com200}  follows easily from
this representation, see \eqref{eq:form} and \eqref{eq:n2} for an
elaboration in a similar context.

\subsection{Collision plane terms  $\Psi_a\phi$, $a\in \vA_2$}\label{subsec:Near a collision plane}
We show  that the contribution  to $\phi$ from any of the terms
$\Psi_a\phi$ in
\eqref{eq:phasDEC} with  $a\in \vA_2$  conforms  with \eqref{eq:asres29}. 

Recalling  the
operator $\brH_{a}$ and $\brR_a(z)$ from \eqref{eq:brevH} we  let for any 
fixed  $a\in \vA_2$ 
\begin{align*}
  \breve\Psi_a=f_2(\brH_a) h(M) M_a N^a M_a
    f_2(H),
\end{align*} and  note that   $\abs{x_a}\to \infty$ on $\supp
(A^a_2)$ when  $\abs{x}\to \infty$, and consequently that 
\begin{equation*}
  \parb{   \Psi_a- \breve \Psi_a}\phi\in \vB_0^*.
\end{equation*}  
  Thanks to  \eqref{eq:3body} and \eqref{eq:sim2}  the threshold set of $\brH_a$ is given as
  \begin{subequations}
    \begin{equation}\label{eq:ppH}
  \vT(\brH_a)= \sigma_{\pupo}( \brh_a)\cup \sigma_{\pupo}( H^a)\cup\set{0}\subseteq (-\infty, 0].
\end{equation}  Similarly 
\begin{equation}\label{eq:thre}
   \sigma_{\pupo}(\brH_a)= {\set{E=\lambda_1+\lambda_2|\,
       \lambda_1\in \sigma_{\pupo}( H^a),\, \lambda_2\in
       \sigma_{\pupo}( \brh_a)}}\subseteq (-\infty, 0].
\end{equation} 
  \end{subequations}
 Hence  there is a Mourre estimate for $\brH_a$
at the  positive energy $\lambda$.  
  By a resolvent equation  we are consequently led to   
write  (here computing formally)
\begin{align}\label{eq:res1}
  \begin{split}
  \breve \Psi_a\phi&=  \breve \phi_a:=\brR_a(\lambda+\i
  0) \breve\psi_a;\\&\quad\breve\psi_a={\breve\Psi_a\psi  -\i \brT_a\phi},\quad
  \brT_a=\i  \parb{\brH_a\breve\Psi_a-\breve\Psi_a H}=    \vO(\inp{x}^{\rho_1-\delta}).  
  \end{split}
\end{align} 

{\bf I} (justifying  \eqref{eq:res1}).  Although the indicated
order of $\brT_a$ appears too weak for applying \eqref{eq:BB^*a} the formula
\eqref{eq:res1} turns out to
be correct by Mourre estimates and their consequences \cite{{AIIS}}
and  a
variety of weak type estimates of Appendix \ref{sec:Resolvent
  estimates} (similar to those of \cite {Sk1,Sk2}).

We will prove 
that the function $\breve \phi_a$ in \eqref{eq:res1} is 
well-defined  as an element in $\vB^*$. This involves  an   a priori interpretation 
different from \eqref{eq:BB^*a}. From the outset $\breve \phi_a$ is the  weak
limit,   say in
 $L^2_{-1}$,
\begin{align*}
  \breve \phi_a= \lim_{\epsilon\to 0_+}  \brR_a(\lambda+\i
 \epsilon )  &\breve\Psi_a\psi    -\i \lim_{\epsilon\to 0_+}\brR_a(\lambda+\i
 \epsilon )\brT_aR(\lambda+\i \epsilon)\psi \\=\brR_a(\lambda+\i
 0 )  \breve\Psi_a\psi    -\i \lim_{\epsilon\to 0_+}&\brR_a(\lambda+\i
 \epsilon )\chi^2_-(2B/\epsilon_0)\brT_aR(\lambda+\i \epsilon)\psi\\&-\i \lim_{\epsilon\to 0_+}\brR_a(\lambda+\i
 \epsilon )\chi^2_+(2B/\epsilon_0)\brT_aR(\lambda+\i \epsilon)\psi. 
\end{align*}  
 It follows from \eqref{eq:BB^*a} that the first term to the right is
 an element in $\vB^*$ (since $\breve\Psi_a\psi\in \vB$). By
 commutation we can write (for the second  term)
 \begin{equation*}
   \chi^2_-(2B/\epsilon_0)\brT_a=\inp{x}^{-1}\breve B\inp{x}^{-1}\text{ with
   }\breve B \text{ bounded}.
 \end{equation*} This allows us to compute
 \begin{align*}
   \lim_{\epsilon\to 0_+}&\brR_a(\lambda+\i
 \epsilon )\chi^2_-(2B/\epsilon_0)\brT_aR(\lambda+\i \epsilon)\psi\\&=\lim_{\epsilon\to 0_+}\brR_a(\lambda+\i
 \epsilon)\chi^2_-(2B/\epsilon_0)\brT_aR(\lambda+\i
                                                                      0)\psi\quad(\text{by
                                                                      }\eqref{eq:LAPbnda}\\
&=\brR_a(\lambda+\i 0)\chi^2_-(2B/\epsilon_0)\brT_aR(\lambda+\i
                                                                      0)\psi\in
  \vB^*\quad(\text{by }\eqref{eq:BB^*a}.
 \end{align*} We conclude that the first and second terms
 are   on  a  form consistent with \eqref{eq:BB^*a},
\begin{equation*}
      \brR_a(\lambda+\i
 0)\psi'\in \vB^*
      \text{ for some }\psi'\in (L^2_1\subseteq)\, \vB.
    \end{equation*}
 
The third term is different. 
The  expression
$\chi^2_+(2B/\epsilon_0)\brT_aR(\lambda+\i 0)\psi$ defines  an element
of 
$L_s^2$, $2s=\delta-\rho_1\in (1/3,1)$, see \eqref{eq:form} and
\eqref{eq:orderQ} below.  However  we do not  prove  better decay.

Thanks to 
\eqref{eq:LAPbnda} and \cite[Theorem  1.8] {{AIIS}} there exists the operator-norm-limit
\begin{equation}\label{eq:strongED_+SDY}\vLlim_{\epsilon\to 0_+}\,\,
  \inp{x}^{-1}\brR_a(\lambda+\i
 \epsilon)\chi^2_+(2B/\epsilon_0)\inp{x}^{-s},\quad s=(\delta-\rho_1)/2,
\end{equation} cf. \cite[Appendix C]{Sk1}. 

Next we invoke  Appendix \ref{sec:Resolvent
  estimates}. By  a  `pedestrian' (although  lengthy)   expansion 
 the operator $\brT_a$ is seen to be  a sum of terms on 
 the form  $f_2 (\brH_a){\breve Q_k}^*\breve B_k
 Q_k f_2 (H)$, as  specified in Appendix \ref{sec:Resolvent estimates} (see Subsection \ref{subsec:Free channel term}
 for a treatment of  a simpler case). In agreement with
 \eqref{eq:Qsquared} the index  $k=1,\dots,10$ labels the different occurring
 forms of  $Q$-operators and in all cases $\breve B_k$ is bounded.
Using \eqref{eq:kato10},  \eqref{eq:orderQ}  and
\eqref{eq:LAPbnda}  we can take  the weak limit
\begin{equation}
  \label{eq:weak} Q_k f_2 (H)R(\lambda+\i 0)\psi:=\wvHlim_{\epsilon\to 0_+}\,\, Q_k f_2 (H)R(\lambda+\i \epsilon)\psi.
\end{equation}
Using that $\inp{x}^{s}f_2 (\brH_a){\breve Q_k}^*\breve B_k$
is bounded, \eqref{eq:strongED_+SDY} and (\ref{eq:weak}), we can compute
the above third term
 as follows. Taking limits in the weak sense in $L^2_{-1}$
\begin{align*}
  -\i &\lim_{\epsilon\to 0_+}\brR_a(\lambda+\i
 \epsilon )\chi^2_+(2B/\epsilon_0)\brT_aR(\lambda+\i
        \epsilon)\psi\\
&=-\i \lim_{\epsilon\to 0_+}\brR_a(\lambda+\i
  0)\chi^2_+(2B/\epsilon_0)\brT_aR(\lambda+\i \epsilon)\psi\quad\text{
  by \eqref{eq:strongED_+SDY}} \\
&=-\i\brR_a(\lambda+\i 0 )\chi^2_+(2B/\epsilon_0)\brT_aR(\lambda+\i
  0)\psi\quad\text{ by \eqref{eq:weak} }\\
&=-\i \lim_{\epsilon\to 0_+}\brR_a(\lambda+\i
 \epsilon )\chi^2_+(2B/\epsilon_0)\brT_aR(\lambda+\i
        0)\psi\quad\text{ by \eqref{eq:strongED_+SDY} }.
\end{align*} 

We may summerize our interpretation  of \eqref{eq:res1} as
\begin{equation*}
  \breve\psi_a\in (L^2_{s}\subseteq) \,\vH \mand \breve \phi_a=\brR_a(\lambda+\i
  0)\breve\psi_a=\lim_{\epsilon\to 0_+}\,\brR_a(\lambda+\i
  \epsilon)\breve\psi_a\text{ weakly in  }L^2_{-1}.
\end{equation*} 

Next we improve on this assertion  by claiming the existence of a
sequence $L^2_\infty\ni \breve\psi_{a,n}\to \breve\psi_a\in \vH$ with
convergence
\begin{equation}\label{eq:com2}
  \brR_a(\lambda+\i
  0) \breve\psi_{a,n}\to  \brR_a(\lambda+\i
  0)\breve\psi_a\text{ in }\vB^*\text{ for }n\to \infty.
\end{equation} 

To construct such regularization we first note the following form of
the vector $\breve\psi_a$, which follows  from the above
discussion and the explicit form of the operators $\breve Q_k$ in
Appendix \ref{sec:Resolvent estimates}. We decompose into a finite sum
(where  for each term 
the involved $Q$-operator is on one of the ten forms listed in
Appendix \ref{sec:Resolvent estimates})
\begin{subequations}
\begin{equation}
  \label{eq:form}
  \breve\psi_a=
\sum_{Q_k } \,f_2 (\brH_a){\breve Q_k}^*\breve B_k
 Q_k f_2 (H) R(\lambda+\i
        0)\psi+\psi';
\quad \psi'\in\vB.
\end{equation} Introducing $\chi_n=\chi_-(r/n)$  we are led to define,
cf.
 the proof of \cite [Lemma  9.12]{Sk1},
\begin{equation}\label{eq:n2}
  \breve \psi_{a,n}=
  \sum_{Q_k } \,f_2 (\brH_a){\breve Q_k}^*\chi_n\breve B_k
  Q_k f_2 (H) R(\lambda+\i
  0)\psi+\chi_n\psi'.
\end{equation} 
\end{subequations}
 Due  to \eqref{eq:BB^*a}, (\ref{eq:weak}),
\eqref{eq:phiweak}
 and  the facts  that
$\chi_n\breve B_k \to \breve B_k $ strongly on $\vH$ and
$\chi_n\psi'\to \psi'$ in $\vB$,
\begin{equation*}
  \sup_{\epsilon>0}\,\norm [\big]{\brR_a(\lambda+\i
  \epsilon) \parb{\breve\psi_{a}-\breve\psi_{a,n}}}_{\vB^*}\to 0\text{ for }n\to \infty.
\end{equation*}
 This  uniform convergence  yields  \eqref{eq:com2}.

{\bf II} (smoothness- and $\vB^*$-estimates).   
We need a variation of the bounds 
\cite [(7.3a) and (7.3b)]{Sk1}. By using the same  `propagation
observables' as for \eqref{eq:phiweak} we obtain by mimicking the proof
of  \cite [Lemma 7.1]{Sk1} that 
\begin{equation*}\label{eq:kato2}
  \breve Q_l{f_2} (\brH_a)\delta(\brH_a-\lambda){f_2} (\brH_a){\breve Q_k}^* \in
   \vL(\vH).
\end{equation*} 
 
 In combination with \eqref{eq:BB^*a} this leads to the following assertion for the  approximating
sequence of  \eqref{eq:n2}:
 \begin{subequations}
 \begin{equation}\label{eq:brevNapprox}
   \inp{\delta(\brH_a-\lambda)}_{\breve\psi_a-\breve\psi_{a,n}}\to
   0\text{ for }n\to \infty.
 \end{equation}

We will use these features below in the  computation  of 
 \begin{align*}
   \phi_a:=\sum_{\alpha=(a,\lambda^\alpha,
     u^\alpha),\,(H^a-\lambda^\alpha)u^\alpha=0,\,\lambda^\alpha<\lambda}
   2\pi\i  \,J_\alpha&
    \breve v^{+}_{\alpha,\lambda} [g_\alpha]-\brR_a(\lambda+\i
  0)\breve\psi_a \in \vB^*/\vB_0^*;\\&\quad
    g_\alpha:=\breve\Gamma_\alpha^+(\lambda)\breve\psi_a\in \vG_a.
 \end{align*} The operator 
 $\breve\Gamma_\alpha^+(\lambda)$ is the
 outgoing 
 restricted $\alpha$-channel wave operator for $\brH_a$ at energy $\lambda$, possibly
 defined as in Proposition \ref{prop:radi-limits-chann22} (with $H$
 replaced by $\brH_a$).  Here and henceforth we only consider channels $\alpha$ specified
 as in the summation (in particular with the
 first 
 component fixed as $a$). We argue that $\phi_a$ is   a well-defined:
 Thanks to
 the above discussion and arguments from \cite [Subsection 9.2]{Sk1}
 it follows that indeed $g_\alpha$ is a well-defined element of $\vG_a$. In
 fact  by   the resulting  extension of the Bessel inequality
 \eqref{eq:Besn}  for $\brH_a$, 
\begin{equation}\label{eq:ScatEnergy2233}
  \sum_{\lambda^\alpha<
  \lambda}\,\norm{g_\alpha}^2 \leq 
  \inp{\delta(\brH_a-\lambda)}_{\breve\psi_a}<\infty,
\end{equation} and 
 the general bounds, cf. \cite [(9.22)]{Sk1},
\begin{equation}\label{eq:eqBnd}
  C_1\sum_{\lambda^\alpha<
    \lambda}\,\norm{g_\alpha}^2 \leq\norm[\Big] {\sum_{\lambda^\alpha<\lambda} J_\alpha
    \breve v^{+}_{\alpha,\lambda} [g_\alpha]}^2_{\vB^*/\vB_0^*}\leq C_2\sum_{\lambda^\alpha<
    \lambda}\,\norm{g_\alpha}^2,
\end{equation} it follows that   $\phi_a$ is   a well-defined 
element of $\vB^*/\vB_0^*$.

 Next, we introduce  $\phi_{a,n}$ and $g_{\alpha,n}$ by replacing  in the above
expressions  for $\phi_{a}$ and $g_{\alpha}$
 all appearances of $\breve\psi_a$  by $\breve\psi_{a,n}$. 
 Similarly to \eqref{eq:ScatEnergy2233},  we can then record that 
\begin{equation}\label{eq:ScatEnergy2233b}
  \sum_{\lambda^\alpha<
  \lambda}\,\norm{g_\alpha-g_{\alpha,n}}^2 \leq 
  \inp{\delta(\brH_a-\lambda)}_{\breve\psi_a-\breve\psi_{a,n}}.
\end{equation} 
  \end{subequations}

{\bf III} (applying the  estimates).   Writing for given $\phi_1,\phi_2\in\vB^*$, 
$\phi_1 \simeq \phi_2$ if $\phi_1 - \phi_2 \in\vB^*_0$, we compute
considering now $\phi_a\in\vB^*$ as a representative for the corresponding
coset (see below for further elaboration)
\begin{align}\label{eq:long}\begin{split}
  &\phi_a\\
&\simeq \chi_-\parb{
  r^{\rho_2-1}r_\delta^a/2}\phi_a\\
&\simeq \chi_-\parb{
  r^{\rho_2-1}r_\delta^a/2}
\phi_{a,n}+o(n^0)\quad\quad\parb{\text{replacing 
  }\breve\psi_a\text{ by }\breve\psi_{a,n}}\\
& \simeq\chi_-\parb{
  r^{\rho_2-1}r_\delta^a/2}
\chi_-(mH^a)\phi_{a,n}+o(n^0)\quad\quad(\text{by  velocity bounds})\\
&\simeq o(n^0) +o(m^0)\quad\quad(\text{by  dominated convergence and spectral
  theory})\\
&\simeq 0.
\end{split}
\end{align} In the first step of (\ref{eq:long}) we used the appearance
of factors of $A^a_2$ in the definition of $\breve \Psi_a$, in the
second step \eqref{eq:com2} and
\eqref{eq:brevNapprox}--\eqref{eq:ScatEnergy2233b}, in the third step
a stationary  version of the so-called  minimal velocity bound 
(recalled in Appendix \ref{sec:Non-threshold analysis})
 and in the last steps we first fixed
a big $n$ and then $m=m(n)$ sufficiently big to conclude that the
distance from $\phi_a$ to $\vB^*_0$ is at most $\epsilon$, for any
prescribed $\epsilon>0$.  However we need to argue that for fixed
(big) $n$ indeed the approximation by taking $m$ correpondingly big
works.

 For this purpose we    record that with  $f^a_{m}:= 1_{\vT_\p(H^a)}-\chi_-(m\cdot)$
  \begin{subequations}
  \begin{align}\label{eq:2pp} 
    \begin{split}
   \norm {f^a_{m}&(H^a)\brR_a(\lambda+\i
     0) \breve\psi_{a,n}}_{\vB^*}\leq C_m\norm
                         {f^a_{m}(H^a)\inp{x_a}\breve\psi_{a,n}}_{\vH};\\
& C_m=\sup_{\inf \vT_\p(H^a)\leq \eta\leq 2/m}\, \norm{\brr_a(\lambda-\eta+\i
 0) \inp{x_a}^{-1}}_{\vL(L^2(\bX_a),\,\vB(\bX_a)^*)}.   
    \end{split}
  \end{align} Note that  (\ref{eq:2pp}) follows  by  using  the trivial inclusion
  $\set{\abs{x}\leq \rho}\subseteq \set{\abs{x_a}\leq \rho}$ and the
  spectral theorem on multiplication operator form for $H^a$
  \cite[Theorem VIII.4]{RS} (amounting to a partial
  diagonalization).  The
  sequence $(C_m)_1^\infty$ is  bounded (since it is decreasing). Rewriting $\vH=L^2\parb{\bX_a,L^2(\bX^a)}$
  and invoking the dominated convergence theorem and the Borel
  calculus for $H^a$, it follows  that
\begin{equation}\label{eq:firDec}
   \norm{f^a_{m}(H^a)\brR_a(\lambda+\i
     0) \breve\psi_{a,n}}_{\vB^*}\to 0\text{ for }m\to \infty.
\end{equation}

  Next we claim that 
\begin{equation}\label{eq:1pp}
    2\pi\i  \sum_{\lambda^\alpha<\lambda} J_\alpha
    \breve v^{+}_{\alpha,\lambda} [\breve\Gamma_\alpha^+(\lambda)\breve\psi_{a,n}]\simeq 1_{\vT_\p(H^a)}(H^a)\brR_a(\lambda+\i
 0) \breve\psi_{a,n}.
  \end{equation} 
\end{subequations} Note that we consider the left-hand side as an
element of $\vB^*$ although its strict meaning is the correponding
coset in  $\vB^*/\vB_0^*$. Thanks to  \eqref{eq:2pp},
also the right-hand side is well-defined in $\vB^*$ (or in $\vB^*/\vB_0^*$). 

To show  \eqref{eq:1pp} we first consider the simplest case where  there are only finitely many channels
  $\alpha$ involved in the summation defining $\phi_a$.  We  use that
  $\breve\Gamma_\alpha^+(\lambda)=
  \brg_\alpha^+(\lambda_\alpha)J^*_\alpha$, as expressed by the one-body
  restriction operator $ \brg_\alpha^+(\cdot )$  for $\brh_a$  at the   positive energy 
  $\lambda_\alpha=\lambda-\lambda^\alpha$ (cf.  Theorem \ref{thm:wave_matrices} and
  \eqref{eq:wave_op30}),  and  stationary completeness
   of positive energies 
  for    one-body Schr\"odinger
  operators, defined similarly and definitely valid for  $\brh_a$
   (cf.  \cite {Sk1}). This means concretely  that
  \begin{equation*}
    (\brh_a-\lambda_\alpha -\i 0)^{-1} \brf-
    2\pi\i \,\breve v^{+}_{\alpha,\lambda}
    [\brg_\alpha^+(\lambda_\alpha)\brf]\in \vB^*_0(\bX_a);\quad \brf=J^*_\alpha\breve\psi_{a,n}.
  \end{equation*} We then conclude \eqref{eq:1pp} by expanding the
  right-hand side into  a (finite) sum.

In the remaining  case where there are  infinitely many channels
  $\alpha$ in the summation  we pick an increasing
 sequence  of finite-rank projections $P^a_j\to 1_{\vT_\p(H^a)}(H^a)$ (strongly) corresponding
 to any numbering of the channel  eigenstates. By using a
 modification of  \eqref{eq:2pp} (and arguing similarly) we then
 obtain as in \eqref{eq:firDec}  that
 \begin{equation*}
   \norm{\parb{P^a_j-1_{\vT_\p(H^a)}(H^a)}\brR_a(\lambda+\i
 0) \breve\psi_{a,n}}_{\vB^*}\to 0\text{ for }j\to \infty.
\end{equation*} Thanks  to 
this property, a version of (\ref{eq:eqBnd})  and   the
arguments for  the finite summation  case also  the  infinite
summation  formula  follows.
We have proved \eqref{eq:1pp}.

Clearly the combination 
  of \eqref{eq:firDec} and \eqref{eq:1pp} yields the wanted property 
  $\phi_a \simeq 0$. Hence  the contribution to $\phi$ from
  $\Psi_a\phi$, $a\in \vA_2$,   conforms  with
 \eqref{eq:asres29}.

\subsection{Difficult free channel term  $\Psi_0\phi$}\label{subsec:Free channel
  contribution} We   show  that the contribution to $\phi$
from the term $\Psi_0\phi$ in
\eqref{eq:phasDEC} conforms  with
 \eqref{eq:asres29}.

Motivated by the form of $\Psi_0$ and the construction $\breve I_a$ of
Subsection \ref{subsec: -body effective potential and a $1$-body
  radial limit} we
introduce with $\delta>0$ given as in \eqref{eq:parameters} and for a
sufficiently small $c>0$  (given by a property of the operators
$B^a_\delta$, see \cite[Section 5 and (8.15)]{Sk1})
\begin{subequations}
\begin{equation}\label{eq:brevePotential223}
  \widecheck I_0(x)=\chi_+(|x|/R)I^{\rm lr}_{a_{\min}}(x)\prod_{b\in\vA_2} \,\,\chi_+(2\abs{
  x^b}/{cr^\delta}),\quad R\geq 1,
\end{equation} and correspondingly (in this subsection  considering
only $R=1$)
\begin{equation*}
  \widecheck H_0=-\Delta +\widecheck I_0,\mand \widecheck  R_0(z)=(\widecheck H_0-z)^{-1}\text{ for } z\in\C\setminus \R.
\end{equation*} Note that $\widecheck I_0$ is a one-body potential obeying the bounds
\begin{equation}\label{eq:weakerCond}
        \partial^\gamma \widecheck
        I_0(x)=\vO(\abs{x}^{-\delta(\mu+|\gamma|)});\quad \abs{\gamma}\leq2.
      \end{equation}   
\end{subequations}
These are weaker than \eqref{eq:brevePotentialOne}
      for $a=a_{\min}$,  in which case we  abbreviate $\breve I_0$=$\breve I_{a_{\min}}$, however  
      $\widecheck I_0$ is a   classical
 $C^2$ long-range 
 potential  in the terminology of \cite {IS4} thanks to  the conditions \eqref{eq:parameters} (see also the related
       \cite[(2.7.1) and Theorem 2.7.1] {DG}).

       Parallel to Subsection \ref{subsec:A phase-space partition of
         unity}  we introduce
\begin{align*}
\widecheck \Psi_0&=\sum_{a\in\vA_2}\parb{\widecheck S^a_1+\widecheck S^a_2},\\
\widecheck S^a_1&=f_2(\widecheck H_0)h(M)M_a {A_{1}\chi^2_+\parb{
  r^{\rho_1/2}B_\delta^ar^{\rho_1/2}}A_{1}}
  M_a f_2(H),\\
\widecheck S^a_2&=f_2(\widecheck H_0)h(M)M_a {A_{1}\chi_+\parb{
  r^{\rho_2-1}r_\delta^a}(A^a_{3})^2\chi_+\parb{
  r^{\rho_2-1}r_\delta^a}A_{1}}
  M_a f_2(H),
\end{align*} 
 and record  the analogous property
\begin{equation*}
  \parb{   \Psi_0- \widecheck \Psi_0}\phi\in \vB_0^*.
\end{equation*} This follows from the presence of the factors of $\chi_+\parb{
  r^{\rho_1/2}B_\delta^ar^{\rho_1/2}}$ and $\chi_+\parb{
  r^{\rho_2-1}r_\delta^a}$.  More explicitly we use     commutation
and the facts that 
\begin{equation*}
  \chi_+\parb{
  r^{\rho_1/2}B_\delta^ar^{\rho_1/2}}\parb{1-\chi_+(2\abs{
  x^a}/{cr^\delta})}=0,
\end{equation*} cf. \cite [(8.15)]{Sk1}, and that 
\begin{equation*}
  \chi_+\parb{
  r^{\rho_2-1}r_\delta^a} \parb{1-\chi_+(2\abs{
  x^a}/{cr^\delta})}\text{ is compactly supported}.
\end{equation*}

We mimic \eqref{eq:res1} writing 
\begin{align}\label{eq:res1233}
  \begin{split}
    \widecheck \Psi_0\phi&= \widecheck  R_0(\lambda+\i
 0)\widecheck\psi_0;\\&\quad\widecheck\psi_0={\widecheck\Psi_0\psi  -\i \widecheck T_{0}\phi},\quad
\widecheck T_{0}=\i \parb{\widecheck
  H_0\widecheck\Psi_0-\widecheck\Psi_0H}=
\vO(\inp{x}^{\rho_1-\delta}).
\end{split}
\end{align} 
 Parallel to    Subsection \ref{subsec:Near a collision
   plane} we may interprete \eqref{eq:res1233} as 
\begin{equation*}
  \widecheck\psi_0\in \vH \mand \widecheck  R_0(\lambda+\i
 0)\widecheck\psi_0=\lim_{\epsilon\to 0_+}\,\widecheck  R_0(\lambda+\i
 \epsilon)\widecheck\psi_0\text{ weakly in  }L^2_{-1}.
\end{equation*} The proof is  similar to the justification of
 \eqref{eq:res1}.  Note
 that the  bound of \cite[Theorem  1.8] {{AIIS}} is also valid
under \eqref{eq:weakerCond} (may be seen by  computing the second
commutator in the proof of \cite[Theorem  1.8] {{AIIS}} using
\eqref{eq:weakerCond} with  $\abs{\gamma}=2$, rather than
using 
the `undoing trick' of \cite{{AIIS}}). See   Appendix
\ref{sec:Resolvent estimates} for some additional  details.

 Moreover  the following    version
of \eqref{eq:com2} is valid.  There  exists a
sequence $L^2_\infty\ni \widecheck\psi_{0,n}\to \widecheck\psi_0\in \vH$ with
convergence
\begin{equation}\label{eq:com2b}
  \widecheck  R_0(\lambda+\i 0)\widecheck\psi_{0,n} \to \widecheck  R_0(\lambda+\i
 0)\widecheck\psi_0\text{ in }\vB^*\text{ for }n\to \infty.
\end{equation} 
  The proof is  similar to the one of \eqref{eq:com2}. Hence writing (as in 
\eqref{eq:form})
\begin{subequations}
\begin{equation}\label{eq:90}
  \widecheck\psi_{0}=
  \sum_{ Q_k } \,f_2 (\widecheck H_0){\widecheck Q_k}^*\widecheck B_k
  Q_k f_2 (H) R(\lambda+\i
  0)\psi+\psi',
\end{equation} 
we let (as in \eqref{eq:n2})
\begin{equation}\label{eq:91}
  \widecheck\psi_{0,n}=
  \sum_{ Q_k } \,f_2 (\widecheck H_0){\widecheck Q_k}^*\chi_n\widecheck B_k
  Q_k f_2 (H) R(\lambda+\i
  0)\psi+\chi_n\psi'.
\end{equation}   
\end{subequations}
  By  \eqref{eq:BB^*a}, (\ref{eq:weak}) and  
\eqref{eq:phiweak22}
\begin{equation*}
  \sup_{\epsilon>0}\,\norm [\big]{\widecheck R_0(\lambda+\i
  \epsilon) \parb{\widecheck \psi_{0}-\widecheck\psi_{0,n}}}_{\vB^*}\to 0\text{ for }n\to \infty,
\end{equation*} yielding (\ref{eq:com2b}).

Next we use \cite{IS4} to deduce the asymptotics 
\begin{equation}\label{eq:asres2oneOnb}
       \widecheck  R_0(\lambda+\i 0)\widecheck\psi_{0,n}-2\pi \i \, \breve v^{+}_{\alpha_{\min},\lambda} [g_n]\in \vB^*_0
 \text{ for some }g_n\in \vG_{a_{\min}}=L^2(\mathbf{S}_{a_{\min}}).
    \end{equation} 
Here the second term  is 
  labelled by the `free channel'
    defined uniquely for $a=a_{\min}$ and    denoted by
    $\alpha_{\min}$.
 Note that since the one-body potentials $\widecheck   I_0$ and $\breve
 I_0$ coincide at infinity on any closed conic region  not intersecting collision
 planes, the conditions of  \cite [Lemma 4.10 and Remark  4.11]{IS4} are
 met. This  means
 for the corresponding solutions to the eikonal equation that
 $\lim_{r\to \infty} (\widecheck  K_0-\breve K_0)(r\cdot,\lambda)$ exists locally
 uniformly in $\mathbf{S}_{a_{\min}}\setminus \cup_{a\in \vA_2}  \bX_a$. Here
 the
 solution $\breve K_0=\breve K_{a_{\min}}$
 is  given for the potential $\breve
 I_0$ as in \eqref{eq:quasiM} and \eqref{eq:asres2oneOnb}, while $\widecheck  K_0$ is a similar
 solution for the potential   $\widecheck I_0$. 
 In combination with {\cite [(1.10]{IS4}} applied to $\widecheck   I_0$, we then conclude that  indeed
 the asymptotics \eqref{eq:asres2oneOnb} is fulfilled. 

By combining  \eqref{eq:com2b}, 
  \eqref{eq:asres2oneOnb} and \eqref{eq:eqBnd00}
  we conclude that $g_n\to g\in \vG_{a_{\min}}$
(for some  $g\in \vG_{a_{\min}}$),  which in turn   yields  (by taking $n\to \infty$)
\begin{equation}\label{eq:asres2oneOn}
       \widecheck \Psi_0\phi-2\pi \i \, \breve v^{+}_{\alpha_{\min},\lambda} [g]\in \vB^*_0
 \text{ for some }g\in \vG_{a_{\min}}.
    \end{equation} Consequently     the contribution to $\phi$ from $\Psi_0\phi$  conforms  with
 \eqref{eq:asres29}, as wanted.
  
  \subsection{Completing the proof of Theorem \ref{them:stat-compl-enerMain}}\label{subsec:Conclusion and generalizations}
We have finished the proof of stationary completeness for positive
energies under the
additional condition \eqref{eq:sim2}. 
The general case can  be  treated along the same pattern, to be explained in
this subsection. 

Suppose first that  we drop the condition
\eqref{eq:sim2},  but again consider any  $\lambda\in \R_+\setminus {\vT_{\p}(H)}$. 
We note that  \eqref{eq:sim2}
   was used before for concluding \eqref{eq:ppH} and
   \eqref{eq:thre}. Hence for example we excluded  that  $\lambda\in
   \sigma_{\pupo}(\brH_a)$ (which could occur if $\brh_{a,1}$ has negative eigenvalues
and $H^a$ has eigenvalues above $\lambda$). However we can relax
\eqref{eq:sim2} and avoid this kind of problem  by considering the construction $\brI_{a,R}$ for large
$R$ rather than using only $\brI_{a,1}$ as before:
 Note that  $\inf \sigma(\brh_{a,R})\to 0$ for $R\to
\infty$. Since  $\lambda\notin \vT_\p(H)$, it follows that   $\lambda\notin \vT_{\p}(\brH_{a,R})$ for $R\geq
1$
large, and thus we can use the  Mourre estimate for the modification
$\brH_{a,R}$, rather than just for $\brH_{a,1}$ as done before. In
fact we can repeat the whole analysis from the previous subsections. 
We
conclude that the essential property is  $\lambda\notin \vT_\p(H)$;   the condition
\eqref{eq:sim2} is  not needed for positive energies. 

For negative energies $\lambda\notin \vT_\p(H)$ the same procedure applies. Again we can assume that
$\lambda\notin \vT_{\p}(\brH_{a,R})$ by taking $R$ large enough.  Moreover the contribution to $\phi$
from the terms $\Psi_{a_{\min}}\phi$  and $\Psi_0\phi$ in
\eqref{eq:phasDEC}   conform  with
 \eqref{eq:asres29},
  since in that case  in fact $\breve \Psi_{a_{\min}}\phi=0$  and $\widecheck
  \Psi_0\phi=0$ for     $R$  large in \eqref{eq:Ipott} and 
    \eqref{eq:brevePotential223}, respectively.

The modified  procedure  leads to Theorem \ref{them:stat-compl-enerMain},
as wanted. 
\begin{remark*} 
Of course  
\eqref{eq:sim2}
was used from the beginning of the section when introducing
$\phi=R(\lambda+\i 0)\psi$. If $\lambda\notin \vT(H)$  is an eigenvalue this
$\phi$ is not well-defined. However in this case the corresponding
eigenprojection, say denoted $P_\lambda$, 
maps to $L^2_\infty$ and the expression $(H-P_\lambda -\lambda -\i
0)^{-1}$ (along with its imaginary part) does have an interpretation,
cf. \cite {AHS}, and the scattering theories  for $H$ and
$H-P_\lambda$ coincide. Hence we could deal with $\lambda\in
\sigma_{\pp}(H)\setminus \vT(H)$ upon modifying the Parseval formula
  \eqref{eq:ScatEnergy222331} by using $\delta(H-P_\lambda -\lambda)$
  rather than $\delta(H -\lambda)$ in the formula, cf.  \cite[Remark
  9.3]{Sk1}. With this modified definition of stationary completeness
  at such $\lambda $ our procedure of proof applies (we leave out the
  details). On the other hand $\lambda\notin
\vT(H)$ is an essential condition.

\end{remark*}

\appendix
\section{Green function estimates}\label{sec:Resolvent estimates}
We elaborate on the missing details of our justification of
\eqref{eq:res1}, \eqref{eq:com2}, \eqref{eq:res1233} and
\eqref{eq:com2b}. As in Section \ref{sec:Stationary completeness for
  the 3}  we consider fixed narrowly supported functions $f_3\succ f_2\succ f_1$
such that $f_1=1$ in a neighbourhood of  $\lambda_0$. We recall that in
\eqref{eq:res1} the function  $\psi\in
L^2_\infty$. This  function  $\psi$ is fixed throughout the appendix.

The computation
of $ \brT_a$ in \eqref{eq:res1} yields an expansion into a  sum of terms on a similar form as the ones
considered in the proof of \cite[Lemma 7.3]{Sk1} for  which (in most
cases) \cite[Lemma 2.2]{Sk2} (also appearing in \cite
[Appendix B]{Sk1})
applies.  Hence these terms  are treated by the following list of
$Q$-bounds for which we refer the reader to \cite[Section 7.1]{Sk1}:
\begin{equation}\label{eq:kato10}
  \quad \sup _{\Re z=\lambda_0,\,\Im z >0}\;\norm[\big]{\abs{Q_k f_2
    (H)}{R(z)\psi}}_{\vH}\leq C \norm{\psi}_{\vB},\quad k=1, \dots, 10,
\end{equation} where for any $a\in\vA_1$ 
\begin{align*}
Q_1&=r^{-s},  \quad s\in (1/2,1),\\
Q_2&=r^{-1/2}\sqrt{(\chi^2_+)'}\parbb{ B/\epsilon_0},  \\
Q_3&=Q(a,j)=\xi_j( \hat x)\chi_+(\abs{x})G_{a_j};\q\,\,j\le J(a)\q
(\text {see }\eqref{eq:2boundobtain33},\\
Q_4&=Q_a(b,j)=\xi_j( \hat x)\chi_+(\abs{x})G_{b_j}M_a;\q
b\in\vA_1\setminus\set{a},\, j\le J(b)\q
(\text {cf. }\eqref{eq:2boundobtain33500},\\
Q_5 &=2\parb{\vH^a}^{1/2}\parb{p^a-\tfrac \delta 4\tfrac
  {x^a}r
  B}{f}_3(H) r^{(\rho_1-\delta)/2}T_1^aM_a,\\
&\quad \vH^a={\parb{\mathop{\mathrm{Hess}  }r^a}\parb{x^a/r^\delta}},\\
&\quad \quad T_1^a= \sqrt{(\chi^2_+)'}\parb{ r^{\rho_1/2}B_{\delta}^ar^{\rho_1/2}}{\chi_+}(
B/\epsilon_0),\\
Q_6&=\rho_1^{1/2} r^{-1/2}T^a_2M_a,\\
&\quad T_{2}^a={\zeta_1}\parb{ r^{\rho_1/2}B_{\delta}^ar^{\rho_1/2}}{\eta}(
      B),\\
&\quad \quad \zeta_1(b)=\sqrt{b(\chi^2_+)'(
  b)} , \quad \eta(b)=\sqrt{b}\chi_+(
  b/\epsilon_0),\\
Q_7&=2\parb{\vH^a}^{1/2}\parb{p^a-\tfrac \delta 4\tfrac
  {x^a}r
  B}{f}_3(H)r^{(\rho_1-\delta)/2}T_{1+}^aM_a,\\
&\quad T_{1+}^a= \sqrt{(\chi^2_+)'}\parb{ r^{\rho_1/2}B_{\delta}^ar^{\rho_1/2}}\chi_+\parb{r^{\rho_2-1}r_\delta^a
}{\chi_+}( B/\epsilon_0),\\
Q_8&=\rho_1^{1/2} r^{-1/2}T_{2+}^aM_a,\\
&\quad T_{2+}^a= {\zeta_1}\parb{ r^{\rho_1/2}B_{\delta}^ar^{\rho_1/2}}\chi_+\parb{r^{\rho_2-1}r_\delta^a
      }{\eta}( B),\\
Q_9&={r^{(\rho_2-\rho_1-1)/2}\sqrt{\chi_{1+}}\parb{r^{\rho_2-1}r_\delta^a }
  }T_3^aM_a,\\
&\quad \chi_{1+}(t)=(\chi^2_+)'(t),\\
&\quad \quad  T_3^a =\zeta_2\parb{ r^{\rho_1/2}B_{\delta}^ar^{\rho_1/2} 
}{f}_3(H)\chi_+(
   B/\epsilon_0), \quad\zeta_2(b)=\sqrt{-b \chi^2_+(-b)},\\
Q_{10}&=(1-\rho_2)^{1/2} {r^{-1/2}\sqrt{\chi_{2+}}\parb{r^{\rho_2-1}r_\delta^a }
  }T_4^aM_a\\
&\quad\chi_{2+}(t)=t\chi_{1+}(t),\quad 
T_4^a=\chi_-\parb{ r^{\rho_1/2}B_{\delta}^ar^{\rho_1/2}}\eta(
      B).
\end{align*}

\begin{subequations}
 We  need in addition the following  weak type estimates for
 $\brR_a(z) \breve\psi$ for all $ \breve\psi\in\vB$:
\begin{equation}\label{eq:phiweak}
  \sup _{\Re z= \lambda_0,\,\Im z <0}\;\norm{\abs{{\breve Q_k}f_2
    (\brH_a)}{\brR_a(z) \breve\psi}}_{\vH}\leq
  C\norm{\breve\psi}_{\vB},\quad k=1, \dots, 10,
\end{equation} where 
$\breve Q_k$ are defined as for $ Q_k$ with $f_3(H)$ replaced by
$f_3(\brH_a)$.  These estimates (except for $k=1$) follow by the same commutator scheme 
as   used for \eqref{eq:kato10}, i.e. by \cite[Lemma 2.2]{Sk2}. While
 operators of the form $Q_4$ are not needed for $H$,
 they are for  $\brH_a$, in which case the analogue of
 \eqref{eq:2boundobtain33} stated as  \eqref{eq:kato10} for  operators
 of the form $Q_3$, i.e. \eqref{eq:phiweak} with $k=3$,   is
 established   in (\ref{eq:2boundobtain3350}).

Note that \eqref{eq:kato10} and \eqref{eq:phiweak} have  the interpretation of a uniform bound  on 
 operator norms, in particular  on the operator norm of the adjoint operators
 ${R(\bar z)}f_2(H) Q_k^*$ and ${\brR_a(\bar z)}f_2(\brH_a)\breve
 Q_k^*$\, ($\in\vL(\vH, \vB^*)$).  

Using the notion of order of an operator (recalled in
 Subsection \ref{subsec:A phase-space partition of unity})   we can record that 
  \begin{equation}\label{eq:orderQ}
  Q_k f_2 (H), \,{\breve Q_k}f_2
    (\brH_a)=\vO(\inp{x}^{-s});\quad s=(\delta-\rho_1)/2.
\end{equation}

The
`difficult term' in \eqref{eq:res1} reads,   in terms of  taking  weak limits   in
 $L^2_{-1}$, 
\begin{equation}\label{eq:liB}
  \lim_{\epsilon\to 0_+}  -\i \brR_a(\lambda+\i
 \epsilon )\brT_aR(\lambda+\i \epsilon)\psi  = \lim_{\epsilon\to 0_+}  -\i \brR_a(\lambda+\i
 \epsilon)\brT_aR(\lambda+\i 0)\psi\in \vB^*.
\end{equation} 
The proof of \eqref{eq:liB} and the related property \eqref{eq:com2}
given in   Subsection \ref{subsec:Near a collision
   plane} relies 
on an expansion of $\brT_a$ along the lines of the proof of
\cite[Lemma 7.3]{Sk1}. In the present case this amounts to   expanding
$\brT_a$ into  a  finite sum of terms, each term on one  of 
the respective  ten forms
\begin{equation}\label{eq:Qsquared}
  f_2 (\brH_a){\breve Q_k}^*\breve B_k Q_k  f_2 (H)\text { with 
      }
  \breve B_k\text { bounded}.
\end{equation} 
\end{subequations}

\begin{subequations}
The justification of \eqref{eq:res1233} and the related property
\eqref{eq:com2b} relies on  a similar scheme, in fact we can use
\eqref{eq:kato10} again and replace  \eqref{eq:phiweak} by a similar
(one-body) estimate for  $\widecheck H_0$. Using 
\eqref{eq:parameters} and \eqref{eq:weakerCond} we derive the following substitute  for all $ \widecheck\psi\in\vB$:
\begin{equation}\label{eq:phiweak22}
  \sup _{\Re z=\lambda_0,\,\Im z <0}\;\norm{\abs{{\widecheck Q_k }f_2 (\widecheck
    H_0)}{\widecheck R_0 (z) \widecheck\psi}}_{\vH}\leq C\norm{\widecheck\psi}_{\vB},\quad k=1, \dots, 10,
\end{equation}  where 
$\widecheck  Q_k $ are defined as for $ Q_k $, but with $f_3(H)$ replaced by
$f_3(\widecheck H_0)$. 
We record  that 
\begin{equation}
  \label{eq:orderQ2}
  \widecheck  Q_k f_2 (\widecheck
    H_0)=\vO(\inp{x}^{-s});\quad  s=(\delta-\rho_1)/2.
\end{equation}The
`difficult term' in \eqref{eq:res1233}  reads,   taking
  weak limits in 
 $L^2_{-1}$,  
\begin{equation}\label{eq:tildBB}
  \lim_{\epsilon\to 0_+}  -\i \widecheck R_0(\lambda+\i
 \epsilon )\widecheck T_0 R(\lambda+\i \epsilon)\psi  =\lim_{\epsilon\to 0_+} -\i \widecheck R_0(\lambda+\i
 \epsilon)\widecheck T_0R(\lambda+\i 0)\psi\in \vB^*.
\end{equation}  As above \eqref{eq:tildBB} and \eqref{eq:com2b} require 
expansion of  $\widecheck T_0$ into a sum of
 terms, all being on one of the  forms
\begin{equation}\label{eq:sidste}
  f_2 (\widecheck H_0){\widecheck  Q_k }^* \widecheck B_k\breve  Q_k  f_2 (H)\text { with 
  }
  \widecheck B_k \text { bounded}.
\end{equation}  Again we can mimic the proof of
\cite[Lemma 7.3]{Sk1}. In fact a  term of that proof contains   a
certain  factor
$g(H)-g(\brH_a)$, while  the analogous term in the present case contains a factor 
$g(H)-g(\widecheck H_0)$. The term is  on the form $f_2 (\widecheck H_0){\widecheck  Q_1 }^* \widecheck B_1\breve
Q_1  f_2 (H)$ thanks to the presence of the factors of $\chi_+\parb{
  r^{\rho_1/2}B_\delta^ar^{\rho_1/2}}$ and $\chi_+\parb{
  r^{\rho_2-1}r_\delta^a}$ in the construction of $\widecheck\Psi_0$
and commutation.
\end{subequations}
\section{Non-threshold analysis}\label{sec:Non-threshold analysis}
The recall for the readers convenience a stationary  version of the so-called  minimal velocity bound
obtained directly from velocity bounds in \cite{Is3} and then by a  more simple-minded
stationary argument in \cite{Sk2}. The  notation used below conforms with our
application in (\ref{eq:long}). In particular we assume (\ref{eq:sim2})
and that $f_2$ is a  standard support function obeying  that $f_2=1$
in a neighbourhood of a fixed 
$\lambda>0$.

By a covering
argument it suffices for the third step in (\ref{eq:long}) to show that for any $\psi\in L^2_\infty$ and any
$E>0$ there exists a neighbourhood $U=U_E$ of $E$ such that 
\begin{equation}
  \label{eq:bB2}
  \forall \text{ real }f\in C^\infty_\c(U):\quad \chi_-\parb{
  r^{\rho_2-1}r_\delta^a/2}
f(H^a)\brR_a(\lambda+\i
  0)\psi \in \vB_0^*.
\end{equation} 

The neighbourhood 
 $U$ is determined by  the Mourre estimate for the function $\tfrac 12 (r^a)^2$ (the
one used to define the factor $A_2^a$ in (\ref{eq:propgaObs})):
\begin{equation*}
  1_{U}(H^a)\i\Big [H^a, A^a\Big ]1_{U}(H^a)\geq 2E 1_{U}(H^a);\quad A^a:= \sqrt{r^a} B^a\sqrt{r^a},\,B^a:=2\Re \parb{p^a \cdot \nabla^a r^a}. 
\end{equation*} 

It remains to check (\ref{eq:bB2}). First, we recall (from the factor $A_1$ in
(\ref{eq:propgaObs})) that
\begin{equation*}
  B=\i\big [\brH_a,r\big ]=2\Re \parb{p \cdot \nabla r}.
\end{equation*} The operators $B^a$ and $B$ are  bounded relatively to
$\brH_a$. Now we introduce for any small $\epsilon>0$ the  `propagation observable'
\begin{equation*}
  \Psi=f_2(\brH_a)f(H^a)\zeta_\epsilon \parb{r^{-1/2} A^ar^{-1/2}}f(H^a)f_2(\brH_a),
\end{equation*} where $f\in C^\infty_\c(U)$ is real and $\zeta_\epsilon \in
C^\infty(\R)$ is real and increasing   with  $\zeta'_\epsilon=1$  on
$(-\epsilon,\epsilon)$ and $\sqrt{\zeta'_\epsilon}
\in C_\c^\infty((-2\epsilon,2\epsilon))$.
Since the argument is small on the support of the derivative $\zeta'_\epsilon $
(more precisely bounded by $2\epsilon$), 
we obtain the lower bound
\begin{equation*}\i[\brH_a,\Psi]\geq \tfrac E r f_2(\brH_a)f(H^a)\zeta'_\epsilon \parb{r^{-1/2}A^ar^{-1/2}}f(H^a)f_2(\brH_a)+\vO(\inp{x}^{-2}).
\end {equation*} The error $\vO(\inp{x}^{-2})$ arises from commutation. 

Letting $T_\epsilon=\sqrt{\zeta'_\epsilon} \parb{r^{-1/2}A^ar^{-1/2}}$,
we conclude that for any small  $\epsilon>0$
\begin{equation}
  \label{eq:bB3}
  \forall \text{ real }f\in C^\infty_\c(U):\quad 
T_\epsilon f(H^a)\brR_a(\lambda+\i
  0)\psi \in \vB_0^*.
\end{equation} Now  (\ref{eq:bB2}) follows from (\ref{eq:bB3}) by yet
another commutation argument using the relative boundedness of $B^a$
and the fact that on the support of $\chi_-\parb{
  r^{\rho_2-1}r_\delta^a/2}$
\begin{equation*}
  r^{-1/2}\sqrt{r^a} =\sqrt{r^a}r^{-1/2}=\vO(\inp{x}^{-\rho_2/2}),
\end{equation*} allowing us   after 
inserting  $I=T_\epsilon+(I-T_\epsilon)$ to the left of the factor
$f(H^a)$ in (\ref{eq:bB2}) to conclude  that the second
term $I-T_\epsilon$ contributes by a term
in $\vB_0^*$. Obviously the first term $T_\epsilon$  contributes by a term
in $\vB_0^*$ thanks to  (\ref{eq:bB3}). We have proved
(\ref{eq:bB2}). \qed


\begin{thebibliography}{DoGa}

\bibitem[AHS]{AHS} A. Agmon, I. Herbst, E. Skibsted,  \emph{Perturbation of embedded eigenvalues in the generalized
    $N$-body problem},
 Comm. Math. Phys. \textbf{122},  (1989), 411--438.


  \bibitem[AIIS]{AIIS}
T. Adachi, K. Itakura, K. Ito, E. Skibsted, 
\emph{New methods in spectral theory of $N$-body Schr{\"o}dinger
  operators},  Rev. Math. Phys. \textbf{33}  (2021), 48 pp. 


\bibitem[De]{De}
J. Derezi\'nski, \emph{Asymptotic completeness for $N$-particle long-range quantum
    systems}, Ann. of Math.  \textbf{38}   (1993),
  427--476.


\bibitem[DG]{DG}
J. Derezi{\'n}ski, C. G{\'e}rard, \emph{Scattering theory of
 classical and quantum {$N$}-particle systems}, Texts and Monographs in
  Physics,  Berlin, Springer 1997.


  
  
\bibitem[Do]{Do} J. Dollard, \emph{Asymptotic convergence and Coulomb interaction}, J. Math. Phys. {\textbf 5} (1964), 729--738.

 \bibitem[En]{En} V. Enss, \emph{Long-range scattering of two- and
     three-body quantum systems}, Journ\'ees ``Equations aux d\'eriv\'ees
   partielles'', Saint Jean de Monts, Juin 1989, Publications Ecole
   Polytechnique, Palaiseau 1989, pp. 31. 

 \bibitem[FH]{FH} R. Froese, I. Herbst, \emph{Exponential bounds
and absence of positive eigenvalues for $N$-body Schr{\"o}dinger
operators}, Comm. Math. Phys. {\bf 87} no. 3 (1982/83), 429--447.

\bibitem[GM]{GM} J. Ginibre, M. Moulin, \emph{Hilbert space
    approach to the quantum mechanical three-body problem},
  Ann. Inst. Henri Poincar\'e \textbf{XXI}  (1974), 97--145.

\bibitem[H{\"o}]{H1} L. H{\"o}rmander, \emph{The analysis of linear
 partial differential operators. {I--IV}}, Berlin, Springer
 1983--85.

\bibitem[II]{II}
T. Ikebe, H. Isozaki, \emph{A stationary approach to the existence and
  completeness of long-range operators}, Integral equations and
operator theory \textbf{5} 
   (1982), 18--49.


\bibitem[Is1]{Is}
 H. Isozaki, \emph{Eikonal equations and spectral representations for
    long range  Schr{\"o}dinger Hamiltonians},  J. Math. Kyoto Univ. \textbf{20} 
   (1980), 243--261.

   
\bibitem[Is2]{Is1}
 H. Isozaki, \emph{Structures of the S-matrices for  three-body
   Schr{\"o}dinger operators},   Commun. Math. Phys. \textbf{146}
 (1992), 241--258.

\bibitem[Is3]{Is2}
 H. Isozaki, \emph{Asymptotic properties of generalized eigenfunctions
   for  three-body
   Schr{\"o}dinger operators},   Commun. Math. Phys. \textbf{153}
 (1993), 1--21.

\bibitem[Is4]{Is3}
 H. Isozaki, \emph{Asymptotic properties of solutions to $3$-particle
   Schr{\"o}dinger equations},   Commun. Math. Phys. \textbf{222}  (2001), 371--413.

\bibitem[Is5]{Is5}
 H. Isozaki, \emph{Many-body Schr{\"o}dinger equation. Scattering Theory and Eigenfunction
Expansions}, Mathematical Physics Studies, Singapore, Springer  2023.






\bibitem[IS]{IS4} 
K. Ito, E. Skibsted, \emph{Scattering theory for  $C^2$ long-range
  potentials}, to appear as an arxiv.org-preprint in August 2024
(extending http://arxiv.org/abs/2205.14049v1). 

\bibitem[Ka]{Ka} T. Kato, \emph{Smooth operators and commutators},
  Studia Math. \textbf{31} (1968), 535--546.

 

\bibitem[Me]{Me} S.P. Mercuriev, \emph{On the three-body Coulomb
    scattering problem},
  Ann. Phys.  \textbf{130} (1980), 395--426.

\bibitem[Ne]{Ne} R.G.  Newton, \emph{The asymptotic form of the
    three-particle wave functions and the cross sections},
  Ann. Phys.  \textbf{74} (1972), 324--351.


\bibitem[RS]{RS}
M.~Reed, 
B.~Simon, \emph{Methods of modern mathematical physics {I}--{I\hspace{-.1em}V}},
  New York, Academic Press 1972-78.

\bibitem[Sk1]{Sk1}
 E. Skibsted, \emph{Stationary scattering theory, the $N$-body
   long-range case},  Commun. Math. Phys. (2023), 1--75.
 

\bibitem[Sk2]{Sk2}
 E. Skibsted, \emph{Stationary completeness: the $N$-body
   short-range case}, preprint 8 April 2024,
 http://arxiv.org/abs/2306.07080v2, submitted. 


\bibitem[SW]{SW}
 E. Skibsted, X.P. Wang, \emph{Spectral analysis of $N$-body
   Schr{\"o}dinger operators at two-body thresholds}, 
 Mathematical Physics Studies,  Springer  2024,  https://doi.org/10.1007/978-981-97-2624-0.


\bibitem[Ya1]{Ya1} D.R. Yafaev, \emph{Resolvent estimates and scattering
    matrix for $N$-body Hamiltonians},
  Intgr. Equat. Oper. Th. \textbf{21} (1995), 93--126.

  

\bibitem[Ya2]{Ya3} D.R. Yafaev, \emph{Radiation conditions and
    scattering theory for $N$-particle Hamiltonians},
  Commun. Math. Phys. \textbf{154}  (1993), 523--554.

\bibitem[Ya3]{Ya4} D.R. Yafaev, \emph{Eigenfunctions of the continuous
   spectrum for $N$-particle Schr{\"o}dinger
  operator}, Spectral and scattering theory, Lecture notes in pure and
applied mathematics, Marcel Dekker, New York  1994, 259--286.
 



\end{thebibliography}
\end{document}